\documentclass[11pt,a4paper]{article}

% Own packages
\usepackage{amssymb}
\usepackage{amsmath}
\usepackage{amsthm}
\usepackage{epsfig}
\usepackage{euler}
\usepackage[hang,stable]{footmisc}

%Theorems
\theoremstyle{plain}
\newtheorem{theorem}{Theorem}
\newtheorem{corollary}[theorem]{Corollary}
\newtheorem{proposition}[theorem]{Proposition}
\newtheorem{lemma}[theorem]{Lemma}
\theoremstyle{definition}
\newtheorem{definition}{Definition}

%Measures
\setlength{\footnotemargin}{2ex}
\interfootnotelinepenalty=10000

% COMMANDS
\newcommand{\reals}{\mathbb{R}}    
\newcommand{\complex}{\mathbb{C}}  
\newcommand{\rationals}{\mathbb{Q}}  
\newcommand{\field}{\mathbb{F}} 
\newcommand{\Mink}[1]{\mathbb{M}^{#1}}
\newcommand{\Perm}{\mathrm{Perm}} 
\newcommand{\End}{\mbox{End}} 
\newcommand{\Frames}[1]{\mathcal{F}_{#1}}     
\newcommand{\Stab}{\mathrm{Stab}}  
\newcommand{\Span}{\mbox{span}} 
\newcommand{\Proj}{\Pi} 
\newcommand{\Identity}{\mathrm{id}}
        
\newcommand{\Aff}{\mathrm{Aff}}
\newcommand{\Orb}{\mathrm{Orb}}
\newcommand{\group}{\mathsf}

\newcommand{\Lor}{\mathsf{Lor}} 
\newcommand{\ILor}{\mathsf{ILor}}

\newcommand{\mat}{\mathbf}

\makeindex             

\begin{document}
\ifx\href\undefined\else\hypersetup{linktocpage=true}\fi
\tolerance 10000

\title{The Rich Structure of Minkowski Space}
\author{Domenico Giulini                      \\
Max-Planck-Institute for Gravitational Physics\\
(Albert-Einstein-Institute)                   \\
Am M\"uhlenberg 1                             \\ 
D-14476 Golm/Potsdam, Germany                 \\
\texttt{domenico.giulini@aei.mpg.de}          }

\date{}
\maketitle

\begin{abstract}
\noindent
Minkowski Space is the simplest four-dimensional Lorentzian 
Manifold, being topologically trivial and globally flat, and 
hence the simplest model of spacetime---from a General-Relativistic 
point of view. But this does not mean that it is altogether 
structurally trivial. In fact, it has a very rich structure, 
parts of which will be spelled out in detail in this 
contribution, which is written for \emph{Minkowski Spacetime: A Hundred Years Later}, 
edited by Vesselin Petkov, to appear in 2008 in the Springer Series 
on Fundamental Theories of Physics, Springer Verlag, Berlin. 
\end{abstract}

\begin{small}
\setcounter{tocdepth}{3}
\tableofcontents
\end{small}
\newpage

\section{General Introduction}
\label{sec:GenIntro}
There are many routes to Minkowski space. But the most physical one 
still seems to me via the law of inertia. And even along these lines 
alternative approaches exist. Many papers were published in physics and 
mathematics journals over the last 100 years in which incremental 
progress was reported as regards the minimal set of hypotheses 
from which the structure of Minkowski space could be deduced. 
One could imagine a Hesse-diagram-like picture in which all these 
contributions (being the nodes) together with their logical
dependencies (being the directed links) were depicted. It would 
look surprisingly complex. 

From a General-Relativistic point of view, Minkowski space just 
models an \emph{empty} spacetime, that is, a spacetime devoid of any 
material content. It is worth keeping in mind, that this was not 
Minkowski's view. Close to the beginning of \emph{Raum und Zeit} he 
stated:\footnote{German original:
``Um nirgends eine g\"ahnende Leere zu lassen, wollen  wir uns 
vorstellen, da{\ss} allerorten und zu jeder Zeit etwas Wahrnehmbares 
vorhanden ist''. (\cite{Minkowski:1909}, p.\,2)}
\begin{quote}
\emph{
In order to not leave a yawning void, we wish to imagine that at 
every place and at every time something perceivable exists.}
\end{quote}
This already touches upon a critical point. Our modern theoretical 
view of spacetime is much inspired by the typical hierarchical 
thinking of mathematics of the late 19th and first half of the 
20th century, in which the \emph{set} comes first, and then we 
add various structures on it. We first think of spacetime as a 
set and then structure it according to various physical inputs. 
But what are the elements of this set? Recall how Georg Cantor, 
in  his first article on transfinite set-theory, defined a set:%
\footnote{German original: ``Unter einer `Menge' 
verstehen wir jede  Zusammenfassung $M$ von bestimmten 
wohlunterschiedenen Objecten $m$  unserer Anschauung oder unseres 
Denkens (welche die `Elemente' von $M$ genannt werden) zu einem 
Ganzen.'' (\cite{CantorMengenlehre1:1895}, p.\,481)} 
\begin{quote}
\emph{
By a `set' we understand any gathering-together $M$ of determined well-distinguished 
objects $m$ of our intuition or of our thinking (which are called the `elements' of $M$)  
into a whole.}
\end{quote}
Do we think of spacetime points as ``determined well-distinguished 
objects of our intuition or of our thinking''? I think Minkowski
felt a need to do so, as his statement quoted above indicates, 
and also saw the problematic side of it: If we mentally individuate 
the points (elements) of spacetime, we---as physicists---have no 
other means to do so than to fill up spacetime with actual matter, 
hoping that this could be done in such a diluted fashion that 
this matter will not dynamically affect the processes that we 
are going to describe. In other words: The whole concept of 
a rigid background spacetime is, from its very beginning, based 
on an assumption of---at best---approximate validity. It is 
important to realise that this does not necessarily refer to 
General Relativity: Even if the need to incorporate gravity by 
a variable and matter-dependent spacetime geometry did not exist
would the concept of a rigid background spacetime be of approximate
nature, \emph{provided we think of spacetime points as individuated by actual 
physical events}. 

It is true that modern set theory regards Cantor's original 
definition as too na\"{\i}ve, and that for good reasons. 
It allows too many ``gatherings-together'' with self-contradictory 
properties, as exemplified by the infamous \emph{antinomies} of 
classical set theory. 
Also, modern set theory deliberately stands back from any 
characterisation of elements in order to not confuse the axioms 
themselves with their possible \emph{interpretations}.\footnote{
This urge for a clean distinction between the axioms and their possible 
interpretations is contained in the famous and amusing dictum, attributed 
to David Hilbert by his student Otto Blumenthal: ``One must always be 
able to say 'tables', `chairs', and `beer mugs' instead of 'points, 
`lines', and `planes''. (German original: ``Man mu\ss\ jederzeit an Stelle 
von 'Punkten', `Geraden' und `Ebenen' 'Tische', `St\"uhle' und `Bierseidel' 
sagen k\"onnen.'')} However, applications to physics require 
\emph{interpreted} axioms, where it remains true that elements of sets 
are thought of as definite as in Cantors original definition.

Modern textbooks on Special Relativity have little to say about 
this, though an increasing unease seems to raise its voice from 
certain directions in the philosophy-of-science community; 
see, e.g., \cite{Brown.Pooley:2006}\cite{Brown:PhysicalRelativity}. 
Physicists sometimes tend to address points of spacetime as 
\emph{potential events}, but that always seemed to me like  
poetry\footnote{``And as imagination bodies forth The forms of things unknown, 
the poet's pen Turns them to shapes, and gives to airy nothing 
A local habitation and a name.'' (A Midsummer Night's Dream, 
Theseus at V,i)}, begging the question how a mere potentiality 
is actually used for individuation. To me the right attitude seems 
to admit that the operational justification of the notion of spacetime 
events is only approximately possible, but nevertheless allow it 
as primitive element of theorising. The only thing to keep in 
mind is to not take mathematical rigour for ultimate physical 
validity. The purpose of mathematical rigour is rather to establish 
the tightest possible bonds between basic assumptions (axioms) 
and decidable consequences. Only then can we---in principle---learn 
anything through falsification. 

The last remark opens another general issue, which is implicit  
in much of theoretical research, namely how to balance between 
attempted rigour in drawing consequences and attempted closeness 
to reality when formulating once starting platform (at the 
expense of rigour when drawing consequences). As the mathematical 
physicists Glance \& Wightman once formulated it in a different 
context (that of superselection rules in Quantum Mechanics): 
\begin{quote}
\emph{
The theoretical results currently available fall into 
two categories: rigorous results on approximate models
and approximate results in realistic models.} 
(\cite{Wightman.Glance:1989}, p.\,204)
\end{quote}
To me this seems to be the generic situation in theoretical physics.
In that respect, Minkowski space is certainly an approximate model, 
but to a very good approximation indeed: as global model of spacetime 
if gravity plays no dynamical r\^ole, and as local model of spacetime 
in far more general situations. This justifies looking at some of 
its rich mathematical structures in detail. Some mathematical 
background material is provided in the Appendices. 

\section{Minkowski space and its partial automorphisms}
\label{sec:MinkowskiAutomorphisms}
\subsection{Outline of general strategy}
\label{sec:OutlineStrategy}
Consider first the general situation where one is given a set $S$. 
Without any further structure being specified, the \emph{automorphisms} 
group of $S$ would be the group of bijections of $S$, i.e. maps 
$f:S\rightarrow S$ which are injective (into) and surjective (onto). 
It is called $\Perm(S)$, where `Perm' stands for `permutations'. 
Now endow $S$ with some structure $\Delta$; for example, it could 
be an equivalence relation on $S$, that is, a partition of $S$ into 
an exhaustive set of mutually disjoint subsets 
(cf. Sect.\,\ref{sec:GroupActions}). The automorphism group of 
$(S,\Delta)$ is then the subgroup of 
$\Perm(S\mid\Delta)\subseteq\Perm(S)$ that preserves $\Delta$. Note 
that $\Perm(S\mid\Delta)$ contains only those maps $f$ preserving $\Delta$ 
whose inverse, $f^{-1}$, also preserve  $\Delta$. Now consider another 
structure, $\Delta'$, and form $\Perm(S\mid\Delta')$. One way in which 
the two structures $\Delta$ and $\Delta'$ may be compared is to 
compare their automorphism groups $\Perm(S\mid\Delta)$ and 
$\Perm(S\mid\Delta')$. Comparing the latter means, in particular, to 
see whether one is contained in the other. Containedness clearly 
defines a partial order relation on the set of subgroups of 
$\Perm(S)$, which we can use to define a partial order on the set of 
structures. One structure, $\Delta$, is said to be strictly 
stronger than (or equally strong as) another structure, $\Delta'$, 
in symbols $\Delta\geq\Delta'$, iff\footnote{Throughout we 
use `iff' as abbreviation for `if and only if'.}  the automorphism 
group of the former is properly contained in (or is equal to) the 
automorphism group of the latter.\footnote{Strictly speaking, 
it would be more appropriate to speak of conjugacy classes of 
subgroups in $\Perm(S)$ here.} In symbols:  
$\Delta\geq\Delta'\Leftrightarrow\Perm(S\mid\Delta)\subseteq\Perm(S\mid\Delta')$.
Note that in this way of speaking a substructure (i.e. one being 
defined by a subset of conditions, relations, objects, etc.) of 
a given structure is said to be weaker than the latter.
This way of thinking of structures in terms of their automorphism
group is adopted from Felix Klein's \emph{Erlanger Programm} 
\cite{Klein:ErlangerProgramm} in which this strategy is used in an 
attempt to classify and compare geometries.  

This general procedure can be applied to Minkowski space, endowed with 
its usual structure (see below). We can than ask whether the automorphism 
group of Minkowski space, which we know is the inhomogeneous Lorentz group
$\ILor$, also called the Poincar\'e group, is already the automorphism 
group of a proper substructure. If this were the case we would say that 
the original structure is redundant. It would then be of interest to try 
and find a minimal set of structures that already imply the Poincar\'e
group. This can be done by trial and error: one starts with 
some more or less obvious substructure, determine its automorphism group, 
and compare it to the Poincar\'e group. Generically it will turn out 
larger, i.e. to properly contain $\ILor$. The obvious questions to ask 
then are:  how much larger? and: what would be a minimal extra 
condition that eliminates the difference? 

\subsection{Definition of Minkowski space and Poincar\'e group}
\label{sec:DefMinkSpacePoinGroup}
These questions have been asked in connection with various 
substructures of Minkowski space, whose definition is as 
follows:
\begin{definition}
\label{def:MinkowskiSpace}
\textbf{Minkowski space} of $n\geq 2$ dimensions, denoted by 
$\Mink{n}$, is a real $n$-dimensional affine space,
whose associated real $n$-dimensional vector space $V$ 
is endowed with a non-degenerate symmetric bilinear form 
$g:V\times V\rightarrow\reals$ of signature $(1,n-1)$
(i.e. there exists a basis $\{e_0,e_1,\cdots,e_{n-1}\}$ of $V$
such that $g(e_a,e_b)=\mathrm{diag}(1,-1,\cdots,-1)$). 
$\Mink{n}$ is also endowed with the standard differentiable
structure of $\reals^{n}$. 
\end{definition}
We refer to Appendix\,\ref{sec:AffineSpaces} for the definition 
of affine spaces. Note also that the last statement concerning 
differentiable structures is put in in view of the strange 
fact that just for the physically most interesting case, $n=4$, 
there exist many inequivalent differentiable structures of 
$\reals^{4}$. Finally we stress that, at this point, we did not 
endow Minkowski space with an orientation or time orientation. 
\begin{definition}
\label{def:PoincareGroup}
The \textbf{Poincar\'e group} in $n\geq 2$ dimensions, 
which is the same as the \textbf{inhomogeneous Lorentz group} 
in $n\geq 2$ dimensions and therefore will be denoted by $\ILor^n$, 
is that subgroup of the general affine group of real $n$-dimensional 
affine space, for which the uniquely associated linear maps 
$f:V\rightarrow V$ are elements of the Lorentz group $\Lor^n$, 
that is, preserve $g$ in the sense that $g\bigl(f(v),f(w)\bigr)=g(v,w)$ 
for all $v,w\in V$.
\end{definition}
See Appendix\,\ref{sec:AffineMaps} for the definition of 
affine maps and the general affine group. Again we stress 
that since we did not endow Minkowski space with any 
orientation, the Poincar\'e group as defined here would not 
respect any such structure. 

As explained in \ref{sec:AffineFrames}, any choice of an 
affine frame  allows us to identify the general affine 
group in $n$ dimensions with the semi-direct product 
$\reals^{n}\rtimes\group{GL}(n)$. That identification 
clearly depends on the choice of the frame. If we 
restrict the bases to those where 
$g(e_a,e_b)=\mathrm{diag}(1,-1,\cdots,-1)$, then  
$\ILor^n$ can be identified with 
$\reals^{n}\rtimes\group{O}(1,n-1)$.     

We can further endow Minkowski space with an \emph{orientation} and, 
independently, a \emph{time orientation}. An orientation of an affine 
space is equivalent to an orientation of its associated vector space 
$V$. A time orientation is also defined trough a time orientation 
of $V$, which is explained below. The subgroup of the Poincar\'e 
group preserving the overall orientation is denoted by $\ILor_+^n$
(proper Poincar\'e group), the one preserving time orientation by 
$\ILor_{\uparrow}^n$ (orthochronous Poincar\'e group), and 
$\ILor_{+\uparrow}^n$ denotes the subgroup preserving both 
(proper orthochronous Poincar\'e group).

Upon the choice of a basis 
we may identify $\ILor_+^n$ with $\reals^{n}\rtimes\group{SO}(1,n-1)$
and $\ILor_{+\uparrow}^n$ with $\reals^{n}\rtimes\group{SO}_0(1,n-1)$,
where $\group{SO}_0(1,n-1)$ is the component of the identity of 
$\group{SO}(1,n-1)$.

Let us add a few more comments about the elementary geometry of 
Minkowski space. We introduce the following notations:
\begin{equation}
\label{eq:def:MinkProd}
v\cdot w\,:=\,g(v,w)\,\qquad\text{and}\qquad
\Vert v\Vert_g:=\sqrt{\vert g(v,v)\vert}\,.
\end{equation} 
We shall also simply write $v^2$ for $v\cdot v$. A vector $v\in V$ 
is called \emph{timelike, lightlike}, or \emph{spacelike} according 
to $v^2$ being $>0$, $=0$, or $<0$ respectively. Non-spacelike vectors 
are also called \emph{causal} and their set, $\mathcal{\bar C}\subset V$, 
is called the \emph{causal-doublecone}. Its interior, $\mathcal{C}$, is 
called the \emph{chronological-doublecone} and its boundary, $\mathcal{L}$,
the \emph{light-doublecone}:    
\begin{subequations}
\label{eq:VecDC}
\begin{alignat}{2}
\label{eq:def:VecCausalDC}
& \mathcal{\bar C}&&:\,=\,\{v\in V\mid v^2\geq 0\}\,,\\
\label{eq:def:VecChronDC}
& \mathcal{C}&&:\,=\,\{v\in V\mid v^2> 0\}\,,\\
\label{eq:def:VecLightDC}
& \mathcal{L}&&:\,=\,\{v\in V\mid v^2=0\}\,.
\end{alignat}
\end{subequations}

A linear subspace $V'\subset V$ is called timelike, lightlike, 
or spacelike according to $g\big\vert_{V'}$ being 
indefinite, negative semi-definite but not negative definite, 
or negative definite respectively. 
Instead of the usual Cauchy-Schwarz-inequality we have 
\begin{subequations}
\label{eq:CauchySchwarz}
\begin{alignat}{3}
\label{eq:CauchySchwarz1}
& v^2w^2&&\,\leq\, (v\cdot w)^2\quad &&\text{for $\Span\{v,w\}$ timelike}\,,\\
\label{eq:CauchySchwarz2}
& v^2w^2&&\,=\,(v\cdot w)^2\quad &&\text{for $\Span\{v,w\}$ lightlike}\,,\\ 
\label{eq:CauchySchwarz3}
& v^2w^2&&\,\geq\, (v\cdot w)^2\quad &&\text{for $\Span\{v,w\}$ spacelike}\,.
\end{alignat}
\end{subequations}

Given a set $W\subset V$ (not necessarily a subspace\footnote{By a 
`subspace' of a vector space we always understand a sub vector-space.}), 
its $g$-orthogonal complement is the subspace
\begin{equation}
\label{eq:OrthCompl}
W^\perp:=\{v\in V\mid v\cdot w=0,\,\forall w\in W\}\,.
\end{equation}
If $v\in V$ is lightlike then $v\in v^\perp$. In fact, $v^\perp$ is 
the unique lightlike hyperplane (cf. Sect.\,\ref{sec:AffineSpaces})
containing $v$. In this case the 
hyperplane $v^\perp$ is called degenerate because the restriction 
of $g$ to $v^\perp$ is degenerate. On the other hand, if $v$ is 
timelike/spacelike $v^\perp$ is spacelike/timelike and 
$v\not\in v^\perp$. Now the hyperplane $v^\perp$ is called 
non-degenerate because the restriction of $g$ to $v^\perp$ is 
non-degenerate.   

Given any subset $W\subset V$, we can attach it to a point $p$ 
in $\mathbb{M}^n$:
\begin{equation}
\label{eq:AffineSubset}
W_p:=p+W:=\{p+w\mid w\in W\}\,.
\end{equation}
In particular, the causal-, chronological-, and light-doublecones at 
$p\in\mathbb{M}^n$ are given by: 
\begin{subequations}
\label{eq:AffDC}
\begin{alignat}{2}
\label{eq:def:AffCausalDC}
& \mathcal{\bar C}_p&&:\,=\,p+\mathcal{\bar C}\,,\\
\label{eq:def:AffPointChronDC}
& \mathcal{C}_p&&:\,=\,p+\mathcal{C}\,,\\
\label{eq:def:AffLightDC}
& \mathcal{L}_p&&:\,=\,p+\mathcal{L}\,.
\end{alignat}
\end{subequations}
If $W$ is a subspace of $V$ then $W_p$ is an affine subspace of 
$\mathbb{M}^n$ over $W$. If $W$ is time-, light-, or spacelike 
then $W_p$ is also called time-, light-, or spacelike. 
Of particular interest are the hyperplanes $v_p^\perp$
which are timelike, lightlike, or spacelike according to $v$ 
being spacelike, lightlike, or timelike respectively. 

Two points $p,q\in\mathbb{M}^n$ are said to be timelike-, lightlike-, 
or spacelike separated if the line joining them (equivalently: the 
vector $p-q$) is timelike, lightlike, or spacelike respectively. 
Non-spacelike separated points are also called causally separated and 
the line though them is called a causal line.  

It is easy to show that the relation $v\sim w\Leftrightarrow v\cdot w>0$ 
defines an equivalence relation (cf. Sect.\,~\ref{sec:GroupActions}) on 
the set of timelike vectors. (Only transitivity is non-trivial, i.e. if 
$u\cdot v>0$ and $v\cdot w>0$ then $u\cdot w>0$. To show this, decompose 
$u$ and $w$ into their components parallel and perpendicular to $v$.) 
Each of the two equivalence classes is a \emph{cone} in $V$, that is, 
a subset closed under addition and multiplication with positive numbers. 
Vectors in the same class are said to have the same time orientation. 
In the same fashion, the relation $v\sim w\Leftrightarrow v\cdot w\geq 0$ 
defines an equivalence relation on the set of causal vectors, with both 
equivalence classes being again cones. The existence of these 
equivalence relations is expressed by saying that $\mathbb{M}^n$ is 
\emph{time orientable}. Picking one of the two possible time 
orientations is then equivalent to specifying a single timelike 
reference vector, $v_*$, whose equivalence class of directions may 
be called the \emph{future}. This being done we can speak of the future 
(or forward, indicated by a superscript $+$) and past (or backward, 
indicated by a superscript $-$) cones:  
\begin{subequations}
\label{eq:VecC}
\begin{alignat}{2}
\label{eq:def:VecCausalC}
& \mathcal{\bar C}^\pm &&:\,=\,
  \{v\in\mathcal{\bar C}\mid v\cdot v_*\gtrless 0\}\,,\\
\label{eq:def:VecChronC}
& \mathcal{C}^\pm &&:\,=\,
  \{v\in\mathcal{C}\mid v\cdot v_*\gtrless 0\}\,,\\
\label{eq:def:VecLightC}
& \mathcal{L}^\pm &&:\,=\,
  \{v\in \mathcal{L}\mid v\cdot v_*\gtrless 0\}\,.
\end{alignat}
\end{subequations}
Note that $\mathcal{\bar C}^\pm=\mathcal{C}^\pm\cup\mathcal{L}^\pm$
and $\mathcal{C}^\pm\cap\mathcal{L}^\pm=\emptyset$. Usually 
$\mathcal{L}^+$ is called the future and $\mathcal{L}^-$ the past 
lightcone. Mathematically speaking this is an abuse of language 
since, in contrast to $\mathcal{\bar C}^\pm$ and $\mathcal{C}^\pm$, 
they are not cones: They are each invariant (as sets) under 
multiplication with positive real numbers, but adding to vectors in 
$\mathcal{L}^\pm$ will result in a vector in $\mathcal{C}^\pm$ unless 
the vectors were parallel.

As before, these cones can be attached to the points in $\mathbb{M}^n$.
We write in a straightforward manner: 
\begin{subequations}
\label{eq:AffC}
\begin{alignat}{2}
\label{eq:def:AffCausalC}
& \mathcal{\bar C}_p^\pm &&:\,=\,p+\mathcal{\bar C}^\pm\,,\\
\label{eq:def:AffChronC}
& \mathcal{C}_p^\pm &&:\,=\,p+\mathcal{C}^\pm\,,\\
\label{eq:def:AffLightC}
& \mathcal{L}_p^\pm &&:\,=\,p+\mathcal{L}^\pm\,.
\end{alignat}
\end{subequations}

The Cauchy-Schwarz inequalities (\ref{eq:CauchySchwarz}) result 
in various generalised triangle-inequalities. Clearly, for spacelike 
vectors, one just has the ordinary triangle inequality. But for causal or 
timelike vectors one has to distinguish the cases according to the relative 
time orientations. For example, for timelike vectors of equal time 
orientation, one obtains the reversed triangle inequality:
\begin{equation}
\label{sec:InvTriangleIneq}
\Vert v+w\Vert_g\geq\Vert v\Vert_g +\Vert w\Vert_g\,,
\end{equation}
with equality iff $v$ and $w$ are parallel. It expresses the geometry 
behind the `twin paradox'. 

Sometimes a Minkowski `distance function' 
$d:\Mink{n}\times\Mink{n}\rightarrow\reals$ is introduced through 
\begin{equation}
\label{eq:Def:MinkowskiDistance}
d(p,q):=\Vert p-q\Vert_g\,. 
\end{equation}
Clearly this is not a distance function in the ordinary sense, 
since it is neither true that $d(p,q)=0\Leftrightarrow p=q$ nor 
that $d(p,w)+d(w,q)\geq d(p,q)$ for all $p,q,w$. 

\subsection{From metric to affine structures}
\label{sec:FromMetricToAffine}
In this section we consider general isometries of Minkowski 
space. By this we mean general bijections 
$F:\Mink{n}\rightarrow\Mink{n}$  (no requirement 
like continuity or even linearity is made) which preserve 
the Minkowski distance (\ref{eq:Def:MinkowskiDistance}) 
as well as the time or spacelike character; hence  
\begin{equation}
\label{eq:Def:MinkowskiIsometries}
\bigl(F(p)-F(q)\bigr)^2=(p-q)^2\qquad
\text{for all}\quad p,q\in\Mink{n}\,.
\end{equation}
Poincar\'e transformations form a special class of such 
isometries, namely those which are affine. Are there non-affine 
isometries? One might expect a whole Pandora's box full of 
wild (discontinuous) ones. But, fortunately, they do not exist:
Any map $f:V\rightarrow V$ satisfying $(f(v))^2=v^2$
for all $v$ must be linear. As a warm up, we show  
\begin{theorem}
\label{thm:NoWildIsometries}
Let $f:V\rightarrow V$ be a surjection (no further conditions) 
so that $f(v)\cdot f(w)=v\cdot w$ for all $v,w\in V$, then $f$ 
is linear. 
\end{theorem}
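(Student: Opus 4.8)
The plan is to exploit the nondegeneracy of $g$ to convert the scalar-product-preserving property into an explicit coordinate formula for $f$, from which linearity is immediate. First I would fix an orthonormal basis $\{e_0,\dots,e_{n-1}\}$ as in Definition~\ref{def:MinkowskiSpace}, writing $\epsilon_a:=g(e_a,e_a)=\pm 1$ so that $e_a\cdot e_b=\epsilon_a\delta_{ab}$. Setting $w_a:=f(e_a)$, the hypothesis gives $w_a\cdot w_b=e_a\cdot e_b=\epsilon_a\delta_{ab}$; thus the images carry exactly the same (nondegenerate) Gram matrix as the original basis.

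The next step is to show that $\{w_0,\dots,w_{n-1}\}$ is again a basis of $V$. Linear independence follows directly from nondegeneracy: if $\sum_a c_a w_a=0$, then taking the scalar product with $w_b$ yields $\epsilon_b c_b=0$, hence $c_b=0$ for every $b$. Since $V$ is $n$-dimensional, these $n$ independent vectors span $V$, so $\{w_a\}$ is a basis. Note that surjectivity of $f$ is not actually needed for this argument: preserving the full bilinear form is a much stronger hypothesis than preserving only the quadratic form $v\mapsto v^2$, and it is this strength that drives the whole proof.

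Finally I would read off the coordinates of $f(v)$ in this basis. For arbitrary $v\in V$ write $f(v)=\sum_a d_a w_a$; taking the scalar product with $w_b$ and using $w_a\cdot w_b=\epsilon_a\delta_{ab}$ gives $d_b=\epsilon_b\,(f(v)\cdot w_b)$. By the hypothesis $f(v)\cdot w_b=f(v)\cdot f(e_b)=v\cdot e_b$, so $d_b=\epsilon_b\,(v\cdot e_b)$. But $\epsilon_b\,(v\cdot e_b)$ is precisely the $b$-th coordinate of $v$ itself, since $v=\sum_b\epsilon_b\,(v\cdot e_b)\,e_b$. Therefore
\[
f(v)=\sum_{b}\epsilon_b\,(v\cdot e_b)\,w_b\,,
\]
and the right-hand side is manifestly linear in $v$, because each map $v\mapsto v\cdot e_b$ is linear while the $w_b$ are fixed vectors. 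This proves the claim.

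There is no serious obstacle here: the entire argument rests on the single observation that, because $g$ is nondegenerate, a vector is determined by its scalar products against a basis, and these scalar products are left untouched by $f$. The only point demanding a little care is the passage from linear independence to basis, which uses $\dim V=n$; everything else is a direct computation.
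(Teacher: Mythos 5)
Your proof is correct, and it takes a genuinely different route from the paper's. The paper argues coordinate-free and in one stroke: it forms the quantity $I:=\bigl(af(u)+bf(v)-f(au+bv)\bigr)\cdot w$, uses surjectivity to write an arbitrary $w$ as $f(z)$, computes $I=a\,u\cdot z+b\,v\cdot z-(au+bv)\cdot z=0$, and then invokes nondegeneracy to conclude that the defect vector $af(u)+bf(v)-f(au+bv)$ vanishes. You instead fix an orthonormal basis, observe that its image carries the same invertible Gram matrix and is hence again a basis, and read off the closed formula $f(v)=\sum_b\epsilon_b\,(v\cdot e_b)\,w_b$. Both arguments are instances of one principle---by nondegeneracy a vector is determined by its scalar products against a spanning set, and those products are untouched by $f$---but they secure the spanning set differently: the paper uses the whole image of $f$, which spans by the surjectivity hypothesis, while you use $f$ of a basis, which spans by the Gram-matrix computation and therefore needs $\dim V=n<\infty$. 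What your version buys is exactly what you point out: surjectivity is dispensable, and in fact emerges as a conclusion, since your $f$ is linear and maps a basis to a basis, hence is bijective. What it costs is finite-dimensionality: the paper's argument survives verbatim for a surjection of an infinite-dimensional space with a nondegenerate pairing, where your independence-implies-spanning step breaks down. One could even merge the two viewpoints: the paper's computation shows the defect vector is orthogonal to all of $f(V)$, and your Gram-matrix observation shows $f(V)$ contains a basis even without surjectivity, so $I=0$ follows either way. A final caution on your side remark: you are right that preserving the bilinear form is a stronger hypothesis than preserving only the quadratic form $v\mapsto v^2$, and the paper's preceding sentence (that preserving $v^2$ alone forces linearity) is indeed a harder theorem that neither your argument nor the paper's one-liner establishes.
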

\begin{proof}
Consider $I:=\bigl(af(u)+bf(v)-f(au+bv)\bigr)\cdot w$. 
Surjectivity allows to write $w=f(z)$, so that 
$I=a\,u\cdot z+b\,v\cdot z-(au+bv)\cdot z$, which vanishes 
for all $z\in V$. Hence $I=0$ for all $w\in V$, which by non-degeneracy 
of $g$ implies the linearity of $f$.
\end{proof}
This shows in particular that any bijection 
$F:\Mink{n}\rightarrow\Mink{n}$ of Minkowski space whose associated 
map $f:V\rightarrow V$, defined by $f(v):=F(o+v)-F(o)$ for some chosen 
basepoint $o$, preserves the Minkowski metric must be a Poincar\'e 
transformation. As already indicated, this result can be considerably 
strengthened. But before going into this, we mention a special and 
important class of linear isometries of $(V,g)$, namely reflections 
at non-degenerate hyperplanes. The reflection at $v^\perp$ is defined 
by 
\begin{equation}
\label{eq:Reflection1}
\rho_v(x):=x-2\,v\ \frac{x\cdot v}{v^2}\,.
\end{equation}
Their significance is due to the following 
\begin{theorem}[Cartan, Dieudonn\'e]
Let the dimension of $V$ be $n$. Any isometry of 
$(V,g)$ is the composition of at most $n$ reflections. 
\end{theorem}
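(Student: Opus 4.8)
The plan is to prove the Cartan–Dieudonné theorem by induction on the dimension $n=\dim V$, exploiting the fact that reflections at non-degenerate hyperplanes, as defined in \eqref{eq:Reflection1}, act as the identity on the hyperplane $v^\perp$ and flip the normal direction $v$. The key structural idea is this: if an isometry $\phi$ already fixes some non-degenerate vector pointwise along its span, then $\phi$ restricts to an isometry of the orthogonal complement, and we can apply the inductive hypothesis there; conversely, if $\phi$ moves a vector $v$ to $w\neq v$, we want to manufacture a single reflection $\rho$ such that $\rho\circ\phi$ fixes $v$, thereby reducing the number of directions left to handle.

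First I would set up the base case $n=1$, where the only isometries are $\pm\Identity$, the latter being a single reflection and the former the empty (zero-fold) composition. For the inductive step, suppose the result holds in all dimensions below $n$, and let $\phi$ be an isometry of $(V,g)$. The heart of the argument is the following reduction lemma: \emph{given any vector $v$ with $v^2\neq 0$, there is either the identity or a single reflection $\rho$ such that $(\rho\circ\phi)(v)=v$.} Concretely, if $\phi(v)=v$ we take $\rho=\Identity$; otherwise set $u:=\phi(v)-v$. Using $\phi(v)^2=v^2$ one computes $u\cdot(\phi(v)+v)=\phi(v)^2-v^2=0$, so $u$ is $g$-orthogonal to $\phi(v)+v$, and a short calculation with \eqref{eq:Reflection1} shows that the reflection $\rho_u$ (when $u^2\neq 0$) sends $\phi(v)$ back to $v$. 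Once $v$ is fixed by $\psi:=\rho_u\circ\phi$ (or by $\phi$ itself), $\psi$ preserves $v^\perp$, which is non-degenerate of dimension $n-1$; by induction $\psi\big\vert_{v^\perp}$ is a product of at most $n-1$ reflections in $v^\perp$, each of which extends to a reflection of $V$ fixing $v$. Composing, $\phi$ itself is a product of at most $n$ reflections.

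The main obstacle is the \textbf{degenerate case} $u^2=0$: the reduction above manufactures a reflection $\rho_u$ only when the difference vector $u=\phi(v)-v$ is itself non-degenerate, but \eqref{eq:Reflection1} is undefined for a lightlike $u$, and no reflection at a non-degenerate hyperplane can be built directly from it. This is precisely where the Lorentzian (indefinite) signature bites, since in the positive-definite case $u^2=0$ forces $u=0$ and never arises. The standard remedy I would pursue is to exploit freedom in the choice of $v$: if $\phi(v)-v$ is lightlike, one shows that $\phi(v)+v$ must be non-degenerate (both cannot be lightlike unless $\phi(v)=\pm v$, and one checks $(\phi(v)-v)\cdot(\phi(v)+v)=0$ together with $v^2\neq 0$ to rule out the bad coincidence), and then replaces the single reflection by the composition of two reflections---first reflect $v$ to $-v$ using the non-degenerate $\phi(v)+v$, handling the sign with a second reflection---so that the count is still controlled. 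Making this bookkeeping tight enough to respect the bound ``at most $n$'' across all signature-induced cases is the delicate part; careful case analysis on whether the fixed vector one starts from is timelike or spacelike, and whether the troublesome differences are lightlike, is what separates a correct proof from an off-by-one error.

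The remaining routine verifications---that each $\rho_v$ is genuinely an isometry (immediate from \eqref{eq:Reflection1} since $\rho_v(x)\cdot\rho_v(y)=x\cdot y$ by a direct expansion), that a reflection of the subspace $v^\perp$ lifts to a reflection of $V$ fixing $v$, and that the empty product and single reflections cover the low-dimensional base cases---I would dispatch quickly, as they do not involve the indefinite signature in any essential way.
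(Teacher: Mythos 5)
Your generic reduction step is sound: for $v^2\ne 0$, $w:=\phi(v)$ and $u:=w-v$ with $u^2\ne 0$, one indeed has $w\cdot u=\tfrac{1}{2}u^2$ (using $w^2=v^2$), so $\rho_u(w)=v$ and a single reflection buys one dimension. The genuine gap sits exactly where you flag it, and your proposed remedy does not close it: when $u$ is lightlike you must spend \emph{two} reflections to fix $v$ (your description is slightly garbled---$\rho_{v+w}$ sends $w\mapsto -v$ and $v\mapsto -w$, and the sign is then repaired by $\rho_v$, i.e.\ the composition $\rho_v\circ\rho_{v+w}$ of (\ref{eq:proof:CartanDieu1})), and an induction in which a step can cost two reflections delivers only the bound $2(n-1)+1=2n-1$, not $n$. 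Your concluding remark that careful case analysis ``separates a correct proof from an off-by-one error'' concedes precisely the point that constitutes the entire difficulty of the sharp Cartan--Dieudonn\'e theorem, and no such analysis is supplied. The known complete proofs (see \cite{Jacobson:BasicAlgebraI} or \cite{Berger:Geometry2}) are structured differently: one argues by cases on whether $\phi$ fixes some anisotropic vector, or moves some anisotropic $v$ by an anisotropic amount, and must treat separately the exceptional isometries for which neither holds (those with $\mathrm{im}(\phi-\mathrm{id})$ totally isotropic)---a situation your local choose-a-$v$-and-patch scheme cannot reach by bookkeeping alone. Note also that your suggested split on whether the starting vector is ``timelike or spacelike'' is tied to Lorentzian signature, whereas the theorem, as the paper stresses immediately after its proof, holds for arbitrary non-degenerate symmetric $g$; for forms of Witt index $\geq 2$ the isotropic pathologies are strictly worse than in the Minkowski case.

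It is worth saying plainly how this compares with the paper: the paper does \emph{not} prove the sharp bound either. It cites \cite{Jacobson:BasicAlgebraI} and \cite{Berger:Geometry2} for the full statement and then presents only the ``easier proof'' of the weaker bound $2n-1$---the same dichotomy on $(v-w)^2\ne 0$ versus $(v-w)^2=0$, the same two-reflection patch, the same restriction of the problem to $v^\perp$ (non-degenerate since $v^2\ne 0$) with canonical extension of reflections back to $V$. So what you have written is, in substance, the paper's own argument; the defect is that you advertise it as a proof of ``at most $n$'' while it establishes only ``at most $2n-1$''. Either weaken the claimed bound accordingly, or replace the inductive step by one of the genuinely different arguments from the cited literature.
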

\begin{proof}
Comprehensive proofs may be found in \cite{Jacobson:BasicAlgebraI}
or \cite{Berger:Geometry2}. The easier proof for at most $2n-1$ 
reflections is as follows: Let $\phi$ be a linear isometry and 
$v\in V$ so that $v^2\ne 0$ (which certainly exists). Let 
$w=\phi(v)$, then $(v+w)^2+(v-w)^2=4v^2\ne 0$ so that $w+v$ and $w-v$ 
cannot simultaneously have zero squares. So let $(v\mp w)^2\ne 0$
(understood as alternatives), then $\rho_{v\mp w}(v)=\pm w$ and 
$\rho_{v\mp w}(w)=\pm v$. Hence $v$ is eigenvector with eigenvalue 
$1$ of the linear isometry given by   
\begin{equation}
\label{eq:proof:CartanDieu1}  
\phi'=
\begin{cases}
\rho_{v-w}\circ \phi &\text{if}\ (v-w)^2\ne 0\,,\\
\rho_v\circ\rho_{v+w}\circ \phi &\text{if}\ (v-w)^2= 0\,.
\end{cases}
\end{equation}
Consider now the linear isometry $\phi'\big\vert_{v^\perp}$
on $v^\perp$ with induced bilinear form $g\big\vert_{v^\perp}$,
which is non-degenerated due to $v^2\ne 0$. We conclude by 
induction: At each dimension we need at most two reflections to 
reduce the problem by one dimension. After $n-1$ steps we have 
reduced the problem to one dimension, where we need at most 
one more reflection. Hence we need at most $2(n-1)+1=2n-1$ 
reflections which upon composition with $\phi$ produce the identity. 
Here we use that any linear isometry in $v^\perp$ can be canonically 
extended to $\Span\{v\}\oplus v^\perp$ by just letting it act 
trivially on $\Span\{v\}$.
\end{proof}
Note that this proof does not make use of the signature of $g$. In fact,
the theorem is true for any signatures; it only depends on $g$ being
symmetric and non degenerate.

\subsection{From causal to affine structures}
\label{sec:FromCausalToAffine}
As already mentioned, Theorem\,\ref{thm:NoWildIsometries} can be 
improved upon, in the sense that the hypothesis for the map being 
an isometry is replaced by the hypothesis that it merely preserve 
some relation that derives form the metric structure, but is not 
equivalent to it. In fact, there are various such relations which 
we first have to introduce. 

The family of cones $\{\mathcal{\bar C}_q^+\mid q\in\mathbb{M}^n\}$ 
defines a partial-order relation (cf.~Sect.\,\ref{sec:GroupActions}), 
denoted by $ \geq$, on spacetime as follows:
$p\geq q$ iff $p\in\mathcal{\bar C}^+_q$, i.e. iff $p-q$ is causal 
and future pointing. Similarly, the family 
$\{\mathcal{C}^+_q\mid q\in\mathbb{M}^n\}$ defines a strict partial 
order, denoted by $>$, as follows: $p>q$ iff 
$p\in\mathcal{C}^+_q$, i.e. if $p-q$ is timelike and future pointing. 
There is a third relation, called $\gtrdot$, defined as follows: 
$p\gtrdot q$ iff $p\in\mathcal{L}_q^+$, i.e. $p$ is on the future 
lightcone at $q$. It is not a partial order due to the lack 
of transitivity, which, in turn, is due to the lack of the lightcone 
being a cone (in the proper mathematical sense explained above). 
Replacing the future ($+$) with the past ($-$) cones gives the 
relations $\leq$, $<$, and $\lessdot$.

It is obvious that the action of $\ILor^\uparrow$ (spatial reflections 
are permitted) on $\mathbb{M}^n$ maps each of the six families of 
cones (\ref{eq:AffC}) into itself and therefore leave each of the 
six relations invariant. For example: Let $p>q$ and $F\in\ILor^\uparrow$, 
then $(p-q)^2>0$ and $p-q$ future pointing, but also $(F(p)-F(q))^2>0$
and $F(p)-F(q)$ future pointing, hence $F(p)>F(q)$. Another set of `obvious' 
transformations of $\mathbb{M}^n$ leaving these relations invariant is 
given by all dilations: 
\begin{equation}
\label{eq:def:Dilations}
d_{(\lambda,m)}:\mathbb{M}^n\rightarrow\mathbb{M}^n\,,\quad
p\mapsto d_{(\lambda,m)}(p):=\lambda(p-m)+m\,,
\end{equation}
where $\lambda\in\mathbb{R }_+$ is the constant dilation-factor and 
$m\in\mathbb{M}^n$ the centre. This follows from  
$\bigl(d_{\lambda,m}(p)-d_{\lambda,m}(q)\bigr)^2=\lambda^2(p-q)^2$,
$\bigl(d_{\lambda,m}(p)-d_{\lambda,m}(q)\bigr)\cdot v_*=\lambda(p-q)\cdot v_*$,
and the positivity of $\lambda$. Since translations are already 
contained in $\ILor^\uparrow$, the group generated by $\ILor^\uparrow$
and all $d_{\lambda,m}$ is the same as the group generated by 
$\ILor^\uparrow$ and all $d_{\lambda,m}$ for fixed $m$. 

A seemingly difficult question is this: What are the most general 
transformations of $\mathbb{M}^n$ that preserve those relations? 
Here we understand `transformation' synonymously with `bijective map', 
so that each transformation $f$ has in inverse $f^{-1}$. `Preserving 
the relation' is taken to mean that $f$ \emph{and} $f^{-1}$ 
preserve the relation. Then the somewhat surprising answer to the 
question just posed is that, in three or more spacetime dimensions, 
there are no other such transformations besides those already listed: 

\begin{theorem}
Let $\succ$ stand for any of the relations $\geq,>,\gtrdot$ and 
let $F$ be a bijection of $\mathbb{M}^n$ with $n\geq 3$, such that 
$p\succ q$  implies  $F(p)\succ F(q)$ and 
$F^{-1}(p)\succ F^{-1}(q)$. Then $F$ is the composition of an 
Lorentz transformation in $\ILor^\uparrow$ with a dilation.
\end{theorem}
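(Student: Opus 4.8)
The plan is to show in three movements that $F$ is forced to be affine and that its linear part is conformal with respect to $g$. First I would reduce all three relations to a single geometric datum: the family of \emph{light rays} (null lines). Each of $\geq,>,\gtrdot$ determines the lightlike relation $\gtrdot$ in a purely order-theoretic way. For $\geq$ one notes that a pair $p\geq q$ is lightlike-separated precisely when the order interval $\{z\mid p\geq z\geq q\}$ degenerates to the straight segment $[q,p]$, whereas for timelike separation this causal diamond is solid (it contains timelike-separated pairs); for $>$ one passes to the boundary of the relation, and for $\gtrdot$ the relation is given outright. Once $\gtrdot$ is available, I would characterise a null line as a maximal subset of $\Mink{n}$ on which $\gtrdot$ restricts to a transitive total order: collinear null triples $p\gtrdot q\gtrdot r$ satisfy $p\gtrdot r$, while three pairwise null-related points off a common null line do not, since $\gtrdot$ fails to be transitive exactly because the light cone is not a cone. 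Hence $F$ and $F^{-1}$, preserving $\succ$, map the family of null lines bijectively onto itself.

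Second, I would upgrade ``null lines to null lines'' to ``lines to lines''. Since $F$ carries every null line to a null line, it carries each light cone $\mathcal L_p$ (the union of the null lines through $p$) onto the light cone $\mathcal L_{F(p)}$. The incidence pattern of light cones then recovers the full affine line structure---collinearity of three points can be read off from the degeneration of a suitable configuration of null lines joining their cones---so $F$ maps affine lines to affine lines. Here the hypothesis $n\geq 3$ enters decisively: the set of null directions is the connected ``celestial sphere'' $S^{n-2}$, which is what makes the reconstruction rigid; in $n=2$ the two null foliations decouple and the conclusion is false. Having established that $F$ is a collineation, the fundamental theorem of affine geometry shows that $F$ is semi-affine, and since the only field automorphism of $\reals$ is the identity (an automorphism of $\reals$ fixes $\rationals$, carries squares to squares hence preserves order, and a monotone map fixing a dense set is the identity), $F$ is in fact affine: $F(x)=f(x)+c$ with $f\in\group{GL}(V)$.

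Third, I would pin down $f$. Because $F$ preserves $\gtrdot$, the linear map $f$ maps the null cone $\mathcal L=\{v\mid v^2=0\}$ bijectively onto itself. A nondegenerate quadratic form in $n\geq 3$ variables is determined up to a nonzero scalar by its zero set (its cone is irreducible), whence $g\bigl(f(v),f(w)\bigr)=\lambda\,g(v,w)$ for some constant $\lambda\neq 0$ and all $v,w\in V$. Preservation of the time orientation (future cones to future cones) forces $\lambda>0$ and $f$ orthochronous; writing $\lambda=\mu^2$ with $\mu>0$ gives $f=\mu\,\Lambda$ with $\Lambda\in\Lor^\uparrow$. Thus $F$ is the composition of a dilation (absorbing $\mu$ and the translation $c$ into a scaling about a suitable centre) with a transformation in $\ILor^\uparrow$, as claimed.

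The main obstacle is the first movement---the order-theoretic characterisation of null lines and the proof that $F$ sends them to null lines---together with the passage to a genuine collineation; everything after the fundamental theorem is essentially the linear-algebra fact that the form is fixed up to scale by its null cone. It is precisely these two points that require $n\geq 3$, once through the connectedness of the null directions and once through the irreducibility of the light cone.
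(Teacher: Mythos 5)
First, a point of comparison: the paper does not prove this theorem in the text at all --- its ``proof'' consists of citations to Alexandrov's review and Zeeman's paper, followed by the explicit two-dimensional counterexample $f(u,v)=(h(u),h(v))$ showing that the hypothesis $n\geq 3$ is sharp. Your proposal therefore reconstructs, in outline, the Alexandrov--Zeeman argument that the paper delegates to the literature, and much of the outline is sound. The reduction of $\geq$ to $\gtrdot$ works, but must be phrased purely order-theoretically: say that the interval $\{z\mid p\geq z\geq q\}$ is a \emph{chain} (totally ordered by $\geq$), which holds for lightlike pairs and fails for timelike ones because the solid diamond contains spacelike-separated, i.e.\ $\geq$-incomparable, pairs; your parenthetical ``it contains timelike-separated pairs'' names the wrong feature, since the segment contains comparable pairs too, and ``straight segment'' is not available before affinity is established. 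Likewise ``passing to the boundary'' of $>$ is illegitimate as stated, because no topology is among the hypotheses; use instead the order-theoretic definition $p\gtrdot q\Leftrightarrow p\ne q$, $\neg(p>q)$, and $z>p\Rightarrow z>q$ for all $z$. The characterisation of null lines as maximal $\gtrdot$-chains is correct (two future-pointing null vectors have a null sum iff they are proportional), and your endgame is fine --- indeed you do not need irreducibility of the cone as a variety, since the paper's Lemma~\ref{thm:SameKernelPropto} derives $h=\alpha g$ from $\mathcal{L}_g\subseteq\mathcal{L}_h$ by bare hands, and your sign and orientation bookkeeping is in order.

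The genuine gap is the second movement. The sentence ``the incidence pattern of light cones then recovers the full affine line structure'' asserts precisely the hard core of the theorem --- the part on which Alexandrov's and Zeeman's papers spend essentially all their effort --- and you offer no construction, in particular none for timelike triples, which are the awkward ones: a timelike line lies in \emph{no} degenerate hyperplane (a timelike vector is orthogonal to no null vector), so it cannot be obtained as an intersection of incidence-defined null objects. A workable completion along your lines: (i)~show that parallelism of null lines is order-definable --- two disjoint null lines lie in a common degenerate hyperplane iff no null line meets both, which rests on the Lorentzian fact that non-proportional null vectors are never orthogonal; (ii)~conclude that $F$ preserves degenerate hyperplanes (unions of such parallel families), whence spacelike lines are recovered as intersections of the degenerate hyperplanes containing them, using that for spacelike $u$ the null vectors of $u^\perp$ span $u^\perp$ when $n\geq 3$, so that $\bigcap\{v^\perp\mid v\in u^\perp,\ v^2=0\}=\Span\{u\}$; (iii)~for affinity itself, follow Zeeman: preservation of null parallelograms yields additivity of the induced maps along two transversal null foliations, and monotonicity with respect to the preserved order upgrades $\rationals$-affinity to $\reals$-affinity --- only at that point does your appeal to the fundamental theorem of affine geometry (or a direct linearity conclusion) become available, after which your third movement goes through. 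As it stands, the middle third of your proof is a restatement of the theorem's content rather than an argument for it; note also that your unproved claim that the conclusion fails for $n=2$ is the one thing the paper does prove, via the counterexample quoted above.
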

\begin{proof}
These results were proven by A.D.\,Alexandrov and independently 
by E.C. Zeeman. A good review of Alexandrov's results is 
\cite{Alexandrov:1975}; Zeeman's paper is \cite{Zeeman:1964}.
The restriction to $n\geq 3$ is indeed necessary, as for $n=2$ 
the following possibility exists: Identify $\mathbb{M}^2$
with $\reals^2$ and the bilinear form $g(z,z)=x^2-y^2$,
where $z=(x,y)$. Set $u:=x-y$ and $v:=x+y$ and define 
$f:\reals^2\rightarrow \reals^2$ by $f(u,v):=(h(u),h(v))$,
where $h:\reals\rightarrow\reals$ is any smooth function 
with $h'>0$. This defines an orientation preserving diffeomorphism
of $\reals^2$ which transforms the set of lines $u=$ const. and 
$v=$ const. respectively into each other. Hence it preserves the 
families of cones (\ref{eq:def:AffCausalC}). Since these 
transformations need not be affine linear they are not generated by 
dilations and Lorentz transformations.
\end{proof}

These results may appear surprising since without a continuity 
requirement one might expect all sorts of wild behaviour to allow 
for more possibilities. However, a little closer inspection reveals 
a fairly obvious reason for why continuity is implied here.  
Consider the case in which a transformation $F$ preserves the 
families $\{\mathcal{C}^+_q\mid q\in\mathbb{M}^n\}$ and 
$\{\mathcal{C}^-_q\mid q\in\mathbb{M}^n\}$. The open 
diamond-shaped sets (usually just called `open diamonds'),
\begin{equation}
\label{eq:OpenDiamonds}
U(p,q):=(\mathcal{C}^+_p\cap\mathcal{C}^-_q)\cup
        (\mathcal{C}^+_q\cap\mathcal{C}^-_p)\,,
\end{equation}
are obviously open in the standard topology of $\mathbb{M}^n$ 
(which is that of $\reals^n$). Note that at least one of the 
intersections in (\ref{eq:OpenDiamonds}) is always empty. 
Conversely, is is also easy to see that each open set of 
$\mathbb{M}^n$ contains an open diamond. Hence the topology 
that is defined by taking the $U(p,q)$ as sub-base (the basis 
being given by their finite intersections) is equivalent to the 
standard topology of $\mathbb{M}^n$. But, by hypothesis, $F$ and 
$F^{-1}$ preserves the cones $\mathcal{C}^\pm_q$ and therefore 
open sets, so that $F$ must, in fact, be a homeomorphism. 

There is no such \emph{obvious} continuity input if one 
makes the strictly weaker requirement that instead of the cones 
(\ref{eq:AffC}) one only preserves the doublecones 
(\ref{eq:AffDC}). Does that allow for more transformations, except 
for the obvious time reflection? The answer is again in the negative.
The following result was shown by  Alexandrov (see his review 
\cite{Alexandrov:1975}) and later, in a different fashion, by 
Borchers and Hegerfeld~\cite{Borchers.Hegerfeld:1972}: 
\begin{theorem}
\label{thm:BorchersHegerfeld}
Let $\sim$ denote any of the relations: $p\sim q$ iff $(p-q)^2\geq 0$,
 $p\sim q$ iff $(p-q)^2> 0$, or $p\sim q$ iff $(p-q)^2=0$. Let $F$ be a 
bijection of $\mathbb{M}^n$ with $n\geq 3$, such that $p\sim q$  
implies  $F(p)\sim F(q)$ and $F^{-1}(p)\sim F^{-1}(q)$. Then $F$ 
is the composition of an Lorentz transformation in $\ILor$ with a 
dilation.
\end{theorem}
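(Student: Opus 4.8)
The plan is to show that $F$ must be affine; once that is established, the theorem reduces to a short piece of linear algebra. Composing with a translation so that $F$ fixes a chosen origin $o$, consider the associated map $f:V\to V$, $f(v):=F(o+v)-F(o)$, exactly as after Theorem~\ref{thm:NoWildIsometries}. If $f$ is shown to be linear, then preservation of $\sim$ (by $F$ together with $F^{-1}$) forces $f$ to map the null cone $\mathcal{L}$ bijectively onto itself, i.e. $(f(v))^2=0\Leftrightarrow v^2=0$. Since two non-degenerate quadratic forms with the same zero set are proportional, this gives $(f(v))^2=\lambda\,v^2$ for a single $\lambda\neq 0$, and polarisation yields $g(f(v),f(w))=\lambda\,g(v,w)$. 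For $n\geq 3$ the signatures $(1,n-1)$ and $(n-1,1)$ differ, so $\lambda<0$ is impossible; hence $\lambda>0$, $f/\sqrt{\lambda}\in\Lor^n=\mathrm{O}(1,n-1)$, and $F$ is the composition of an element of $\ILor$ with the dilation of factor $\sqrt{\lambda}$. Thus the entire difficulty is concentrated in proving linearity of $f$ \emph{without} any continuity hypothesis --- precisely the input that came for free through open diamonds in the ordered case, but which is unavailable here because $\sim$ no longer separates the two sheets of the cone.

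The core is the lightlike relation, and I would begin with the collinearity lemma that makes $n\geq 3$ indispensable: if $p,q,r$ are pairwise lightlike separated then they are collinear on a null line. Placing $p$ at the origin, $q$ and $r$ are null vectors with $q\cdot r=0$ (expand $(q-r)^2=0$), and in Lorentzian signature two orthogonal null vectors are parallel, so $r\in\Span\{q\}$. Consequently $F$ carries every null line bijectively onto a null line, and hence every light cone $\mathcal{L}_p$ onto $\mathcal{L}_{F(p)}$. From the incidence pattern of this distinguished family one recovers the remaining affine data: each timelike $2$-plane is spanned by its two null rulings, and parallelism and intersection among these planes pin down the general lines. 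A fundamental-theorem-of-affine-geometry argument then shows $F$ is semi-affine, and since $\reals$ admits only the identity field automorphism, semi-affine means affine. This supplies the linearity of $f$ and closes the lightlike case via the computation above.

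For the timelike and causal relations the strategy is to reduce to the lightlike case by showing that null separation is definable from the given relation together with the bijective structure, after which $F$ automatically preserves $(p-q)^2=0$ and the previous argument applies verbatim. This reconstruction is where $n\geq 3$ again enters essentially: for instance, among the pairs that are causally but not timelike related, the lightlike ones are singled out by the degeneracy of the set of points they causally sandwich, which for a null pair collapses to the one-dimensional null segment (so that any two sandwiched points are themselves causally related) whereas for a genuine timelike configuration it is full-dimensional and contains spacelike-separated pairs. For these two relations the sign is in any case immediate, since $(p-q)^2>0\Rightarrow(f(p-q))^2>0$ forces $\lambda>0$ directly.

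The main obstacle is the passage from ``preserves null lines'' to ``maps all lines to lines'', that is, establishing affineness purely combinatorially; the collinearity lemma is the linchpin that substitutes for the missing continuity. It is exactly this rigidity that breaks down in $\Mink{2}$, where the null cone splits into two independent line families and the wild diffeomorphisms displayed after the preceding theorem become available. Making the relational definition of null separation precise for the timelike and causal cases --- in particular supplying a relational notion of betweenness --- and verifying that the incidence data of the null lines genuinely determine the full affine structure, are the two points at which real care is required.
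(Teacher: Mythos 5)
The paper itself offers no proof of Theorem~\ref{thm:BorchersHegerfeld}; it simply cites Alexandrov \cite{Alexandrov:1975} and Borchers--Hegerfeld \cite{Borchers.Hegerfeld:1972}, so your proposal must be measured against that literature rather than against an in-paper argument. Your skeleton is the classical one: the pairwise-null collinearity lemma (two orthogonal null vectors are parallel, which is exactly what fails to be useful in $\Mink{2}$, where the wild counterexamples live), hence null lines map onto null lines; reconstruction of the affine structure from the incidence geometry of null lines; the fundamental theorem of affine geometry together with the triviality of the field automorphisms of $\reals$; and then an endgame that is literally the paper's own Lemma~\ref{thm:SameKernelPropto} plus polarisation, with $\lambda<0$ excluded for $n\geq 3$ because $(1,n-1)$ and $(n-1,1)$ are distinct signatures. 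That finish is correct and robust: even for the timelike and causal relations, once $f$ is linear the biconditional preservation gives equality of the open (resp.\ closed) timelike sets of $g$ and of $h(v,v):=g(f(v),f(v))$, and since quadratic forms are continuous regardless of any continuity assumption on $F$, comparing boundaries yields $\mathcal{L}_g\subseteq\mathcal{L}_h$ and the Lemma applies.

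The genuine gap is the reduction of the timelike and causal cases to the lightlike one. The hypothesis hands you a \emph{single, unoriented} relation, yet your sample characterisation speaks of pairs that are ``causally but not timelike related'', as if both relations were available; the actual task is to define the lightlike/timelike split from the one given relation. Worse, the proposed definiens is false for the symmetric relation: for a null pair the set of causally sandwiched points does \emph{not} collapse to the null segment. Take $p=0$ and $q=(1,1,0,\dots,0)$, which are lightlike separated. The set $\{r\mid (r-p)^2\geq 0,\ (r-q)^2\geq 0\}$ contains the entire common future and common past, which are full-dimensional; for instance $r_1=(10,0,0,\dots,0)$ and $r_2=(10,9,0,\dots,0)$ both belong to it, while $(r_1-r_2)^2=-81<0$, so the sandwich set of a null pair contains spacelike-separated pairs just as that of a timelike pair does. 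The degenerate-interval phenomenon you are remembering is a property of the \emph{order} interval $\{r\mid p\leq r\leq q\}$ for the oriented relations of Sect.~\ref{sec:FromCausalToAffine}; no time orientation is given here---which is precisely why the conclusion features $\ILor$ rather than $\ILor^\uparrow$. So the betweenness/orientation datum you defer to at the end is not a refinement of the reduction, it \emph{is} the reduction, and it must either be reconstructed from the symmetric relation or bypassed by treating each relation directly, as Alexandrov does. A secondary, smaller caveat: the passage from ``null lines go to null lines'' to full affineness (preservation of parallelism of null rays, images of timelike $2$-planes being planes) is the bulk of \cite{Zeeman:1964} and \cite{Borchers.Hegerfeld:1972} and is only gestured at in your second paragraph; as an announced sketch that is tolerable, but it should not be presented as easier than it is---unlike the endgame, it is where the real work lies.
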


All this shows that, up to dilations, Lorentz transformations can 
be characterised by the causal structure of Minkowski space. 
Let us focus on a particular sub-case of 
Theorem\,\ref{thm:BorchersHegerfeld}, which says that any bijection 
$F$ of $\mathbb{M}^n$ with $n\geq 3$, which satisfies 
$\Vert p-q\Vert_g=0\Leftrightarrow\Vert F(p)-F(q)\Vert_g=0$ must be 
the composition of a dilation and a transformation in $\ILor$. 
This is sometimes referred to as \emph{Alexandrov's theorem}. It gives a 
precise answer to the following physical question: To what extent does 
the principle of the constancy of a finite speed of light \emph{alone} 
determine the relativity group? The answer is, that it determines 
it to be a subgroup of the 11-parameter group of Poincar\'e 
transformations and constant rescalings, which is as close to 
the Poincar\'e group as possibly imaginable. 

Alexandrov's Theorem is, to my knowledge, the closest analog in 
Minkowskian geometry to the famous theorem of Beckman and Quarles 
\cite{Beckman.Quarles:1953}, which refers to Euclidean geometry 
and reads as follows\footnote{In fact, Beckman and Quarles proved 
the conclusion of Theorem\,\ref{thm:BeckmanQuarles} under slightly 
weaker hypotheses: They allowed the map $f$ to be `many-valued', 
that is, to be a map $f:\reals^n\rightarrow\mathcal{S}^n$, 
where $\mathcal{S}^n$ is the set of non-empty subsets of $\reals^n$,
such that $\Vert x-y\Vert=\delta\Rightarrow \Vert x'-y'\Vert=\delta$
for any $x'\in f(x)$ and any $y'\in f(y)$. However, given the 
statement of Theorem\,\ref{thm:BeckmanQuarles}, it is immediate 
that such `many-valued maps' must necessarily be single-valued. To see 
this, assume that $x_*\in\reals^n$ has the two image points 
$y_1,y_2$ and define $h_i:\reals^n\rightarrow\reals^n$ for 
$i=1,2$ such that $h_1(x)=h_2(x)\in f(x)$ for all $x\ne x_*$ and 
$h_i(x_*)=y_i$. Then, according to Theorem\,\ref{thm:BeckmanQuarles},
$h_i$ must both be Euclidean motions. Since they are continuous and 
coincide for all $x\ne x_*$, they must also coincide at $x_*$.}:

\begin{theorem}[Beckman and Quarles 1953]
\label{thm:BeckmanQuarles}
Let $\reals^n$ for $n\geq 2$ be endowed with the standard Euclidean 
inner product $\langle\cdot\mid\cdot\rangle$. The associated norm is 
given by $\Vert x\Vert:=\sqrt{\langle x\mid x\rangle}$. 
Let $\delta$ be any fixed positive real number and 
$f:\reals^n\rightarrow\reals^n$ any map such that 
$\Vert x-y\Vert=\delta\Rightarrow \Vert f(x)-f(y)\Vert=\delta$; then 
$f$ is a Euclidean motion, i.e. $f\in\reals^n\rtimes\group{O}(n)$.
\end{theorem}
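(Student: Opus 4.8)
The plan is to reduce the statement to the assertion that $f$ preserves \emph{all} Euclidean distances, since this already forces $f\in\reals^n\rtimes\group{O}(n)$: after composing with the translation sending $f(0)$ to $0$ we obtain a map $g$ fixing the origin with $\Vert g(x)-g(y)\Vert=\Vert x-y\Vert$, whence $\Vert g(x)\Vert=\Vert x\Vert$ and, by polarisation, $\langle g(x)\mid g(y)\rangle=\langle x\mid y\rangle$. Expanding $\Vert g(x)-\sum_i\langle x\mid e_i\rangle g(e_i)\Vert^2$ for an orthonormal basis $\{e_i\}$ then shows this quantity vanishes, so $g$ is linear and orthogonal; this is the inner-product argument of Theorem\,\ref{thm:NoWildIsometries}, with surjectivity replaced by the explicit expansion. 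After rescaling we may assume $\delta=1$, so the hypothesis reads simply that $f$ maps every unit-distance pair to a unit-distance pair.

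The heart of the matter is a rigidity lemma that manufactures, out of the unit-distance hypothesis alone, the preservation of further specific distances. The basic device is a \emph{braced bipyramid}: take a regular $(n-1)$-simplex $\sigma$ with all edges of length $1$, together with the two points $p,q$ completing $\sigma$ to a unit-edge regular $n$-simplex on either side of the hyperplane spanned by $\sigma$. Every edge of this configuration has length $1$, so $f$ sends it to one in which the images of the vertices of $\sigma$ (being distinct and at mutual distance $1$) again form a unit regular $(n-1)$-simplex spanning a hyperplane, while $f(p)$ and $f(q)$ each sit at unit distance from all of them. An apex at unit distance from every vertex of such a $\sigma$ is pinned to one of exactly two positions, so $\Vert f(p)-f(q)\Vert\in\{0,\sqrt{2(n+1)/n}\}$, matching the value $\Vert p-q\Vert=\sqrt{2(n+1)/n}$ realised in the domain. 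It is precisely here that $n\geq2$ enters: one needs enough room to build the full bracing simplex $\sigma$, so that the two halves cannot fold about a lower-dimensional hinge; in dimension $1$ no such bracing exists and the theorem is false.

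From this seed I would iterate and combine configurations to force preservation of a set of distances dense in $(0,\infty)$, and then upgrade to \emph{all} distances by a squeezing argument, trapping an arbitrary separation $\Vert x-y\Vert$ between forced distances realised by configurations anchored near $x$ and $y$. I expect the main obstacle to be exactly the passage from one-sided estimates to equalities: chaining unit steps and using the triangle inequality yields only $\Vert f(x)-f(y)\Vert\le\Vert x-y\Vert$ on the relevant scales, and ruling out any strict contraction is the entire content of the rigidity lemma. The subtle point is to choose configurations built solely from unit edges (so that $f$ controls them) that are nonetheless infinitesimally rigid (so that the target distance cannot be deformed). A related nuisance, since $f$ is not assumed injective, is to exclude the degenerate alternative $\Vert f(p)-f(q)\Vert=0$ above; this follows once one exhibits, for any two distinct points, a unit-edge configuration forcing their images apart, which the same rigid frameworks provide.
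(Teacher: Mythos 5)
First, a point of comparison: the paper offers no proof of Theorem\,\ref{thm:BeckmanQuarles} at all---it simply cites \cite{Beckman.Quarles:1953} (the accompanying footnote only deduces single-valuedness of many-valued maps \emph{from} the theorem)---so your proposal cannot be measured against an in-paper argument and must stand on its own; in outline it follows the route of the original Beckman--Quarles proof. The parts you carry out are correct: the reduction to full distance preservation via translation, polarisation, and the orthonormal-expansion computation is sound, and the braced bipyramid is exactly the right key configuration, since $n$ points pairwise at unit distance in $\reals^n$ affinely span a hyperplane and admit precisely two common unit-distance apexes, whence $\Vert f(p)-f(q)\Vert\in\{0,\sqrt{2(n+1)/n}\}$. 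The degenerate alternative, which you defer, is in fact excluded by one isoceles triangle: put $d_n:=\sqrt{2(n+1)/n}$ and choose $w$ with $\Vert q-w\Vert=d_n$ and $\Vert p-w\Vert=1$ (possible since $2d_n>1$); if $f(p)=f(q)$, then $1=\Vert f(p)-f(w)\Vert=\Vert f(q)-f(w)\Vert\in\{0,d_n\}$, a contradiction since $d_n\neq 1$. Your localisation of the r\^ole of $n\geq 2$ is also accurate.

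The genuine gap is the step you summarise as ``iterate and combine configurations to force preservation of a set of distances dense in $(0,\infty)$''. Iterating the bipyramid at scale $\rho$ is legitimate once $\rho$ is known to be preserved (with the same collapse exclusion, scaled), but this only closes the preserved set under multiplication by the \emph{single} ratio $d_n$, yielding the geometric progression $\{d_n^k\}$, which is not dense; and ``combine'' cannot mean addition, because chaining preserved edges yields exactly the one-sided estimate $\Vert f(x)-f(y)\Vert\leq\Vert x-y\Vert$ that you yourself identify as the obstacle. What is missing is a second, multiplicatively independent gadget. For instance, in the plane take collinear $x,m,y$ with $\Vert x-m\Vert=\Vert m-y\Vert=1$ and the two apexes $p_\pm$ of unit triangles over the segment $xm$: the edge list consists only of the distances $1$ and $\sqrt3$ (note $\Vert y-p_\pm\Vert=\sqrt3$), both already preserved, and a short computation shows $f(y)$ is then pinned uniquely, forcing $\Vert f(x)-f(y)\Vert=2$; run at scale $\rho$ this gives doubling, and the multiplicative semigroup generated by $\sqrt3$ and $2$ \emph{is} dense in $\reals_+$. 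In higher dimensions the apex pair over a mere segment is a whole $(n-2)$-sphere, so this gadget needs additional full-dimensional bracing, which your sketch does not supply. Finally, the squeeze itself must be set up two-sidedly---upper bounds from chains of preserved steps, lower bounds from the reverse triangle inequality with one long preserved edge---which requires verifying that sums and differences of preserved distances are dense near every $r>0$ together with the feasibility of the anchoring triangles. As written, your argument establishes preservation only of the distances $d_n^k$, not of all distances, so the central claim remains unproved.
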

Note that there are three obvious points which let the result of 
Beckman and Quarles in Euclidean space appear somewhat stronger than 
the theorem of Alexandrov in Minkowski space: 
\begin{itemize}
\item[1.]
The conclusion of Theorem\,\ref{thm:BeckmanQuarles} holds for 
\emph{any} $\delta\in\reals_+$, whereas Alexandrov's theorem 
singles out lightlike distances.
\item[2.]
In Theorem\,\ref{thm:BeckmanQuarles}, $n=2$ is not excluded. 
\item[3.]
In Theorem\,\ref{thm:BeckmanQuarles}, $f$ is not required to be a 
bijection, so that we did not assume the existence of an inverse map 
$f^{-1}$. Correspondingly, there is no assumption 
that $f^{-1}$ also preserves the distance $\delta$. 
\end{itemize}

\subsection{The impact of the law of inertia}
\label{sec:InertiaAndAffineStructure}
In this subsection we wish to discuss the extent to which 
the law of inertia already determines the automorphism group 
of spacetime. 

The law of inertia privileges a subset of paths in spacetime 
form among all paths; it defines a so-called 
\emph{path structure}~\cite{Ehlers.Koehler:1977}\cite{Coleman.Korte:1980}. 
These privileged paths correspond to the motions of privileged 
objects called \emph{free particles}. The existence of such 
privileged objects is by no means obvious and must be taken as 
a contingent and particularly kind property of nature. It has 
been known for long \cite{Lange:1885}\cite{Thomson:1884}\cite{Tait:1884} 
how to operationally construct timescales and spatial reference 
frames relative to which free particles will move uniformly and on 
straight lines respectively---all of them! (A summary of these papers 
is given in \cite{Giulini:2002b}.) 
These special timescales and spatial reference frames were termed 
\emph{inertial} by Ludwig Lange~\cite{Lange:1885}. Their existence 
must again be taken as a very particular and very kind feature of 
Nature. Note that `uniform in time' and `spatially straight' 
together translate to `straight in spacetime'. We also emphasise 
that `straightness' of ensembles of paths can be characterised 
intrinsically, e.g., by the Desargues property~\cite{Pfister:2004}.
All this is true if free particles are given. We do not discuss 
at this point whether and how one should characterise them 
independently (cf.~\cite{Frege:1891}). 

The spacetime structure so defined is usually referred to as 
\emph{projective}. It it not quite that of an affine space, since the 
latter provides in addition each straight line with a distinguished 
two-parameter family of parametrisations, corresponding to a notion 
of \emph{uniformity} with which the line is traced through. Such a 
privileged parametrisation of spacetime paths is not provided by 
the law of inertia, which only provides privileged parametrisations 
of spatial paths, which we already took into account in the 
projective structure of spacetime. Instead, an affine structure of 
spacetime may once more be motivated by another contingent property 
of Nature, shown by the existence of elementary clocks (atomic 
frequencies) which do define the same uniformity structure on inertial 
world lines---all of them! Once more this is a highly non-trivial and 
very kind feature of Nature. In this way we would indeed arrive at the 
statement that spacetime is an affine space. However, as we shall 
discuss in this subsection, the affine group already emerges as 
automorphism group of inertial structures without the introduction 
of elementary clocks. 

First we recall the main theorem  of affine geometry. For that we 
make the following 
\begin{definition}
\label{def:Collineation}
Three points in an affine space are called \textbf{collinear}
iff they are contained in a single line. A map between 
affine spaces is called a \textbf{collineation} iff it maps 
each triple of collinear points to collinear points.
\end{definition}
Note that in this definition no other condition is 
required of the map, like, e.g., injectivity. The main 
theorem now reads as follows:
\begin{theorem}
\label{thm:CollAreAffine}
A bijective collineation of a real affine space of 
dimension $n\geq 2$ is necessarily an affine map. 
\end{theorem}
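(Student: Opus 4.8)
The plan is to prove this Fundamental Theorem of Affine Geometry by reducing it to a statement about the underlying vector space $V$ and then recovering, purely from the incidence structure, first vector addition and then scalar multiplication up to a field automorphism of $\reals$. First I would normalize: fix a point $o$ and use it to identify the affine space with $V$, so that $o$ becomes $0$. Since translations are affine bijective collineations, I may replace the given collineation $F$ by $t\circ F$ for a suitable translation $t$ and assume $f:=t\circ F$ satisfies $f(0)=0$; it then suffices to prove that this normalized $f$ is linear, for then $F=t^{-1}\circ f$ is affine. A preliminary lemma is that a bijective collineation carries each line \emph{onto} a line and carries parallel lines to parallel lines. That every collinear triple on a line maps into a common line is immediate, so $f(\ell)$ is contained in a unique line; the genuinely delicate point---which I expect to need the hypothesis $n\geq 2$---is surjectivity onto that line, equivalently that $f^{-1}$ is again a collineation. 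I would establish this by an incidence argument: in dimension $\geq 2$ a point off $\ell$ together with $\ell$ spans a plane, so any target point can be realized as an intersection of images of lines, forcing its preimage back onto $\ell$.

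Next I would recover addition. For linearly independent $x,y$ the vector $x+y$ is the unique intersection of the line through $x$ parallel to $\reals y$ with the line through $y$ parallel to $\reals x$, i.e. the fourth vertex of the parallelogram on $0,x,y$. Because $f$ preserves lines and parallelism and fixes $0$, it maps this parallelogram to the parallelogram on $0,f(x),f(y)$, whence $f(x+y)=f(x)+f(y)$. The degenerate case of collinear $x,y$ is handled by introducing an auxiliary vector off their common line---available precisely because $n\geq 2$---and applying the independent case twice. This yields additivity of $f$ on all of $V$.

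It remains to handle scalars. Fixing a nonzero $v$, additivity together with collinearity forces $f(\reals v)=\reals f(v)$ and hence defines a map $\sigma:\reals\rightarrow\reals$ by $f(\lambda v)=\sigma(\lambda)\,f(v)$. Using additive transport between distinct directions (again requiring $n\geq 2$) one shows that $\sigma$ is independent of the chosen $v$, and that $\sigma$ is a \emph{field automorphism} of $\reals$: its additivity, $\sigma(\lambda+\mu)=\sigma(\lambda)+\sigma(\mu)$, descends directly from the additivity of $f$, while its multiplicativity, $\sigma(\lambda\mu)=\sigma(\lambda)\sigma(\mu)$, must be extracted from a synthetic, Desargues-type construction realizing the product of two scalars through line incidences alone. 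This semilinearity step---pinning down $\sigma$ and, above all, proving it multiplicative---is the main obstacle, and it is where the affine (as opposed to merely projective) structure and the dimension hypothesis are genuinely used.

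Finally I would invoke that $\reals$ admits no field automorphism besides the identity: any such $\sigma$ fixes $\rationals$ and sends squares to squares, hence maps nonnegative numbers to nonnegative numbers and is therefore order-preserving; an order-preserving self-map of $\reals$ fixing the dense subfield $\rationals$ must equal $\Identity$. Thus $\sigma=\Identity$, so $f(\lambda x)=\lambda f(x)$, and combined with additivity this makes $f$ linear. Undoing the initial translation, the original $F$ is affine, which completes the proof.
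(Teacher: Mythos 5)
Your proposal is correct and follows essentially the same route as the proof the paper relies on (the paper itself gives no proof but cites Berger, \emph{Geometry}, vol.~I): normalisation by a translation, lines mapped onto lines with parallelism preserved, additivity via the parallelogram construction, semilinearity with respect to a field automorphism $\sigma$ of $\reals$, and finally triviality of that automorphism---precisely the ``crucial property'' of the real field that the paper's own commentary singles out. The only inessential overstatement is your claim that multiplicativity of $\sigma$ requires a separate Desargues-type construction: once $\sigma$ is shown independent of the direction (your ``additive transport'' step), multiplicativity follows in one line from $f(\lambda\mu v)=\sigma(\lambda)f(\mu v)=\sigma(\lambda)\sigma(\mu)f(v)$.
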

A proof may be found in \cite{Berger:Geometry1}. That the 
theorem is non-trivial can, e.g., be seen from the fact 
that it is not true for complex affine spaces. The crucial 
property of the real number field is that it does 
not allow for a non-trivial automorphisms (as field).   

A particular consequence of Theorem\,\ref{thm:CollAreAffine}
is that bijective collineation are necessarily continuous 
(in the natural topology of affine space). This is of 
interest for the applications we have in mind for the 
following reason: Consider the set $P$ of all lines in 
some affine space $S$. $P$ has a natural topology induced 
from $S$. Theorem\,\ref{thm:CollAreAffine} now implies that 
bijective collineations of $S$ act as homeomorphism of $P$. 
Consider an open subset $\Omega\subset P$ and the subset of all 
collineations that fix $\Omega$ (as set, not necessarily its 
points). Then these collineations also fix the boundary 
$\partial\Omega$ of $\Omega$ in $P$. For example, if 
$\Omega$ is the set of all timelike lines in Minkowski 
space, i.e., with a slope less than some chosen value 
relative to some fixed direction, then it follows that 
the bijective collineations which together with their 
inverse map timelike lines to timelike lines also maps the 
lightcone to the lightcone. It immediately follows that 
it must be the composition of a Poincar\'e transformation
as a constant dilation. Note that this argument also works 
in two spacetime dimensions, where the Alexandrov-Zeeman 
result does not hold. 

The application we have in mind is to inertial motions, 
which are given by lines in affine space. In that respect 
Theorem\,\ref{thm:CollAreAffine} is not quite appropriate. 
Its hypotheses are weaker than needed, insofar as it would 
suffice to require straight lines to be mapped to straight 
lines. But, more importantly, the hypotheses are also stronger 
than what seems physically justifiable, insofar as not every line 
is realisable by an inertial motion.  In particular, one would 
like to know whether Theorem\,\ref{thm:CollAreAffine} can still be 
derived by restricting to \emph{slow collineations}, which one 
may define by the property that the corresponding lines should 
have a slope less than some non-zero angle (in whatever measure,
as long as the set of slow lines is open in the set of all lines) 
from a given (time-)direction. This is indeed the case, as one 
may show from going through the proof of 
Theorem\,\ref{thm:CollAreAffine}. Slightly easier to prove is the 
following: 
\begin{theorem}
\label{thm:TimelikeCollAreAffine}
Let $F$ be a bijection of real $n$-dimensional affine space 
that maps slow lines to slow lines, then $F$ is an affine map. 
\end{theorem}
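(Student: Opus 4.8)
The plan is to upgrade $F$ to a genuine \emph{collineation} of the whole affine space---one that maps \emph{every} line, slow or not, into a line---and then to invoke Theorem~\ref{thm:CollAreAffine}, which turns any bijective collineation into an affine map. The entire difficulty is thus to recover the images of fast (non-slow) lines, given only that slow lines are preserved. The basic tool is the following lemma: if $\pi$ is a \emph{timelike} $2$-plane (one whose plane of directions contains an open arc of slow directions), then $F(\pi)$ is contained in a $2$-plane. To see this, pick two slow lines $\ell_1,\ell_2\subset\pi$ meeting at a point $p$; their images are slow lines meeting at $F(p)$ and spanning a plane $\pi'$. Any $x\in\pi$ lies on some slow line $m\subset\pi$ whose direction is chosen distinct from those of $\ell_1,\ell_2$ (there is room to do so, the slow directions forming an open arc), so that $m$ meets $\ell_1$ and $\ell_2$ in two distinct points $u,w$. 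Then $F(m)$ is the slow line through the distinct points $F(u),F(w)\in\pi'$, whence $F(x)\in F(m)\subset\pi'$.

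For $n\geq 3$ this lemma already yields collinearity of arbitrary triples. Given any line $\ell$, even a spacelike one with direction $v$, and any slow direction $u$, the plane through $\ell$ with direction space $\Span\{v,u\}$ is timelike; choosing two different slow directions $u_1\neq u_2$ exhibits $\ell=\pi_1\cap\pi_2$ as the intersection of two distinct timelike planes. By the lemma $F(\pi_i)\subset\pi_i'$ for planes $\pi_i'$, so $F(\ell)\subset\pi_1'\cap\pi_2'$. The two image planes cannot coincide: if every timelike plane containing $\ell$ mapped into one fixed plane $\pi'$, then the union of all such planes---a set of dimension greater than two---would be injected into $\pi'$, contradicting injectivity of $F$. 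Hence $\pi_1'\neq\pi_2'$ are distinct planes sharing the two distinct points $F(a),F(b)$ for $a,b\in\ell$, so their intersection is exactly the line through $F(a),F(b)$, on which every image $F(c)$ of a point $c\in\ell$ must lie. Thus $F$ is a collineation and Theorem~\ref{thm:CollAreAffine} applies.

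The genuinely harder case is $n=2$, where the intersection trick is unavailable because the whole space is a single timelike plane; here one must reconstruct the affine structure directly from slow lines. I would fix an origin $o$, set $f(v):=F(o+v)-F(o)$, and proceed as in the standard proof of the fundamental theorem, subject to the proviso that every auxiliary line is slow. First, parallelism of slow lines is preserved (coplanar disjoint slow lines stay coplanar by the lemma and stay disjoint by injectivity, hence stay parallel), so a parallelogram with slow sides maps to a parallelogram, giving $f(v+w)=f(v)+f(w)$ whenever $v,w$ are slow and independent; since the open slow cone spans the vector space, additivity then extends to all vectors. Next, the intercept-theorem configuration, again realizable entirely inside the open slow cone, yields $f(\lambda v)=\sigma(\lambda)f(v)$ for a field automorphism $\sigma$ of $\reals$, and the absence of nontrivial field automorphisms of $\reals$---the same fact underlying Theorem~\ref{thm:CollAreAffine}---forces $\sigma=\mathrm{id}$ and hence $\reals$-linearity of $f$. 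The main obstacle is precisely this last case: checking that each step of the classical reconstruction (parallelogram addition, intercept-theorem homogeneity, and the passage from local additivity to additivity on all of the vector space) can be performed without ever leaving the slow cone. That the slow directions form an \emph{open} set is what guarantees enough room for every such construction, and is the crux of the argument.
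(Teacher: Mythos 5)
Your overall architecture is viable and close in spirit to the argument the paper defers to (the paper gives no proof of this theorem, citing \cite{Goldstein:2007}): the plane-lemma for timelike $2$-planes, the representation of an arbitrary line as $\pi_1\cap\pi_2$ for two timelike planes when $n\geq 3$, and the final reduction to Theorem~\ref{thm:CollAreAffine} are all sound, and your $n=2$ sketch is the standard von-Staudt-style reconstruction correctly confined to the slow cone. But one step is genuinely broken: your proof that the image planes $\pi_1',\pi_2'$ are distinct. You argue that otherwise a set ``of dimension greater than two'' would be injected into the plane $\pi'$, ``contradicting injectivity of $F$''. Injectivity contradicts nothing here: $\reals^3$ and $\reals^2$ are equinumerous, so a bijection of affine space subject to no continuity assumption whatsoever can perfectly well inject a full-dimensional set into a plane, its complement still mapping bijectively onto the rest. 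Appealing to dimension against a raw bijection is exactly the kind of reasoning this circle of theorems is designed to do without---recall that in this context (as the paper stresses in Section~\ref{sec:ImpactOfRelativity}, and as the Hamel-basis footnote in Appendix~\ref{sec:AffineMaps} illustrates) continuity is a \emph{conclusion}, never a permissible hypothesis.

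The gap is reparable with tools you already have, by trading injectivity-plus-dimension for surjectivity-plus-propagation. Suppose every timelike plane containing $\ell$ had image inside one fixed plane $\pi'$. Given an arbitrary point $z$, work inside a three-dimensional affine subspace containing $\ell$ and $z$ whose direction space contains a slow direction (hence, slowness being open, an open set of slow directions, which cannot lie in a single plane through the direction $v$ of $\ell$). Inside it one can choose a slow line through $z$ meeting two of the timelike planes through $\ell$ in two distinct points $a_1\neq a_2$ off $\ell$, since the excluded directions form a nowhere dense set. Then $F(a_1)\neq F(a_2)$ both lie in $\pi'$, so the image of that slow line---a line containing two distinct points of $\pi'$---lies in $\pi'$, whence $F(z)\in\pi'$. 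As $z$ was arbitrary, $F$ maps everything into $\pi'$, contradicting surjectivity. With this repaired, the rest goes through: distinct planes sharing the two points $F(a),F(b)$ intersect exactly in the line through them, so $F$ is a collineation and Theorem~\ref{thm:CollAreAffine} applies. Two smaller points deserve to be made explicit: that distinct concurrent slow lines have distinct images requires $F$ to map slow lines \emph{onto} slow lines combined with injectivity (with mere containment the claim fails); and in the $n=2$ case the extension of additivity beyond slow pairs needs an adaptive decomposition---for arbitrary $x$ and slow $c$, choose $x=a+b$ with $a$, $b$, and $b+c$ simultaneously slow, which openness of the slow cone supplies---rather than the bare remark that the slow cone spans the vector space.
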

A proof may be found in \cite{Goldstein:2007}.
If `slowness' is defined via the lightcone of a Minkowski 
metric $g$, one immediately obtains the result that the 
affine maps must be composed from Poincar\'e transformations 
and dilations. The reason is 
\begin{lemma}
\label{thm:SameKernelPropto}
Let $V$ be a finite dimensional real vector space of dimension 
$n\geq 2$ and $g$ be a non-degenerate symmetric bilinear form 
on $V$ of signature $(1,n-1)$. Let $h$ be any other 
symmetric bilinear form on $V$. The `light cones' for both
forms are defined by $\mathcal{L}_g:=\{v\in V\mid g(v,v)=0\}$
and $\mathcal{L}_h:=\{v\in V\mid h(v,v)=0\}$. Suppose  
$\mathcal{L}_g\subseteq\mathcal{L}_h$, then $h=\alpha\,g$ for 
some $\alpha\in\reals$.
\end{lemma}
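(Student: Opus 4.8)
The plan is to work in a $g$-orthonormal basis and pin down the matrix of $h$ entry by entry by feeding it a short list of cleverly chosen $g$-null vectors. By Definition~\ref{def:MinkowskiSpace} we may fix a basis $\{e_0,e_1,\dots,e_{n-1}\}$ of $V$ with $g(e_a,e_b)=\mathrm{diag}(1,-1,\dots,-1)$, and write $H_{ab}:=h(e_a,e_b)$ for the symmetric matrix of $h$ in this basis. The goal is then to show that $H_{ab}$ equals $\alpha$ times the corresponding entry of $g$, with $\alpha:=H_{00}$, which is exactly the assertion $h=\alpha g$.

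First I would exploit the null vectors $e_0\pm e_i$ for each $i\in\{1,\dots,n-1\}$. Since $g(e_0\pm e_i,\,e_0\pm e_i)=1-1=0$, these lie in $\mathcal{L}_g\subseteq\mathcal{L}_h$, so the hypothesis forces $h(e_0\pm e_i,\,e_0\pm e_i)=H_{00}+H_{ii}\pm 2H_{0i}=0$. Adding and subtracting the two sign choices yields $H_{ii}=-H_{00}$ and $H_{0i}=0$ for every $i\geq 1$. This already settles the case $n=2$ completely, since there the only independent entries are $H_{00}$, $H_{11}=-H_{00}$ and $H_{01}=0$.

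For $n\geq 3$ the only entries left undetermined are the spatial off-diagonal ones $H_{ij}$ with $1\leq i<j$. Here I would use the vectors $\sqrt{2}\,e_0+e_i+e_j$, which satisfy $g(\cdot,\cdot)=2-1-1=0$ and hence are again $h$-null. Expanding $h(\cdot,\cdot)$ and substituting the relations $H_{0i}=H_{0j}=0$ and $H_{ii}=H_{jj}=-H_{00}$ from the previous step collapses the expression to $2H_{ij}=0$. Thus $H_{ij}=0$ for all distinct spatial indices, and together with the first step every matrix entry of $h$ equals $\alpha$ times that of $g$, which finishes the argument.

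I do not expect a genuine obstacle here: the whole thing is a finite linear-algebra computation, and the only point to watch is the bookkeeping, namely confirming that the chosen null vectors really exhaust all $n(n+1)/2$ independent entries of the symmetric matrix $H$, and that the dimension hypothesis enters only through the need for two distinct spatial directions in the final step (which is both vacuous and unnecessary when $n=2$). A more conceptual alternative would be to note that the degree-two form $v\mapsto h(v,v)$ vanishes on the real quadric $\{v\in V\mid g(v,v)=0\}$, which is Zariski-dense in its complexification precisely because $g$ is indefinite and non-degenerate, so that $g(v,v)$ must divide $h(v,v)$ as polynomials and hence $h=\alpha g$; but that route requires a density argument, whereas the explicit computation above is elementary and self-contained.
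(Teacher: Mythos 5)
Your proof is correct and is essentially identical to the paper's own argument: the same $g$-orthonormal basis, the same null vectors $e_0\pm e_i$ giving $h_{0i}=0$ and $h_{ii}=-h_{00}$, and the same vectors $\sqrt{2}\,e_0+e_i+e_j$ killing the spatial off-diagonal entries. Your closing remark about the polynomial-divisibility alternative is a nice observation, but the computational route you carried out is exactly the one in the paper.
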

\begin{proof}
Let $\{e_0,e_1,\cdots ,e_{n-1}\}$ be a basis of $V$ such that 
$g_{ab}:=g(e_a,e_b)=\mathrm{diag}(1,-1,\cdots,-1)$. Then 
$(e_0\pm e_a)\in\mathcal{L}_g$ for $1\leq a\leq n-1$ implies
(we write $h_{ab}:=h(e_a,e_b)$): 
$h_{0a}=0$ and $h_{00}+h_{aa}=0$. Further, 
$(\sqrt{2}e_0+e_a+e_b)\in\mathcal{L}_g$ for $1\leq a<b\leq n-1$
then implies $h_{ab}=0$ for $a\ne b$. Hence $h=\alpha\,g$ with 
$\alpha=h_{00}$.
\end{proof}
This can be applied as follows: If  $F:S\rightarrow S$ is affine 
and maps lightlike lines to lightlike lines, then the associated 
linear map $f:V\rightarrow V$ maps lightlike vectors to lightlike 
vectors. Hence $h(v,v):=g(f(v),f(v))$ vanishes if $g(v,v)$ vanishes 
and therefore $h=\alpha g$ by Lemma\,\ref{thm:SameKernelPropto}. 
Since $f(v)$ is timelike if $v$ is timelike, $\alpha$ is positive. 
Hence we may define $f':=f/\sqrt{\alpha}$ and have 
$g(f'(v),f'(v))=g(v,v)$ for all $v\in v$, saying that $f'$ is a 
Lorentz transformation. $f$ is the composition of a Lorentz 
transformation and a dilation by $\sqrt{\alpha}$.

\subsection{The impact of relativity}
\label{sec:ImpactOfRelativity}
As is well known, the two main ingredients in Special Relativity 
are the Principle of Relativity (henceforth abbreviated by PR)
and the principle of the constancy of light. We have seen above 
that, due to Alexandrov's Theorem, the latter almost suffices to 
arrive at the Poincar\'e group. In this section we wish to address
the complementary question: Under what conditions and to what extent 
can the RP \emph{alone} justify the Poincar\'e group? 

This question was first addressed by Ignatowsky~\cite{Ignatowsky:1910}, 
who showed that under a certain set of technical assumptions (not 
consistently spelled out by him) the RP alone suffices to arrive at 
a spacetime symmetry group which is either the inhomogeneous Galilei 
or the inhomogeneous Lorentz group, the latter for some yet undetermined 
limiting velocity $c$. 

More precisely, what is actually shown in this fashion is, as we will 
see, that the relativity group must contain either the proper 
orthochronous Galilei or Lorentz group, if the group is required to 
comprise at least spacetime translations, spatial rotations, and 
boosts (velocity transformations). What we hence gain is the 
group-theoretic insight of how these transformations must combine 
into a common group, given that they form a group at all. We do 
not learn anything about other transformations, like spacetime 
reflections or dilations, whose existence we neither required nor 
ruled out at this level. 

The work of Ignatowsky was put into a logically more coherent form
by Franck \& Rothe~\cite{Frank.Rothe:1911}\cite{Frank.Rothe:1912}, 
who showed that some of the technical assumptions could be dropped. 
Further formal simplifications were achieved by Berzi \& Gorini 
\cite{Berzi.Gorini:1969}. Below we shall basically follow their line 
of reasoning, except that we do not impose the continuity of the 
transformations as a requirement, but conclude it from their 
preservation of the inertial structure plus bijectivity. See also 
\cite{Bacry.Levy-Leblond:1968} for an alternative discussion on 
the level of Lie algebras.

For further determination of the automorphism group of spacetime 
we invoke the following principles:
\begin{itemize}
\item[ST1:]\hspace{0.6cm}
Homogeneity of spacetime.
\item[ST2:]\hspace{0.6cm}
Isotropy of space.
\item[ST3:]\hspace{0.6cm}
Galilean principle of relativity.
\end{itemize}
We take ST1 to mean that the sought-for group should include all 
translations and hence be a subgroup of the general affine group. 
With respect to some chosen basis, it must be of the form 
$\reals^4\rtimes\group{G}$, where $\group{G}$ is a subgroup of 
$\group{GL}(4,\reals)$. ST2 is interpreted as saying that $G$ should 
include the set of all spatial rotations. If, with respect to some 
frame, we write the general element $A\in\group{GL}(4,\reals)$ in 
a $1+3$ split form (thinking of the first coordinate as time, the 
other three as space), we want $\group{G}$ to include all       
\begin{equation}
\label{eq:SO3Imbedding}
R(\mat{D})=
\begin{pmatrix}
1&\vec 0^\top\\
\vec 0&\mat{D}
\end{pmatrix}\,,\qquad
\text{where}\quad
\mat{D}\in\group{SO}(3)\,.
\end{equation}
Finally, ST3 says that velocity transformations, henceforth 
called `boosts', are also contained in $\group{G}$. However, at this 
stage we do not know how boosts are to be represented mathematically. 
Let us make the following assumptions: 
\begin{itemize}
\item[B1:]\hspace{0.2cm}
Boosts $B(\vec v)$ are labelled by a vector 
$\vec v\in B_c(\reals^3)$, where $B_c(\reals^3)$ is the 
open ball in $\reals^3$ of radius $c$. The physical interpretation 
of $\vec v$ shall be that of the boost velocity, as measured in the system 
from which the transformation is carried out. We allow $c$ to be 
finite or infinite ($B_\infty(\reals^3)=\reals^3$). 
$\vec v=\vec 0$ corresponds to the identity transformation, 
i.e. $B(\vec 0)=\text{id}_{\reals^4}$. We also assume that 
$\vec v$, considered as coordinate function on the group, 
is continuous. 
\item[B2:]\hspace{0.2cm}
As part of ST2 we require equivariance of boosts under rotations:
\begin{equation}
\label{eq:EquivBoostsRot}
R(\mat{D})\cdot B(\vec v)\cdot R(\mat{D}^{-1})=
B(\mat{D}\cdot\vec v)\,.
\end{equation}
\end{itemize}
The latter assumption allows us to restrict attention to boost 
in a fixed direction, say  that of the positive $x$-axis. 
Once their analytical form is determined
as function of $v$, where $\vec v=v\vec e_x$, we deduce the 
general expression for boosts using (\ref{eq:EquivBoostsRot}) and 
(\ref{eq:SO3Imbedding}). We make no assumptions involving space 
reflections.\footnote{Some derivations in the literature of the 
Lorentz group do not state the equivariance property 
(\ref{eq:EquivBoostsRot}) explicitly, though they all use it
(implicitly), usually in statements to the effect that it is 
sufficient to consider boosts in one fixed direction. Once this 
restriction is effected, a one-dimensional spatial reflection 
transformation is considered to relate a boost transformation to 
that with opposite velocity. This then gives the impression that 
reflection equivariance is also invoked, though this is not 
necessary in spacetime dimensions greater than two, for 
(\ref{eq:EquivBoostsRot}) allows to invert one axis through a 
180-degree rotation about a perpendicular one.} 
We now restrict attention to $\vec v=v\vec e_x$. We wish to determine
the most general form of $B(\vec v)$ compatible with all requirements 
put so far. We proceed in several steps: 
\begin{enumerate}
\item
Using an arbitrary rotation $\mat{D}$ around the $x$-axis, 
so that $\mat{D}\cdot\vec v=\vec v$, equation (\ref{eq:EquivBoostsRot}) 
allows to prove that
\begin{equation}
\label{eq:FormBoost1}
B(v\vec e_x)=
\begin{pmatrix}
\mat{A}(v)&0\\
0&\alpha(v)\mat{1}_2
\end{pmatrix}\,,
\end{equation}
where here we wrote the $4\times 4$ matrix in a $2+2$ decomposed form. 
(i.e. $\mat{A}(v)$ is a $2\times 2$ matrix and $\mat{1}_2$ is the 
 $2\times 2$ unit-matrix). Applying (\ref{eq:EquivBoostsRot}) once 
more, this time using a $\pi$-rotation about the $y$-axis, we learn 
that $\alpha$ is an even function, i.e. 
\begin{equation}
\label{eq:FormBoosts2}
\alpha(v)=\alpha(-v)\,.
\end{equation}
Below we will see that $\alpha(v)\equiv 1$. 

\item
Let us now focus on $\mat{A }(v)$, which defines the action of 
the boost in the $t-x$ plane. We write 
\begin{equation}
\label{eq:FormBoost3}
\begin{pmatrix}t\\x\end{pmatrix}
\mapsto
\begin{pmatrix}t'\\x'\end{pmatrix}
=\mat{A}(v)\cdot\begin{pmatrix}t\\x\end{pmatrix}=
\begin{pmatrix}a(v)&b(v)\\c(v)&d(v)\end{pmatrix}
\cdot\begin{pmatrix}t\\x\end{pmatrix}\,.
\end{equation}
We refer to the system with coordinates $(t,x)$ as $K$ and that 
with coordinates $(t',x')$ as $K'$. From (\ref{eq:FormBoost3})
and the inverse (which is elementary to compute) one infers that 
the velocity $v$ of $K'$ with respect to $K$ and the velocity $v'$ 
of $K$ with respect to $K'$ are given by 
\begin{subequations}
\label{eq:FormBoost4}
\begin{alignat}{3}
\label{eq:FormBoost4a}     
& v\,&&=\,&&-\,c(v)/d(v)\,,\\
\label{eq:FormBoost4b} 
& v'\,&&=\,&&-\,v\,d(v)/a(v)\,=:\,\varphi(v)\,.
\end{alignat}
\end{subequations}
Since the transformation $K'\rightarrow K$ is the inverse of 
$K\rightarrow K'$, the function $\varphi:(-c,c)\rightarrow (-c,c)$ 
obeys 
\begin{equation}
\label{eq:FormBoost5}
\mat{A}(\varphi(v))=(\mat{A}(v))^{-1}\,.
\end{equation}
Hence $\varphi$ is a bijection of the open interval $(-c,c)$ 
onto itself and obeys 
\begin{equation}
\label{eq:FormBoost6}
\varphi\circ\varphi=\text{id}_{(-c,c)}\,.
\end{equation}

\item
Next we determine $\varphi$. Once more using (\ref{eq:EquivBoostsRot}),
where $\mat{D}$ is a $\pi$-rotation about the $y$-axis, shows that 
the functions $a$ and $d$ in (\ref{eq:FormBoost1}) are even and the 
functions $b$ and $c$ are odd. The definition (\ref{eq:FormBoost4b}) 
of $\varphi$ then implies that $\varphi$ is odd. Since we assumed 
$\vec v$ to be a continuous coordinatisation of a topological group,
the map $\varphi$ must also be continuous (since the inversion 
map, $g\mapsto g^{-1}$, is continuous in a topological group). 
A standard theorem now states that a continuous bijection of an 
interval of $\reals$ onto itself must be strictly monotonic. 
Together with (\ref{eq:FormBoost6}) this implies that $\varphi$ is 
either the identity or minus the identity map.\footnote{The simple 
proof is as follows, where we write $v':=\varphi(v)$ to save notation, 
so that (\ref{eq:FormBoost6}) now reads $v''=v$. First assume that 
$\varphi$ is strictly monotonically increasing, then $v'>v$ implies 
$v=v''>v'$, a contradiction, and $v'<v$ implies $v=v''<v'$, likewise 
a contradiction. Hence $\varphi=\text{id}$ in this case. Next assume 
$\varphi$ is strictly monotonically decreasing. Then 
$\tilde\varphi:=-\varphi$ is a strictly monotonically increasing map 
of the interval $(-c,c)$ to itself that obeys (\ref{eq:FormBoost6}). 
Hence, as just seen, $\tilde\varphi=\text{id}$, i.e. $\varphi=-\text{id}$.} If it is the 
identity map, evaluation of (\ref{eq:FormBoost5}) shows that either 
the determinant of $\mat{A}(v)$ must equals $-1$, or that $\mat{A}(v)$ 
is the identity for all $\vec v$. We exclude the second possibility 
straightaway and the first one on the grounds that we required 
$\mat{A}(v)$ be the identity for $v=0$. Also, in that case, 
(\ref{eq:FormBoost5}) implies $A^2(v)=\text{id}$ for all $v\in(-c,c)$. 
We conclude that $\varphi=-\text{id}$, which implies that the relative 
velocity of $K$ with respect to $K'$ is minus the relative velocity 
of $K'$ with respect to $K$. Plausible as it might seem, there is 
no a priori reason why this should be so.\footnote{Note that $v$ and $v'$ 
are measured with different sets of rods and clocks.}. On the face of it,
the RP only implies (\ref{eq:FormBoost6}), not the stronger relation 
$\varphi(v)=-v$. This was first pointed out in~\cite{Berzi.Gorini:1969}. 

\item
We briefly revisit (\ref{eq:FormBoosts2}). Since we have seen 
that $B(-v\vec e_x)$ is the inverse of $B(v\vec e_x)$, we must have   
$\alpha(-v)=1/\alpha(v)$, so that (\ref{eq:FormBoosts2})
implies $\alpha(v)\equiv\pm 1$. But only $\alpha(v)\equiv +1$ is 
compatible with our requirement that $B(\vec 0)$ be the identity. 

\item   
Now we return to the determination of $\mat{A}(v)$. 
Using (\ref{eq:FormBoost4}) and $\varphi=-\text{id}$, we write 
\begin{equation}
\label{eq:FormBoost6b}
\mat{A}(v)=
\begin{pmatrix}
a(v)&b(v)\\
-va(v)&a(v)
\end{pmatrix}
\end{equation}
and 
\begin{equation}
\label{eq:FormBoost7}
\Delta(v):=\det\bigl(\mat{A}(v)\bigr)
=a(v)\bigl[a(v)+vb(v)\bigr]\,.
\end{equation}
Equation $\mat{A}(-v)=(\mat{A}(v))^{-1}$ is now equivalent to 
\begin{subequations}
\label{eq:FormBoost8}
\begin{alignat}{2}
\label{eq:FormBoost8a}
& a(-v)&&\,=\,a(v)/\Delta(v)\,,\\
\label{eq:FormBoost8b}
& b(-v)&&\,=\,-\,b(v)/\Delta(v)\,.
\end{alignat}
\end{subequations}
Since, as already seen, $a$ is an even and $b$ is an odd function,
(\ref{eq:FormBoost8}) is equivalent to $\Delta(v)\equiv 1$, i.e. 
the unimodularity of $B(\vec v)$. Equation (\ref{eq:FormBoost7})
then allows to express $b$ in terms of $a$:
\begin{equation}
\label{eq:FormBoost9}
b(v)=\frac{a(v)}{v}\left[\frac{1}{a^2(v)}-1\right]\,.
\end{equation}

\item
Our problem is now reduced to the determination of the single 
function $a$. This we achieve by employing the requirement that 
the composition of two boosts in the same direction results again 
in a boost in that direction, i.e. 
\begin{equation}
\label{eq:FormBoost10}
\mat{A}(v)\cdot\mat{A}(v')=\mat{A}(v'')\,.
\end{equation}
According to (\ref{eq:FormBoost6b}) each matrix $\mat{A}(v)$ has equal 
diagonal entries. Applied to the product matrix  on the left hand side 
of (\ref{eq:FormBoost10}) this implies that $v^{-2}(a^{-2}(v)-1)$ is 
independent of $v$, i.e. equal to some constant $k$ whose physical 
dimension is that of an inverse velocity squared. Hence we have
\begin{equation}
\label{eq:FormBoost11}
a(v)=\frac{1}{\sqrt{1+kv^2}}\,,
\end{equation}
where we have chosen the positive square root since we require 
$a(0)=1$. The other implications of (\ref{eq:FormBoost10}) are 
\begin{subequations}
\label{eq:FormBoost12}
\begin{alignat}{2}
\label{eq:FormBoost12a}
& a(v)a(v')(1-kvv')&&\,=\,a(v'')\,,\\
\label{eq:FormBoost12b}
& a(v)a(v')(1+vv')&&\,=\,v''a(v'')\,,
\end{alignat}
\end{subequations}
from which we deduce 
\begin{equation}
\label{eq:FormBoost13}
v''=\frac{v+v'}{1-kvv'}\,.
\end{equation}
Conversely, (\ref{eq:FormBoost11}) and (\ref{eq:FormBoost13})
imply (\ref{eq:FormBoost12}). We conclude that 
(\ref{eq:FormBoost10}) is equivalent to (\ref{eq:FormBoost11}) 
and (\ref{eq:FormBoost13}).  

\item
So far a boost in $x$ direction has been shown to act non-trivially 
only in the $t-x$ plane, where its action is given by the matrix that 
results from inserting (\ref{eq:FormBoost9}) and (\ref{eq:FormBoost11})
into (\ref{eq:FormBoost6b}): 
\begin{equation}
\label{eq:FormBoost14}
\mat{A}(v)=\begin{pmatrix}
a(v)&kv\,a(v)\\
-v\,a(v)& a(v)\,,
\end{pmatrix}
\qquad\text{where}\quad
a(v)=1/\sqrt{1+kv^2}\,.
\end{equation}
\begin{itemize}
\item
If $k>0$ we rescale $t\mapsto \tau:= t/\sqrt{k}$ and set 
$\sqrt{k}\,v:=\tan\alpha$. Then (\ref{eq:FormBoost14}) is seen to 
be a Euclidean rotation with angle $\alpha$ in the $\tau-x$ 
plane. The velocity spectrum is the whole real line plus 
infinity, i.e. a circle, corresponding to $\alpha\in[0,2\pi]$, 
where $0$ and $2\pi$ are identified. Accordingly, the composition 
law (\ref{eq:FormBoost13}) is just ordinary addition for the angle 
$\alpha$. This causes several paradoxa when $v$ is interpreted 
as velocity. For example, composing two finite 
velocities $v,v'$ which satisfy $vv'=1/k$ results in  
$v''=\infty$, and composing two finite and positive velocities, 
each of which is greater than $1/\sqrt{k}$, results in a finite 
but negative velocity. In this way the successive composition 
of finite positive velocities could also result in zero velocity. 
The group $\group{G}\subset\group{GL}(n,\reals)$ obtained in 
this fashion is, in fact, $\group{SO}(4)$. This group may be 
uniquely characterised as the largest connected group of bijections 
of $\reals^4$ that preserves the Euclidean distance measure. 
In particular, it treats time symmetrically with all space directions,
so that no invariant notion of time-orientability can be given in 
this case.  
\item
For $k=0$ the transformations are just the ordinary boosts of 
the Galilei group. The velocity spectrum is the whole real 
line (i.e. $v$ is unbounded but finite) and $\group{G}$ is the 
Galilei group. The law for composing velocities is just 
ordinary vector addition. 
\item
Finally, for $k<0$, one infers from (\ref{eq:FormBoost13})
that $c:=1/\sqrt{-k}$ is an upper bound for all velocities, 
in the sense that composing two velocities taken from the
interval $(-c,c)$ always results in a velocity from within 
that interval. Writing $\tau:=ct$, $v/c=:\beta=:\tanh\rho$, 
and $\gamma=1/\sqrt{1-\beta^2}$, the matrix (\ref{eq:FormBoost14}) 
is seen to be a \emph{Lorentz boost} or \emph{hyperbolic motion} in the 
$\tau-x$ plane:
\begin{equation}
\label{eq:FormBoost15}
\begin{pmatrix}\tau\\ x\end{pmatrix}
\mapsto
\begin{pmatrix}
\gamma&-\beta\gamma\\
-\beta\gamma&\gamma
\end{pmatrix}\cdot
\begin{pmatrix}\tau\\ x\end{pmatrix}=
\begin{pmatrix}
\cosh\rho&-\sinh\rho\\
-\sinh\rho&\cosh\rho
\end{pmatrix}\cdot
\begin{pmatrix}\tau\\ x\end{pmatrix}\,.
\end{equation}
The quantity 
\begin{equation}
\label{eq:def:Rapidity}
\rho:=\tanh^{-1}(v/c)=\tanh^{-1}(\beta)
\end{equation}
is called \emph{rapidity}\footnote{This term was coined by 
Robb~\cite{Robb:OptGeom}, but the quantity was used before by 
others; compare \cite{Varicak:1912}.}. If rewritten in terms of 
the corresponding rapidities the composition law 
(\ref{eq:FormBoost13}) reduces to ordinary addition: 
$\rho''=\rho+\rho'$.  
\end{itemize} 
\end{enumerate}

This shows that only the Galilei and the Lorentz group 
survive as candidates for any symmetry group implementing the 
RP. Once the Lorentz group for velocity parameter $c$ is 
chosen, one may fully characterise it by its property to 
leave a certain symmetric bilinear form invariant.
In this sense we geometric structure of Minkowski space can 
be deduced. This closes the circle to where we started from 
in Section\,\ref{sec:FromMetricToAffine}.

\subsection{Local versions}
\label{sec:LocallyAffineMaps}
In the previous sections we always understood an automorphisms of a 
structured set (spacetime) as a bijection. Mathematically this seems
an obvious requirement, but from a physical point of view this is 
less clear. The physical law of inertia provides us with distinguished 
motions \emph{locally} in space and time. Hence one may attempt to relax 
the condition for structure preserving maps, so as to only preserve 
inertial motions \emph{locally}. Hence we ask the following question:
What are the most general maps that \emph{locally} map segments of 
straight lines to segments of straight lines? This local approach has 
been pursued by~\cite{Fock:STG}.  

To answer this question completely, let us (locally) identify spacetime 
with $\reals^n$ where $n\geq 2$ and assume the map to be $C^3$, that is, 
three times continuously differentiable.\footnote{This requirement 
distinguishes the present (local) from the previous (global) approaches, 
in which not even continuity needed to be assumed.} So let 
$U\subseteq\reals^n$ be an open subset and determine all $C^3$ maps 
$f:U\rightarrow\reals^n$ that map straight segments in $U$ into straight 
segments in $\reals^n$. In coordinates we write $x=(x^1,\cdots,x^n)\in U$ 
and $y=(y^1,\cdots,y^n)\in f(U)\subseteq\reals^n$, so that 
$y^\mu:=f^\mu(x)$. A straight segment in $U$ is a curve 
$\gamma:I\rightarrow U$ (the open interval $I\subseteq\reals$
is usually taken to contain zero) whose acceleration is pointwise 
proportional to its velocity. This is equivalent to saying that 
it can be parametrised so as to have zero acceleration, i.e., 
$\gamma(s)=as+b$ for some $a,b\in\reals^n$. 

For the image path $f\circ\gamma$ to be again straight its 
acceleration, $(f''\circ\gamma)(a,a)$, must be proportional to
its velocity, $(f'\circ\gamma)(a)$, where the factor of 
proportionality, $C$, depends on the point of the path and 
separately on $a$. Hence, in coordinates, we have 
\begin{equation}
\label{eq:ProjMap1}
f^\mu_{,\lambda\sigma}(as+b)a^\lambda a^\sigma
=f^\mu_{,\nu}(as+b)a^\nu\,C(as+b,a)
\end{equation}
For each $b$ this must be valid for all $(a,s)$ in a neighbourhood 
of zero in $\reals^n\times\reals$. Taking the second derivatives 
with respect to $a$, evaluation at $a=0$, $s=0$ leads to 
\begin{subequations}
\label{eq:ProjMap2}
\begin{equation}
\label{eq:ProjMap2-a}
f^\mu_{,\lambda\sigma}=\Gamma^\nu_{\lambda\sigma}f^\mu_{,\nu}\,,
\end{equation}
where
\begin{alignat}{2}
\label{eq:ProjMap2-b}
&\Gamma^\nu_{\lambda\sigma}&&\,:=\,
\delta^\nu_\lambda\psi_\sigma+
\delta^\nu_\sigma\psi_\lambda\\
\label{eq:ProjMap2-c}
&\psi_\sigma &&\,:=\,
\frac{\partial C(\cdot,a)}{\partial a^\sigma}\bigg\vert_{a=0}
\end{alignat}
\end{subequations}
Here we suppressed the remaining argument $b$. Equation (\ref{eq:ProjMap2})
is valid at each point in $U$. Integrability of (\ref{eq:ProjMap2-a})
requires that its further differentiation is totally symmetric with 
respect to all lower indices (here we use that the map $f$ is $C^3$).
This leads to 
\begin{equation}
\label{eq:ProjMap3}
R^\mu_{\phantom{\mu}\alpha\beta\gamma}:=
\partial_\beta\Gamma_{\alpha\gamma}^\mu+
\Gamma^\nu_{\sigma\beta}\Gamma^\sigma_{\alpha\gamma}
\,-\,(\beta\leftrightarrow\gamma)\,=\,0\,.
\end{equation}
Inserting (\ref{eq:ProjMap2-b}) one can show (upon taking traces over 
$\mu\alpha$ and $\mu\gamma$)  that the resulting 
equation is equivalent to 
\begin{equation}
\label{eq:ProjMap4}
\psi_{\alpha,\beta}=\psi_\alpha\psi_\beta\,.
\end{equation}
In particular $\psi_{\alpha,\beta}=\psi_{\beta,\alpha}$ so that there 
is a local function $\psi:U\rightarrow\reals$ (if $U$ is simply connected, 
as we shall assume) for which $\psi_\alpha=\psi_{,\alpha}$. 
Equation (\ref{eq:ProjMap4}) is then 
equivalent to $\partial_\alpha\partial_\beta\exp(-\psi)=0$ so that 
$\psi(x)=-\ln(p\cdot x+q)$ for some $p\in\reals^n$ and $q\in\reals$. 
Using $\psi_\sigma=\psi_{,\sigma}$ and (\ref{eq:ProjMap4}), equation 
(\ref{eq:ProjMap2-a}) is equivalent to 
$\partial_\lambda\partial_\sigma\bigl[f^\mu\exp(-\psi)\bigr]=0$,
which finally leads to the result that the most general solution for 
$f$ is given by  
\begin{equation}
\label{eq:ProjMap5}
f(x)=\frac{A\cdot x+a}{p\cdot x+q}\,.
\end{equation}
Here $A$ is a $n\times n$ matrix, $a$ and $q$ vectors in $\reals^n$,
and $q\in\reals$. $p$ and $q$ must be such that $U$ does not intersect 
the hyperplane $H(p,q):=\{x\in\reals^n\mid p\cdot x+q=0\}$ where $f$ 
becomes singular, but otherwise they are arbitrary. 
Iff $H(p,q)\ne\emptyset$, i.e. iff $p\ne 0$, the transformations 
(\ref{eq:ProjMap5}) are not affine. In this case they are called 
proper projective.  

Are there physical reasons to rule out such proper projective 
transformations? A structural argument is that they do not 
leave any subset of $\reals^n$ invariant and that they hence 
cannot be considered as automorphism group of any subdomain.   
A physical argument is that two separate points that move 
with the same velocity cease to do so if their worldlines are 
transformed by by a proper projective transformation. In particular, 
a rigid motion of an extended body (undergoing inertial motion) 
ceases to be rigid if so transformed 
(cf.\cite{Dixon:SpecialRelativity}, p.\,16). An illustrative 
example is the following: Consider the one-parameter 
($\sigma$) family of parallel lines $x(s,\sigma)=se_0+\sigma e_1$ 
(where $s$ is the parameter along each line), and the proper 
projective map $f(x)=x/(-e_0\cdot x+1)$ which becomes singular 
on the hyperplane $x^0=1$. The one-parameter family of image 
lines 
\begin{equation}
\label{eq:ImageLines}
y(s,\sigma):=f\bigl(x(s,\sigma)\bigr)
=\frac{se_0+\sigma e_1}{1-s}
\end{equation}
have velocities 
\begin{equation}
\label{eq:ImageLinesVel}
\partial_sy(s,\sigma)=\frac{qe_0+\sigma e_1}{(1-s)^2}
\end{equation}
whose directions are independent of $s$, showing that they are 
indeed straight. However, the velocity directions  now depend 
on $\sigma$, showing that they are not parallel anymore. 
 
Let us, regardless of this, for the moment take seriously the 
transformations (\ref{eq:ProjMap5}). One may reduce them to 
the following form of generalised boosts, discarding translations 
and rotations and using equivariance with respect to the latter 
(we restrict to four spacetime dimensions from now on): 
\begin{subequations}
\label{eq:FockLorentzTrans1}
\begin{alignat}{2}
\label{eq:FockLorentzTrans1-a}
& t'&&\,=\,
\frac{a(v)t+b(v)\vec(v\cdot\vec x)}{A(v)+B(v)t+D(v)(\vec v\cdot\vec x)}\,,\\
\label{eq:FockLorentzTrans1-b}
&\vec x'_{\Vert}&&\,=\,
\frac{d(v)\vec vt+e(v)\vec x_{\Vert}}{A(v)+B(v)t+D(v)(\vec v\cdot\vec x)}\,,\\
\label{eq:FockLorentzTrans1-c}
&\vec x'_{\perp}&&\,=\,
\frac{f(v)\vec x_{\perp}}{A(v)+B(v)t+D(v)(\vec v\cdot\vec x)}\,.
\end{alignat}
\end{subequations}
where $\vec v\in\reals^3$ represents the boost velocity, $v:=\Vert\vec v\Vert$
its modulus, and all functions of $v$ are even. The subscripts $\Vert$ and 
$\perp$ refer to the components parallel and perpendicular to $\vec v$. 
Now one imposes the following conditions which allow to determine the  
eight functions $a,b,d,e,f,A,B,D$, of which only seven are considered 
independent since common factors of the numerator and denominator cancel
(we essentially follow~\cite{Manida:1999}):
\begin{enumerate}
\item 
The origin $\vec x'=0$ has velocity $\vec v$ in the unprimed 
coordinates, leading to $e(v)=-d(v)$ and thereby eliminating 
$e$ as independent function. 
\item
The origin $\vec x=0$ has velocity $-\vec v$ in the primed 
coordinates, leading to $d(v)=-a(v)$ and thereby eliminating 
$d$ as independent function.
\item
Reciprocity: The transformation parametrised by $-\vec v$ is the 
inverse of that parametrised by $\vec v$, leading to relations
$A=A(a,b,v)$, $B=B(D,a,b,v)$, and $f=A$, thereby eliminating 
$A,B,f$ as independent functions. Of the remaining three 
functions $a,b,D$ an overall factor in the numerator and 
denominator can be split off so that two free functions remain. 
\item
Transitivity: The composition of two transformations of the 
type (\ref{eq:FockLorentzTrans1}) with parameters $\vec v$
and $\vec v'$ must be again of this form with some 
parameter $\vec v''(\vec v,\vec v')$, which turns out to be 
the same function of the velocities $\vec v$ and $\vec v'$ as 
in Special Relativity (Einstein's addition law), for reasons to 
become clear soon. This allows to determine the last two 
functions in terms of two constants $c$ and $R$ whose physical 
dimensions are that of a velocity and of a length respectively. 
Writing, as usual, $\gamma(v):=1/\sqrt{1-v^2/c^2}$ the final 
form is given by 
\end{enumerate}
\begin{subequations}
\label{eq:FockLorentzTrans2}
\begin{alignat}{2}
\label{eq:FockLorentzTrans2-a}
& t'&&\,=\,
\frac{\gamma(v)(t-\vec v\cdot\vec x/c^2)}%
{1-\bigl(\gamma(v)-1\bigr)ct/R+\gamma(v)\vec v\cdot\vec x/Rc}\,,\\
\label{eq:FockLorentzTrans2-b}
&\vec x'_{\Vert}&&\,=\,
\frac{\gamma(v)(\vec x_\Vert-\vec v t)}%
{1-\bigl(\gamma(v)-1\bigr)ct/R+\gamma(v)\vec v\cdot\vec x/Rc}\,,\\
\label{eq:FockLorentzTrans2-c}
&\vec x'_{\perp}&&\,=\,
\frac{\vec x_\perp}%
{1-\bigl(\gamma(v)-1\bigr)ct/R+\gamma(v)\vec v\cdot\vec x/Rc}\,.
\end{alignat}
\end{subequations}
In the limit as $R\rightarrow\infty$ this approaches an ordinary 
Lorentz boost:
\begin{equation}
\label{eq:LorentzFock3}
L(\vec v): (t,\vec x_\Vert,\vec x_\perp)\mapsto 
\bigl(\gamma(v)(t-\vec v\cdot\vec x/c^2)\,,\,
\gamma(v)(\vec x_\Vert-\vec vt)\,,\,
\vec x_\perp\bigr)\,.
\end{equation}
Moreover, for finite $R$ the map (\ref{eq:FockLorentzTrans2}) is 
conjugate to (\ref{eq:LorentzFock3}) with respect to a time 
dependent deformation. To see this, observe that the common 
denominator in (\ref{eq:FockLorentzTrans2}) is just $(R+ct)/(R+ct')$, 
whereas the numerators correspond to (\ref{eq:LorentzFock3}). Hence, 
introducing the deformation map 
\begin{equation}
\label{eq:FockLorentz3}
\phi: (t,\vec x)\mapsto 
\left(\frac{t}{1-ct/R}\,,\,\frac{\vec x}{1-ct/R}\right)
\end{equation}
and denoting the map $(t,\vec x)\mapsto (t',\vec x')$ in 
(\ref{eq:FockLorentzTrans2}) by $f$, 
we have 
\begin{equation}
\label{eq:FockLorentz4}
f=\phi\circ L(\vec v)\circ\phi^{-1}\,.
\end{equation}
Note that $\phi$ is singular at the hyperplane $t=R/c$ and has no 
point of the hyperplane $t=-R/c$ in its image. The latter hyperplane 
is the singularity set of $\phi^{-1}$. Outside the hyperplanes 
$t=\pm R/c$ the map $\phi$ relates the following time slabs in a 
diffeomorphic fashion:
\begin{subequations}
\label{eq:DiffTimeSlabs}
\begin{alignat}{11}
\label{eq:DiffTimeSlabs-a}
  &0      &&\,\leq\,  && \ t\,&&\,<\, &&R/c    
&&\quad\mapsto\quad  &&0       &&\,\leq\, &&\ t\,&&\,<\, &&\infty\,,\\
\label{eq:DiffTimeSlabs-b}
  &R/c    &&\, <\, && \ t\,&&\, <\, &&\infty 
&&\quad\mapsto\quad -&&\infty  &&\,<\, &&\ t\,&&\,<\, -&&R/c\,,\\
\label{eq:DiffTimeSlabs-c}
 -&\infty &&\,<\,&& \ t\,&&\leq &&0      
&&\quad\mapsto\quad -&&R/c  &&\,<\,&&\ t\, &&\,\leq\,&&0\,.
\end{alignat}
\end{subequations}
Since boosts leave the upper-half spacetime, $t>0$, invariant (as set), 
(\ref{eq:DiffTimeSlabs-a}) shows that $f$ just squashes the linear 
action of boosts in  $0<t<\infty$ into a non-linear action within 
$0<t<R/c$, where $R$ now corresponds to an invariant scale. Interestingly, 
this is the same deformation of boosts that have been recently considered 
in what is sometimes called \emph{Doubly Special Relativity} (because there 
are now two, rather than just one, invariant scales, $R$ and $c$), 
albeit there the deformation of boosts take place in momentum space 
where $R$ then corresponds to an invariant energy scale; see 
\cite{Magueijo.Smolin:2002} and also \cite{Jafari.Shariati:2004}.

\section{Selected structures in Minkowski space}
\label{sec:SelectedStructures}
In this section we wish to discuss in more detail some 
of the non-trivial structures in Minkowski. I have chosen 
them so as to emphasise the difference to the corresponding 
structures in Galilean spacetime, and also because they 
do not seem to be much discussed in other standard sources.

\subsection{Simultaneity} 
\label{sec:SelectedStructuresSimultaneity}
Let us start right away by characterising those vectors 
for which we have an inverted Cauchy-Schwarz inequality:  
\begin{lemma}
\label{lemma:StrictICSI}
Let $V$ be of dimension $n>2$ and $v\in V$ be some 
non-zero vector. The strict inverted Cauchy-Schwarz inequality,
\begin{equation}
\label{eq:StrictInvCSI}
v^2w^2<(v\cdot w)^2\,,
\end{equation}
holds for all $w\in V$ linearly independent of $v$ iff $v$ is timelike.  
\end{lemma}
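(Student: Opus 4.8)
The statement is an equivalence, so the plan is to prove both implications, the engine in each case being the orthogonal decomposition $V=\Span\{v\}\oplus v^\perp$ together with the signature facts for $v^\perp$ recorded just before the lemma (namely that $v^\perp$ is spacelike, i.e.\ negative definite, when $v$ is timelike, timelike when $v$ is spacelike, while $v\in v^\perp$ when $v$ is lightlike).

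For the `if' direction I would assume $v$ timelike. Since then $v\notin v^\perp$, every $w$ decomposes uniquely as $w=\lambda v+u$ with $u\in v^\perp$, and $w$ is linearly independent of $v$ precisely when $u\neq 0$. Using $v\cdot u=0$ I compute $v\cdot w=\lambda v^2$ and $w^2=\lambda^2 v^2+u^2$, whence $(v\cdot w)^2-v^2w^2=-v^2u^2$. Because $v^2>0$ and $g$ restricted to the spacelike subspace $v^\perp$ is negative definite, $u\neq 0$ forces $u^2<0$, so $-v^2u^2>0$; this is exactly the strict inequality \eqref{eq:StrictInvCSI}.

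For the `only if' direction I would argue by contraposition: if $v$ is not timelike I exhibit a $w$ linearly independent of $v$ for which \eqref{eq:StrictInvCSI} fails. If $v$ is lightlike then $v\in v^\perp$ and $\dim v^\perp=n-1\geq 2$, so $v^\perp$ contains a vector $w$ independent of $v$; for it $v\cdot w=0=v^2w^2$, so the strict inequality fails. If $v$ is spacelike then $v\notin v^\perp$, and since $v^\perp$ is timelike of signature $(1,n-2)$ with $n-2\geq 1$ it contains a spacelike vector $w$ (e.g.\ a basis vector orthogonal to $v$); such $w$ is automatically independent of $v$ and satisfies $v\cdot w=0$ while $v^2w^2>0$, again violating \eqref{eq:StrictInvCSI}.

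The one genuinely load-bearing hypothesis—and the point I would flag as the crux—is $n>2$: it is what guarantees that $v^\perp$ has room for a vector independent of $v$ in the lightlike case, and for a spacelike vector in the spacelike case. Indeed, for $n=2$ both non-timelike cases collapse (as $v^\perp$ becomes one-dimensional), the counterexamples disappear, and the inverted inequality in fact holds for every non-zero $v$, so the equivalence itself would be false. The remaining steps are routine bilinear-form manipulations.
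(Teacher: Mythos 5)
Your proof is correct and takes essentially the same route as the paper's: the orthogonal decomposition $w=\lambda v+u$ with $u\in v^\perp$ yielding $(v\cdot w)^2-v^2w^2=-v^2u^2$ for the timelike case, and counterexamples drawn from $v^\perp$ (with $n>2$ guaranteeing $v^\perp-\Span\{v\}\neq\emptyset$ in the lightlike case) for the converse. If anything, your spacelike case is more careful than the paper's, which asserts that \emph{any} spacelike $w$ violates the inequality---not true in general, since two spacelike vectors can span a timelike plane---whereas your choice of a spacelike $w\in v^\perp$ actually works and correctly exhibits the second place where $n>2$ is needed.
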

\begin{proof}
Obviously $v$ cannot be spacelike, for then we would violate 
(\ref{eq:StrictInvCSI}) with any spacelike $w$. If $v$ is 
lightlike then $w$ violates (\ref{eq:StrictInvCSI}) iff it is 
in the set $v^\perp-\Span\{v\}$, which is non-empty iff 
$n>2$. Hence $v$ cannot be lightlike if $n>2$. If $v$ is timelike
we decompose $w=av+w'$ with $w'\in v^\perp$ so that $w'^2\leq 0$,
with equality iff $v$ and $w$ are linearly dependent. Hence 
\begin{equation}  
\label{eq:LemmaStrictICSI1}
(v\cdot w)^2-v^2w^2=-v^2\,w'^2\geq 0\,,
\end{equation}
with equality iff $v$ and $w$ are linearly dependent. 
\end{proof}
 
\noindent
The next Lemma deals with the intersection of a causal line
with a light cone, a situation depicted in Fig.\,\ref{fig:Robb}.

\begin{lemma}
\label{lemma:IntLineCone}
Let $\mathcal{L}_p$ be the light-doublecone with vertex $p$ and 
$\ell:=\{r+\lambda v\mid r\in\reals\}$ be a non-spacelike line, 
i.e. $v^2\geq 0$, through $r\not\in\mathcal{L}_p$. If $v$ is timelike 
$\ell\cap\mathcal{L}_p$ consists of two points. If $v$ is lightlike 
this intersection consists of one point if $p-r\not\in v^\perp$
and is empty if $p-r\in v^\perp$. Note that the latter two statements 
are independent of the choice of $r\in\ell$---as they must be---, i.e. 
are invariant under $r\mapsto r':=r+\sigma v$, where $\sigma\in\reals$.
\end{lemma}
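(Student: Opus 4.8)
The plan is to reduce the whole statement to analysing the roots of a single scalar equation in the line parameter. A point $r+\lambda v$ of $\ell$ lies on $\mathcal{L}_p$ exactly when $(r+\lambda v-p)^2=0$. Abbreviating $w:=r-p$ and expanding the bilinear form, I would write this as
\begin{equation}
\label{eq:IntLineConeQuad}
\lambda^2\,v^2+2\lambda\,(v\cdot w)+w^2=0\,.
\end{equation}
The hypothesis $r\not\in\mathcal{L}_p$ is just the statement $w^2\neq 0$, which I will use repeatedly. Everything then follows by distinguishing whether \eqref{eq:IntLineConeQuad} is genuinely quadratic ($v^2>0$) or merely linear ($v^2=0$) in $\lambda$.

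For timelike $v$ the leading coefficient $v^2$ is nonzero, so \eqref{eq:IntLineConeQuad} is a true quadratic whose discriminant is $4\bigl[(v\cdot w)^2-v^2w^2\bigr]$. The key input is that for timelike $v$ this bracket is strictly positive whenever $w$ is linearly independent of $v$: this is precisely the strict inverted Cauchy-Schwarz inequality of Lemma \ref{lemma:StrictICSI} (equivalently, decompose $w=av+w'$ with $w'\in v^\perp$ spacelike, so that $(v\cdot w)^2-v^2w^2=-v^2w'^2>0$). Hence the discriminant is positive and \eqref{eq:IntLineConeQuad} has two distinct real roots, giving the two claimed intersection points. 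The one configuration to watch is $w\parallel v$, i.e. $\ell$ through the apex $p$: then $w'=0$, the discriminant vanishes, and $\ell$ meets $\mathcal{L}_p$ only at the vertex. This is the main (indeed only) obstacle, and the two-point conclusion should be read for lines not passing through $p$, which is the situation drawn in Fig.\,\ref{fig:Robb}.

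For lightlike $v$ the coefficient $v^2$ vanishes and \eqref{eq:IntLineConeQuad} collapses to $2\lambda\,(v\cdot w)+w^2=0$. If $v\cdot w\neq 0$, i.e. $p-r\not\in v^\perp$, there is the unique solution $\lambda=-w^2/\bigl(2\,v\cdot w\bigr)$, hence one intersection point; if $v\cdot w=0$, i.e. $p-r\in v^\perp$, the equation reduces to $w^2=0$, which is impossible since $w^2\neq 0$, so the intersection is empty. Finally, the asserted independence of the base point is immediate: replacing $r$ by $r'=r+\sigma v$ gives $v\cdot(r'-p)=v\cdot(r-p)+\sigma v^2=v\cdot(r-p)$ because $v$ is lightlike, so both the vanishing of $v\cdot w$ and the membership $p-r\in v^\perp$ are unchanged. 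The whole argument is elementary once \eqref{eq:IntLineConeQuad} is in hand; the only substantive ingredient is the inverted Cauchy-Schwarz inequality forcing positivity of the timelike discriminant, and the only delicate point is the degenerate apex case noted above.
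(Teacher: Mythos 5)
Your proof is correct and follows essentially the same route as the paper: the same quadratic $\lambda^2v^2+2\lambda\,v\cdot(r-p)+(r-p)^2=0$, with Lemma\,\ref{lemma:StrictICSI} supplying positivity of the discriminant in the timelike case and the collapse to a linear equation settling the lightlike case. You are in fact more careful than the paper at one point: the paper asserts that $r\not\in\mathcal{L}_p$ forces $v$ and $r-p$ to be linearly independent, which fails for a timelike line through the vertex (there $r-p=\alpha v$ with $\alpha\ne 0$ gives $(r-p)^2=\alpha^2v^2>0$, so $r\not\in\mathcal{L}_p$, yet the discriminant vanishes and $\ell\cap\mathcal{L}_p=\{p\}$ is a single point), so your explicit flagging of the apex case---excluded in Fig.\,\ref{fig:Robb} by $p\not\in\ell$---is the correct reading of the lemma.
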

\begin{proof}
We have $r+\lambda v\in\mathcal{L}_p$ iff
\begin{equation}
\label{eq:LemmaIntLineCone1}
(r+\lambda v-p)^2=0\ \Longleftrightarrow\
\lambda^2 v^2+2\lambda v\cdot(r-p)+(r-p)^2=0\,.
\end{equation}
For $v$ timelike we have $v^2>0$ and (\ref{eq:LemmaIntLineCone1})
has two solutions 
\begin{equation}
\label{eq:LemmaIntLineCone2}
\lambda_{1{,}2}=\frac{1}{v^2}\left\{
-v\cdot (r-p)\pm\sqrt{\bigl(v\cdot(r-p)\bigr)^2-v^2(r-p)^2}\,\right\}\,.
\end{equation}
Indeed, since $r\not\in\mathcal{L}_p$, the vectors $v$ and $r-p$ 
cannot be linearly dependent so that Lemma\,\ref{lemma:StrictICSI} 
implies the positivity of the expression under the square root. 
If $v$ is lightlike (\ref{eq:LemmaIntLineCone1}) becomes a linear 
equation which is has one solution if $v\cdot(r-p)\ne 0$ and no solution if 
$v\cdot (r-p)=0$ [note that $(r-p)^2\ne 0$ since $q\not\in \mathcal{L}_p$
by hypothesis].
\end{proof}

\begin{figure}[h!]
\centering\epsfig{figure=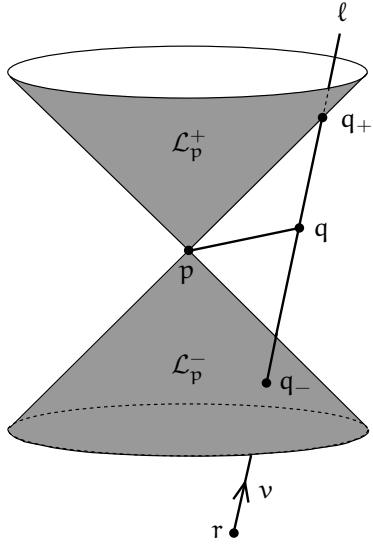,width=0.38\linewidth}
\put(-12,155){\small $q_+$}
\put(-22,114){\small $q$}
\put(-35,56){\small $q_-$}
\put(-72,98){\small $p$}
\put(-60,0){\small $r$}
\put(-42,15){\small $v$}
\put(-75,146){\small $\mathcal{L}^+_p$}
\put(-75,62){\small $\mathcal{L}^-_p$}
%\put(-123,135){\footnotesize $\mathcal{L}^+_p$}
%\put(-123,60){\footnotesize $\mathcal{L}^-_p$}
\put(-13,195){\small $\ell$}
\caption{\label{fig:Robb}
A timelike line $\ell=\{r+\lambda v\mid\lambda\in\reals\}$ intersects 
the light-cone with vertex $p\not\in\ell$ in two points: $q_+$, its  
intersection with the future light-cone and $q_-$, its intersection 
with past the light cone. $q$ is a point in between $q_+$ and $q_-$.}
\end{figure}

\begin{proposition}
\label{prop:EinsteinSynch}
Let $\ell$ and $\mathcal{L}_p$ as in Lemma\,\ref{lemma:IntLineCone} with $v$ 
timelike. Let $q_+$ and $q_-$ be the two intersection points of 
$\ell$ with $\mathcal{L}_p$ and $q\in\ell$ a point between them. Then 
\begin{equation}
\label{eq:prop:EinsteinSynch1}
\Vert q-p\Vert_g^2=\Vert q_+-q\Vert_g\,\Vert q-q_-\Vert_g\,.
\end{equation}
Moreover, $\Vert q_+-q\Vert_g=\Vert q-q_-\Vert_g$ iff 
$p-q$ is perpendicular to $v$. 
\end{proposition}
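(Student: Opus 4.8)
The plan is to place $q$ at the origin of the parametrisation of $\ell$, so that the two intersection points become $q_\pm = q + s_\pm v$ for two scalars $s_\pm$, and then to read everything off from Vieta's formulas for the quadratic of Lemma\,\ref{lemma:IntLineCone}. Writing $w := q - p$ and substituting $r \mapsto q$ in (\ref{eq:LemmaIntLineCone1}), the condition $q + s v \in \mathcal{L}_p$ becomes
\begin{equation}
v^2 s^2 + 2 (v\cdot w)\, s + w^2 = 0 \,,
\end{equation}
whose roots $s_+, s_-$ satisfy
\begin{equation}
s_+ s_- = \frac{w^2}{v^2}\,, \qquad s_+ + s_- = -\frac{2\,(v\cdot w)}{v^2}\,.
\end{equation}

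First I would fix the signs. Since $q$ lies strictly between $q_+$ and $q_-$, the displacements $q_+ - q = s_+ v$ and $q - q_- = -s_- v$ point to opposite sides of $q$ along $\ell$, so $s_+$ and $s_-$ have opposite signs; say $s_+ > 0 > s_-$. In particular $s_+ s_- < 0$, and because $v$ is timelike ($v^2 > 0$) this forces $w^2 < 0$, i.e. $q - p$ is spacelike. (Equivalently, Lemma\,\ref{lemma:StrictICSI} guarantees the discriminant is positive, so the two roots are real and distinct, and their product carries the sign of $w^2$.) Hence $\Vert q - p\Vert_g^2 = -w^2$.

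Next I would evaluate the three lengths directly. From $q_+ - q = s_+ v$ and $q - q_- = -s_- v$ with $v^2 > 0$ one gets $\Vert q_+ - q\Vert_g = s_+ \sqrt{v^2}$ and $\Vert q - q_-\Vert_g = -s_- \sqrt{v^2}$, using $s_+ > 0 > s_-$. Their product is $-s_+ s_-\, v^2 = -w^2$ by the first Vieta relation, which equals $\Vert q - p\Vert_g^2$; this is exactly (\ref{eq:prop:EinsteinSynch1}).

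Finally, for the equality clause, the two factors coincide iff $s_+ = -s_-$, i.e. iff $s_+ + s_- = 0$, which by the second Vieta relation happens iff $v \cdot w = 0$, that is iff $p - q$ is perpendicular to $v$. The only real subtlety in the whole argument is the bookkeeping of signs---confirming that $q - p$ is spacelike so that the Minkowski length contributes $-w^2$ rather than $+w^2$, and matching the absolute values $|s_\pm|$ with the chosen ordering of the roots---after which the statement is an immediate consequence of Vieta's formulas.
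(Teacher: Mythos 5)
Your proof is correct, and it takes a genuinely different route from the paper's. The paper argues coordinate-free: it expands the two lightlike conditions $(q_\pm-p)^2=0$ into the pair of identities $\Vert q-p\Vert_g^2=(q_\pm-q)^2+2(q-p)\cdot(q_\pm-q)$, introduces the proportionality factor $\lambda\in\reals_+$ with $q_+-q=\lambda(q-q_-)$, and adds $\lambda$ times the second identity to the first so that the cross terms cancel, giving $(1+\lambda)\Vert q-p\Vert_g^2=(1+\lambda)\Vert q_+-q\Vert_g\Vert q-q_-\Vert_g$; the midpoint criterion is then read off from the same pair of identities. You instead base the parametrisation at $q$ and apply Vieta's formulas to the quadratic of Lemma\,\ref{lemma:IntLineCone}: the product of the roots yields (\ref{eq:prop:EinsteinSynch1}) and their sum yields the orthogonality criterion. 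The two arguments are algebraically cognate---your quadratic, evaluated at its two roots $s_\pm$, is precisely the paper's pair of identities---but your organisation buys two things: both halves of the proposition drop out simultaneously from the two elementary symmetric functions of $s_\pm$, and the spacelike character of $q-p$ is \emph{derived} (from $s_+s_-<0$ together with $v^2>0$) where the paper merely asserts it in a parenthesis. The paper's version, in exchange, needs no parametrisation and no choice of root labelling, working only with the invariant ratio $\lambda=-s_+/s_-$. One pedantic remark: you do not actually need Lemma\,\ref{lemma:StrictICSI} to secure a positive discriminant, since the existence of two distinct intersection points $q_+\ne q_-$ is part of the hypothesis (and your sign analysis $s_+>0>s_-$ already forces distinct real roots); the appeal is harmless, though, and mirrors how the paper justifies the two roots inside the proof of Lemma\,\ref{lemma:IntLineCone}.
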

\begin{proof}
The vectors $(q_+-p)=(q-p)+(q_+-q)$ and $(q_--p)=(q-p)+(q_--q)$ 
are lightlike, which gives (note that $q-p$ is spacelike):
\begin{subequations}
\label{eq:prop:EinsteinSynch23}
\begin{alignat}{3}
\label{eq:prop:EinsteinSynch2}
& \Vert q-p\Vert_g^2 
&&\,=\,-(q-p)^2 
&&\,=\,(q_+-q)^2+2(q-p)\cdot(q_+-q)\,,\\
\label{eq:prop:EinsteinSynch3}
& \Vert q-p\Vert_g^2 
&&\,=\,-(q-p)^2
&&\, =\,(q_--q)^2+2(q-p)\cdot(q_--q)\,.
\end{alignat}
\end{subequations}
Since $q_+-q$ and $q-q_-$ are parallel we have 
$q_+-q=\lambda (q-q_-)$ with $\lambda\in\reals_+$ so that 
$(q_+-q)^2=\lambda\Vert q_+-q\Vert_g\Vert q-q_-\Vert_g$ and
$\lambda (q_--q)^2=\Vert q_+-q\Vert_g\Vert q-q_-\Vert_g$.
Now, multiplying (\ref{eq:prop:EinsteinSynch3}) with $\lambda$ 
and adding this to (\ref{eq:prop:EinsteinSynch2}) immediately 
yields 
\begin{equation}
\label{eq:prop:EinsteinSynch4}
(1+\lambda)\,\Vert q-p\Vert_g^2
=(1+\lambda)\,\,\Vert q_+-q\Vert_g\Vert q-q_-\Vert_g\,.\\
\end{equation}
Since $1+\lambda\ne 0$ this implies (\ref{eq:prop:EinsteinSynch1}).
Finally, since $q_+-q$ and $q_--q$ are antiparallel, 
$\Vert q_+-q\Vert_g=\Vert q_--q\Vert_g$ iff $(q_+-q)=-(q_--q)$. 
Equations (\ref{eq:prop:EinsteinSynch23}) now show that this is the 
case iff $(q-p)\cdot(q_\pm-q)=0$, i.e. iff $(q-p)\cdot v=0$. 
Hence we have shown  

\begin{equation}
\label{eq:prop:EinsteinSynch5}
\Vert q_+-q\Vert_g=\Vert q-q_-\Vert_g\ \Longleftrightarrow\ 
(q-p)\cdot v=0\,.
\end{equation}
In other words, $q$ is the midpoint of the segment $\overline{q_+q_-}$ 
iff the line through $p$ and $q$ is perpendicular (wrt. $g$) to $\ell$.
\end{proof}
The somewhat surprising feature of the first statement of this 
proposition is that (\ref{eq:prop:EinsteinSynch1}) holds for 
\emph{any} point of the segment $\overline{q_+q_-}$, not just 
the midpoint, as it would have to be the case for the corresponding 
statement in Euclidean geometry.

The second statement of Proposition\,\ref{prop:EinsteinSynch} gives 
a convenient geometric characterisation of Einstein-simultaneity.
Recall that an event $q$ on a timelike line $\ell$ (representing an 
inertial observer) is defined to be Einstein-simultaneous with an 
event $p$ in spacetime iff $q$ bisects the segment $\overline{q_+q_-}$ 
between the intersection points $q_+,q_-$ of $\ell$ with the 
double-lightcone at $p$. Hence Proposition\,\ref{prop:EinsteinSynch} 
implies
\begin{corollary}
\label{corol:EinsteinSim}
Einstein simultaneity with respect to a timelike line $\ell$ is 
an equivalence relation on spacetime, the equivalence classes of 
which are the spacelike hyperplanes orthogonal (wrt. $g$) to $\ell$. 
\end{corollary}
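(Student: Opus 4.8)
The plan is to reduce the statement to the clean algebraic characterisation of Einstein-simultaneity already extracted in Proposition~\ref{prop:EinsteinSynch}, namely equation~(\ref{eq:prop:EinsteinSynch5}): a point $q\in\ell$ is Einstein-simultaneous with $p$ iff $(q-p)\cdot v=0$, where $v$ is a direction vector of the timelike line $\ell$. Everything then follows from elementary affine-linear algebra, so the only genuine work is to promote this relation ``between a point of $\ell$ and an event of spacetime'' into a relation on all of $\Mink{n}$.

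First I would establish well-definedness of the simultaneity assignment. Writing $\ell=\{r+\lambda v\mid\lambda\in\reals\}$, for a given event $p$ the condition $(r+\lambda v-p)\cdot v=0$ is linear in $\lambda$ with leading coefficient $v^2$, which is nonzero (indeed positive) because $v$ is timelike. Hence it has the unique solution $\lambda_p=\frac{(p-r)\cdot v}{v^2}$, and every event $p$ possesses exactly one Einstein-simultaneous partner $q(p):=r+\lambda_p v$ on $\ell$. This is the single place where timelikeness of $\ell$ is indispensable, and it is the crux of the argument.

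Next I would define the relation on spacetime by declaring $p\sim_\ell p'$ iff $q(p)=q(p')$, i.e. iff $p$ and $p'$ are Einstein-simultaneous with one and the same event on $\ell$. Being the equality of values of the well-defined map $q(\cdot)$, the relation $\sim_\ell$ is automatically reflexive, symmetric and transitive, so it is an equivalence relation with no further checking required.

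Finally I would identify the classes. The class of $p$ is $\{p'\mid q(p')=q(p)\}$, which by~(\ref{eq:prop:EinsteinSynch5}) equals $\{p'\mid (p'-q(p))\cdot v=0\}=q(p)+v^\perp$. Since $v$ is timelike, $v^\perp$ is a spacelike hyperplane (as recorded earlier in the text), and it is by construction $g$-orthogonal to $v$ and hence to $\ell$. As $q(p)$ ranges over $\ell$ these translates sweep out precisely the family of spacelike hyperplanes orthogonal to $\ell$, completing the identification. The main obstacle is conceptual rather than computational: recognising that the stated notion is a priori a relation between $\ell$ and $\Mink{n}$, and supplying the unique ``foot'' $q(p)$ that turns it into an equivalence relation; once existence and uniqueness of $q(p)$ are in hand, the rest is immediate.
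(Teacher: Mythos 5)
Your proof is correct and takes essentially the same route as the paper: both reduce the corollary to the orthogonality characterisation (\ref{eq:prop:EinsteinSynch5}) of Proposition\,\ref{prop:EinsteinSynch} and then observe that the parallel spacelike hyperplanes $q+v^\perp$, $q\in\ell$, partition spacetime---the paper invokes the partition/equivalence-relation correspondence in a single line, while you realise the same partition explicitly as the fibres of the foot map $p\mapsto q(p)$, whose existence and uniqueness you check by solving a linear equation with leading coefficient $v^2\neq 0$. Your added explicitness (which also covers the degenerate case $p\in\ell$ uniformly) is a harmless elaboration of the paper's argument, not a genuinely different method.
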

The first statement simply follows from the fact that the family 
of parallel hyperplanes orthogonal to $\ell$ form a partition 
(cf. Sect.\,\ref{sec:GroupActions}) of spacetime. 

From now on we shall use the terms `timelike line' and `inertial 
observer' synonymously. Note that Einstein simultaneity is only defined 
relative to an inertial observer. Given two inertial observers, 
\begin{subequations}
\label{eq:TwoObservers}
\begin{alignat}{3}
\label{eq:TwoObservers1}
&\ell&&\,=\,\{r+\lambda v\mid\lambda\in\reals\}\qquad
&&\text{first observer}\,,\\
\label{eq:TwoObservers2}
&\ell'&&\,=\,\{r'+\lambda' v'\mid\lambda'\in\reals\}\qquad
&&\text{second observer}\,,
\end{alignat}
\end{subequations}
we call the corresponding Einstein-simultaneity relations 
\emph{$\ell$-simultaneity} and \emph{$\ell'$-simultaneity}. Obviously 
they coincide iff $\ell$ and $\ell'$ are parallel ($v$ and $v'$ 
are linearly dependent). 
In this case $q'\in\ell'$ is $\ell$-simultaneous to $q\in\ell$ iff
$q\in\ell$ is $\ell'$-simultaneous to $q'\in\ell'$. If $\ell$ and 
$\ell'$are not parallel (skew or intersecting in one point) it is 
generally not true that if $q'\in\ell'$ is $\ell$-simultaneous to 
$q\in\ell$ then $q\in\ell$ is also $\ell'$-simultaneous to $q'\in\ell'$. 
In fact, we have
\begin{proposition}
\label{prop:SkewSim}
Let $\ell$ and $\ell'$ two non-parallel timelike likes. 
There exists a unique pair $(q,q')\in\ell\times\ell'$ so that $q'$ is 
$\ell$-simultaneous to $q$ and $q$ is $\ell'$ simultaneous to $q'$.
\end{proposition}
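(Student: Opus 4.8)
The plan is to translate the two simultaneity requirements into a pair of $g$-orthogonality conditions, reduce these to a $2\times2$ linear system in the two line-parameters, and then use the inverted Cauchy-Schwarz inequality of Lemma~\ref{lemma:StrictICSI} to show the system is non-degenerate, so that it has exactly one solution.

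First I would unpack the definitions using Corollary~\ref{corol:EinsteinSim}. The equivalence classes of $\ell$-simultaneity are the hyperplanes orthogonal to $v$, so $q'$ is $\ell$-simultaneous to $q$ exactly when $(q'-q)\cdot v=0$; symmetrically, $q$ is $\ell'$-simultaneous to $q'$ exactly when $(q'-q)\cdot v'=0$. Thus I am looking for $q=r+\lambda v\in\ell$ and $q'=r'+\lambda'v'\in\ell'$ satisfying
\begin{equation*}
(q'-q)\cdot v=0\qquad\text{and}\qquad (q'-q)\cdot v'=0\,.
\end{equation*}

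Next I would substitute the parametrizations. Setting $d:=r'-r$, so that $q'-q=d+\lambda'v'-\lambda v$, the two conditions become the linear system
\begin{equation*}
\begin{pmatrix}
-\,v\cdot v & v\cdot v'\\
-\,v\cdot v' & v'\cdot v'
\end{pmatrix}
\begin{pmatrix}\lambda\\\lambda'\end{pmatrix}
=
\begin{pmatrix}-\,d\cdot v\\-\,d\cdot v'\end{pmatrix}
\end{equation*}
in the unknowns $(\lambda,\lambda')$, whose coefficient determinant is $(v\cdot v')^2-v^2\,v'^2$.

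The crux of the argument—and the only step I expect to require real input rather than routine algebra—is showing this determinant is non-zero. Here I would invoke Lemma~\ref{lemma:StrictICSI}: $v$ is timelike, and since $\ell$ and $\ell'$ are non-parallel the vector $v'$ is linearly independent of $v$, so the strict inverted Cauchy-Schwarz inequality gives $v^2\,v'^2<(v\cdot v')^2$. Hence the determinant is strictly positive, the system has a unique real solution $(\lambda,\lambda')$, and this determines a unique pair $(q,q')\in\ell\times\ell'$ with the asserted mutual-simultaneity property, yielding both existence and uniqueness at once.
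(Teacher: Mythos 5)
Your proposal is correct and follows essentially the same route as the paper's own proof: parametrise both lines, reduce the two orthogonality conditions $(q'-q)\cdot v=0=(q'-q)\cdot v'$ to a $2\times 2$ linear system in $(\lambda,\lambda')$, and invoke Lemma~\ref{lemma:StrictICSI} to conclude that the determinant $(v\cdot v')^2-v^2v'^2$ is strictly positive. Your matrix differs from the paper's only by an overall sign in each row, which does not affect the argument.
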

\begin{proof}
We parameterise $\ell$ and $\ell'$ as in  (\ref{eq:TwoObservers}).  
The two conditions for $q'$ being $\ell$-simultaneous to $q$ and $q$ 
being $\ell'$-simultaneous to $q'$ are $(q-q')\cdot v=0=(q-q')\cdot v'$.
Writing $q=r+\lambda v$ and $q'=r'+\lambda'v'$ this takes the form 
of the following matrix equation for the two unknowns $\lambda$ and 
$\lambda'$: 
\begin{equation}
\label{eq:prop:SkewSim1}
\begin{pmatrix}%
v^2&-v\cdot v'\\
v\cdot v'&-v'^2
\end{pmatrix}
\begin{pmatrix}
\lambda\\
\lambda'
\end{pmatrix}=
\begin{pmatrix}
(r'-r)\cdot v\\
(r'-r)\cdot v'
\end{pmatrix}\,.
\end{equation}
This has a unique solution pair $(\lambda,\lambda')$, since for 
linearly independent timelike vectors $v$ and $v'$  
Lemma\,\ref{lemma:StrictICSI} implies $(v\cdot v')^2-v^2v'^2>0$.
Note that if $\ell$ and $\ell'$ intersect
$q=q'=\text{intersection point}$.
\end{proof} 

Clearly, Einstein-simultaneity is conventional and physics proper 
should not depend on it. For example, the fringe-shift in the 
Michelson-Morley experiment is independent of how we choose to 
synchronise clocks. In fact, it does not even make use of any clock. 
So what is the general definition of a `simultaneity structure'?
It seems obvious that it should be a relation on spacetime that is 
at least symmetric (each event should be simultaneous to itself). 
Going from one-way simultaneity to the mutual synchronisation of 
two clocks, one might like to also require reflexivity (if $p$ is 
simultaneous to $q$ then $q$ is simultaneous to $p$), though this 
is not strictly required in order to one-way synchronise each clock 
in a set of clocks with one preferred `master clock', which is 
sufficient for many applications. 

Moreover, if we like to speak of the mutual simultaneity of sets of 
more than two events we need an equivalence relation on spacetime. 
The equivalence relation should be such that each inertial observer 
intersect each equivalence class precisely once. Let us call such 
a simultaneity structure `admissible'. Clearly there are zillions of 
such structures: just partition spacetime into any set of 
appropriate\footnote{For example, the hypersurfaces should not be 
asymptotically hyperboloidal, for then a constantly accelerated observer 
would not intersect all of them.} spacelike 
hypersurfaces (there are more possibilities at this point, like 
families of forward or backward lightcones). An \emph{absolute} 
admissible simultaneity structure would be one which is invariant 
(cf. Sect.\,\ref{sec:GroupActions}) under the automorphism group of 
spacetime. We have 
\begin{proposition}
\label{prop:AbsSim}
There exits precisely one admissible simultaneity 
structure which is invariant under the inhomogeneous proper 
orthochronous Galilei group and none that is invariant under the 
inhomogeneous proper orthochronous Lorentz group.
\end{proposition}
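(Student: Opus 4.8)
The plan is to analyze a putative invariant admissible structure through its single equivalence class $\Sigma$ containing the origin $o$, exploiting that $\Sigma$ must be invariant under the stabilizer of $o$ inside the relevant group. Since both groups contain all translations, the class through any point $o+w$ equals the translate $\Sigma+w$; hence all equivalence classes are translates of $\Sigma$, and it suffices to determine $\Sigma$ itself. Throughout I would use admissibility in the form: every timelike line meets $\Sigma$ in exactly one point. In particular, the timelike lines through $o$ meet $\Sigma$ only at $o$, so $\Sigma$ contains no point that is timelike-separated from $o$; and $\Sigma\neq\{o\}$, since a timelike line avoiding $o$ would otherwise miss $\Sigma$ entirely.

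For the Galilean group I would first note existence: the foliation $\{t=\text{const}\}$ is plainly admissible (each timelike line meets each leaf once) and invariant, since a Galilei transformation sends $\{t=c\}$ to $\{t=c+\tau\}$. For uniqueness I would use that the stabilizer of $o$ in $\IGal_{+\uparrow}$ consists of the rotations $\vec x\mapsto\mat D\vec x$ and the Galilean boosts $(t,\vec x)\mapsto(t,\vec x+t\vec v)$, under all of which $\Sigma$ is invariant. If $\Sigma$ contained a point $(t_0,\vec x_0)$ with $t_0\neq 0$, then sweeping $\vec v$ over $\reals^{n-1}$ in the boost would fill the whole slab $\{t=t_0\}$ into $\Sigma$; but then the rest-line $\{(s,\vec 0)\mid s\in\reals\}$ would meet $\Sigma$ both at $o$ and at $(t_0,\vec 0)$, contradicting admissibility. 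Hence $\Sigma\subseteq\{t=0\}$, and conversely the rest-line through each $(0,\vec y)$ forces $(0,\vec y)\in\Sigma$; so $\Sigma=\{t=0\}$ and the structure is the standard absolute simultaneity.

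For the Lorentz group the stabilizer of $o$ is $\group{SO}_0(1,n-1)$, so $\Sigma$ is a union of its orbits. By the observations above, $\Sigma\setminus\{o\}$ is nonempty and contains no timelike vector; hence it contains a lightlike or spacelike vector $v$, and therefore its entire $\group{SO}_0(1,n-1)$-orbit, which is either a light cone $\mathcal{L}^\pm$ or a one-sheeted spacelike hyperboloid $\{w\mid w^2=v^2\}$. The contradiction then comes from exhibiting a single timelike line meeting such an orbit twice. For the spacelike hyperboloid $\{w^2=-a^2\}$, the timelike line $\{s\,e_0+b\,e_1\mid s\in\reals\}$ with $b>a$ meets it at $s=\pm\sqrt{b^2-a^2}$; for a future light cone, an appropriately tilted timelike line enters and leaves the solid future cone, crossing its boundary twice. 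Either way some timelike line meets $\Sigma$ in at least two points, contradicting admissibility, so no invariant admissible structure exists for $\ILor_{+\uparrow}$.

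The substantive step, and the main obstacle, is the Lorentzian case: the tension is that admissibility forbids $\Sigma$ from containing any timelike-separated point, while $\group{SO}_0(1,n-1)$-invariance forces it to contain an entire spacelike hyperboloid (or light cone) as soon as it is larger than $\{o\}$ --- and such hyperboloids and cones are unavoidably over-intersected by straight timelike lines. The Galilean case is easier precisely because Galilean boosts fix the slab $\{t=0\}$ pointwise, so the invariant non-timelike orbit through a generic point is a flat spacelike slab rather than a curved hyperboloid.
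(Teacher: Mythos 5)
Your Galilean half is correct, and so is the spacelike sub-case of your Lorentzian argument; but the lightlike sub-case rests on a false geometric claim, and this is a genuine gap. A straight timelike line can \emph{never} cross a future light cone twice: if $q_1,q_2\in\mathcal{L}^+_p$ with $q_2-q_1$ future timelike, then $q_2-p=(q_1-p)+(q_2-q_1)$ is the sum of a future lightlike and a future timelike vector, hence future timelike, contradicting $(q_2-p)^2=0$. Equivalently, the solid cone $\mathcal{\bar C}^+_p$ absorbs future timelike directions, so once a timelike line enters it, it never leaves; in fact every inertial line not through $p$ meets $\{p\}\cup\mathcal{L}^+_p$ \emph{exactly} once (it passes from the chronological past of $p$ to its chronological future, so it must cross the boundary, and by the above it does so only once). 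This is precisely why the paper, when defining admissibility, remarks that families of forward or backward lightcones are further candidates besides spacelike hypersurfaces: the cone case cannot be excluded by over-intersection, which is exactly what your ``appropriately tilted timelike line enters and leaves the solid future cone'' step attempts.

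The repair is available inside your own setup, but it must use translation invariance, which your Lorentzian paragraph never invokes after the first step. If $w\in\Sigma$ is lightlike, then the class of $w$ is simultaneously $\Sigma$ (because $w\in\Sigma$) and $\Sigma+w$ (you showed the class through $o+w$ is the translate $\Sigma+w$), whence $\Sigma=\Sigma+w$. For $n\geq 3$ the orbit of $w$ is the full punctured cone, so $\mathcal{L}^+\subset\Sigma$; picking $w'\in\mathcal{L}^+$ non-parallel to $w$ then gives $w+w'\in\Sigma$ with $(w+w')^2=2\,w\cdot w'>0$, i.e.\ a point of $\Sigma$ timelike separated from $o\in\Sigma$, and the inertial line through these two points meets $\Sigma$ twice---the desired contradiction. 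Two further remarks. First, the restriction $n\geq 3$ is not cosmetic: in $n=2$ the Lorentzian half of the statement would actually fail, since the foliation by null lines of one fixed direction is admissible and invariant under $\ILor^2_{+\uparrow}$ (boosts rescale each null direction), and correspondingly your orbit analysis silently breaks there because the lightlike orbit is a single ray. Second, be aware that the paper does not prove Proposition\,\ref{prop:AbsSim} in-text at all---it cites an external proof and offers instead the group-theoretic maximality criterion of Proposition\,\ref{prop:AbsSimMath} ($\Stab$ maximal in the Lorentz case, non-maximal in the Galilei case); so your direct construction, once the lightlike case is patched as above, is a legitimately different and self-contained route.
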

A proof is given in \cite{Giulini:2002a}. There is a group-theoretic 
reason that highlights this existential difference: 
\begin{proposition}
\label{prop:AbsSimMath} 
Let $G$ be a group with transitive action on a set $S$. Let 
$\Stab(p)\subset G$ be the stabiliser subgroup for $p\in S$
(due to transitivity all stabiliser subgroups are conjugate). 
Then $S$ admits a $G$-invariant equivalence relation 
$R\subset S\times S$ iff $\Stab(p)$ is \emph{not} maximal, 
that is, iff $\Stab(p)$ is properly contained in a proper 
subgroup $H$ of $G$: $\Stab(p)\subsetneq H\subsetneq G$. 
\end{proposition}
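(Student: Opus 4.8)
The plan is to identify $S$ with the homogeneous space $G/\Stab(p)$ and to set up the standard correspondence between $G$-invariant equivalence relations on $S$ and intermediate subgroups $H$ with $\Stab(p)\subseteq H\subseteq G$. First I would record a caveat that must be read into the statement: the two trivial equivalence relations---equality (all classes singletons) and the all-relation $S\times S$ (a single class)---are always $G$-invariant, so ``equivalence relation'' here must mean a \emph{proper} one, whose classes are neither points nor all of $S$. This is exactly the relevant case for simultaneity, where each class is a proper hypersurface met once by every inertial observer. With that understood, I set $K:=\Stab(p)$ and invoke the orbit--stabiliser theorem: transitivity gives a $G$-equivariant bijection $G/K\to S$, $gK\mapsto g\cdot p$, so it suffices to argue on $G/K$.

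For the $(\Leftarrow)$ direction, given $K\subsetneq H\subsetneq G$, I would define $R$ as the pullback of equality under the canonical $G$-equivariant projection $\pi:G/K\to G/H$, $gK\mapsto gH$ (well defined because $K\subseteq H$). Explicitly $aK\sim bK$ iff $a^{-1}b\in H$. The classes of $R$ are the fibres of $\pi$; the fibre over $eH$ is $H/K$. Since $K\subsetneq H$ this class has more than one element (so $R\ne$ equality), and since $H\subsetneq G$ there is more than one fibre (so $R\ne S\times S$); hence $R$ is proper, and its $G$-invariance is immediate from equivariance of $\pi$.

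For the $(\Rightarrow)$ direction, given a proper $G$-invariant equivalence relation $R$, let $B:=[p]_R$ be the $R$-class of $p$ and set $H:=\{g\in G\mid gB=B\}$, the setwise stabiliser of the block $B$. The three inclusions to verify are routine once one notes that, by $G$-invariance, every $g$ permutes the classes with $g[x]=[g\cdot x]$. Then: (i) $K\subseteq H$, since $g\cdot p=p$ forces $gB=[g\cdot p]=[p]=B$; (ii) $H\ne G$, because $R\ne S\times S$ yields some $q\notin B$, and transitivity gives $g$ with $g\cdot p=q$, whence $gB=[q]\ne B$; (iii) $H\ne K$, because $R\ne$ equality yields $q\in B$ with $q\ne p$, and transitivity gives $g$ with $g\cdot p=q\in B$, so $gB=[q]=B$ (hence $g\in H$) while $g\notin K$ since $g\cdot p\ne p$. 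Thus $K\subsetneq H\subsetneq G$ and $K$ is not maximal.

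The main obstacle is not any single computation---the inclusions are short---but rather the conceptual bookkeeping that makes them correct: pinning down the precise (proper) sense of ``equivalence relation'' so that the iff is actually true, and checking that the block $B$ is genuinely a full $R$-class, so that $G$-invariance of $R$ translates into the set-equation $g[x]=[g\cdot x]$ used throughout. Once that translation is in place, the equivalence between ``proper invariant relation'' and ``proper intermediate subgroup'' drops out, and the existence/non-existence dichotomy of Proposition\,\ref{prop:AbsSim} for the Galilei versus the Lorentz group becomes precisely the question of whether the relevant stabiliser is maximal.
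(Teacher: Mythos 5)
Your proof is correct and follows precisely the standard block--system/intermediate--subgroup correspondence; the paper gives no proof of its own but delegates to Jacobson, \emph{Basic Algebra~I}, Theorem~1.12, which is exactly this argument, and your opening caveat that the two trivial $G$-invariant relations (the diagonal and $S\times S$) must be excluded is the correct reading of the statement. One micro-step in your $(\Rightarrow)$ direction deserves to be explicit: from $R\ne$ equality you only get \emph{some} nontrivial class $[x]$, and you must transport it to $p$ before picking $q\in B\setminus\{p\}$ --- choose $g$ with $g\cdot x=p$, so that $B=[p]=g[x]$ is nontrivial by your own observation $g[x]=[g\cdot x]$.
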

A proof of this may be found in \cite{Jacobson:BasicAlgebraI}
(Theorem\,1.12). Regarding the action of the inhomogeneous Galilei 
and Lorentz groups on spacetime, their stabilisers are the 
corresponding homogeneous groups. Now, the homogeneous Lorentz 
group is maximal in the inhomogeneous one, whereas the homogeneous 
Galilei group is not maximal in the inhomogeneous one, since 
it can still be supplemented by time translations without the 
need to also invoke space 
translations.\footnote{The homogeneous Galilei group only acts 
on the spatial translations, not the time translations, whereas the 
homogeneous Lorentz group acts irreducibly on the vector space 
of translations.} This, according to Proposition\,\ref{prop:AbsSimMath}, 
is the group theoretic origin of the absence of any invariant 
simultaneity structure in the Lorentzian case. 

However, one may ask whether there are simultaneity structures 
\emph{relative} to some \emph{additional} structure $X$. As additional 
structure, $X$, one could, for example, take an inertial reference 
frame, which is characterised by a foliation of spacetime by 
parallel timelike lines. The stabiliser subgroup of that 
structure within the proper orthochronous Poincar\'e group is 
given by the semidirect product of spacetime translations with 
all rotations in the hypersurfaces perpendicular to the lines in 
$X$:
\begin{equation}
\label{eq:Aut_X}
\mathrm{Stab}_X(\ILor_{\uparrow +})
\cong\reals^4\rtimes\group{SO}(3)\,.
\end{equation}
Here the $\group{SO}(3)$ only acts on the spatial translations,
so that the group is also isomorphic to $\reals\times\group{E}(3)$,
where $\group{E}(3)$ is the group of Euclidean motions in 
3-dimensions (the hyperplanes perpendicular to the lines in $X$). 
We can now ask: how many admissible 
$\mathrm{Stab}_X(\ILor_{\uparrow +})$ -- invariant equivalence 
relations are there. The answer is      
\begin{proposition}
\label{prop:RealSim}
There exits precisely one admissible simultaneity 
structure which is invariant under $\mathrm{Stab}_X(\ILor_{\uparrow +})$,
where $X$ represents am inertial reference frame (a foliation of
spacetime by parallel timelike lines). It is given by Einstein 
simultaneity, that is, the equivalence classes are the hyperplanes
perpendicular to the lines in $X$.
\end{proposition}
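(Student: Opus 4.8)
The plan is to exploit the full translation subgroup sitting inside $G:=\Stab_X(\ILor_{\uparrow+})\cong\reals^4\rtimes\group{SO}(3)$ to reduce the problem to a question about a single subgroup of the vector space $V$, and then to pin that subgroup down using isotropy together with admissibility. Fix the origin $o$ so that the lines of $X$ run in the direction of a fixed unit timelike vector $e_0$, and choose a $g$-orthonormal basis $\{e_0,\dots,e_3\}$; then the $\group{SO}(3)\subset G$ acts as rotations $\mat{D}$ of the spatial hyperplane $\reals^3:=\Span\{e_1,e_2,e_3\}=e_0^\perp$ while fixing $e_0$.

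First I would observe that a $G$-invariant equivalence relation $R$ is in particular translation-invariant, so $p\sim q$ depends only on $p-q$. Setting $S:=\{v\in V\mid o+v\sim o\}$, reflexivity, symmetry and transitivity of $R$ translate directly into $0\in S$, $S=-S$ and $S+S\subseteq S$, i.e. $S$ is a subgroup of $(V,+)$ with $p\sim q\Leftrightarrow p-q\in S$. Invariance under the linear rotations $\mat{D}\in\group{SO}(3)$ (which fix $o$) then gives $\mat{D}S=S$. The equivalence classes are exactly the cosets $p+S$, so the problem becomes: classify the $\group{SO}(3)$-invariant subgroups $S\le V$ whose coset partition is admissible. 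Admissibility itself I would split into two conditions. Since two points $r+\lambda_1u$, $r+\lambda_2u$ of a timelike line (direction $u$ timelike) lie in one class iff $(\lambda_1-\lambda_2)u\in S$, the requirement that each timelike line meet each class \emph{at most} once is equivalent to: $S$ contains no nonzero timelike vector. Letting the base point of the class range over all of $V$, the requirement that it meet each class \emph{at least} once becomes $S+\reals u=V$ for every timelike $u$.

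The decisive step — and the main obstacle, because $G$-invariance by itself only yields a \emph{subgroup} and not a vector subspace, so exotic (dense, non-closed) subgroups must be excluded — is the rigidity lemma that any $\group{SO}(3)$-invariant subgroup $S$ containing one nonzero spatial vector $\vec x_0$ already contains all of $\reals^3$. Indeed the orbit $\group{SO}(3)\vec x_0$ is the full sphere of radius $r=\Vert\vec x_0\Vert$, all of whose chords (differences of two sphere points) lie in $S$ and fill the ball $\{\Vert\vec w\Vert\le 2r\}$; a subgroup containing a ball about the origin is all of $\reals^3$, since any $\vec z$ is $N(\vec z/N)$ with $\Vert\vec z/N\Vert\le 2r$ for large $N$.

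Granting this, the case analysis on $S\cap\reals^3$ closes the argument. If $S\cap\reals^3\neq\{0\}$ then $\reals^3\subseteq S$, whence $S=\reals^3+\{t\,e_0\mid t\in\bar S\}$ for a subgroup $\bar S\le\reals$; but $\bar S\neq\{0\}$ would place a timelike vector $t\,e_0$ into $S$, contradicting the no-timelike condition, so $S=\reals^3$. If instead $S\cap\reals^3=\{0\}$, then for $s=t\,e_0+\vec x\in S$ one has $\mat{D}s-s=\mat{D}\vec x-\vec x\in S\cap\reals^3=\{0\}$ for all $\mat{D}$, forcing $\vec x=0$; thus $S\subseteq\reals e_0$ consists of timelike vectors and the no-timelike condition gives $S=\{0\}$, which violates $S+\reals u=V$ for $n\geq2$. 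Hence the unique admissible invariant choice is $S=e_0^\perp$, whose cosets are precisely the spacelike hyperplanes orthogonal to the lines of $X$, i.e. Einstein simultaneity (cf. Corollary \ref{corol:EinsteinSim}). Finally I would record the trivial converse, that this $S$ does satisfy both conditions, which is immediate since $e_0^\perp$ is spacelike and complementary to every timelike direction.
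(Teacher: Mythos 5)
Your proof is correct, and it is self-contained where the paper is not: for this proposition the paper gives no in-text argument at all, merely citing \cite{Giulini:2002a} and adding a group-theoretic gloss tied to Proposition\,\ref{prop:AbsSimMath}. Your reduction---translation invariance turns the equivalence relation into an $\group{SO}(3)$-invariant subgroup $S\leq (V,+)$ with classes the cosets $p+S$, and admissibility becomes the pair of conditions ``$S$ contains no nonzero timelike vector'' and ``$S+\reals u=V$ for all timelike $u$''---is exactly the concrete form of the abstract correspondence the paper invokes: invariant equivalence relations for a transitive action correspond to intermediate subgroups $\Stab(p)\subseteq H\subseteq G$ (Theorem\,1.12 of \cite{Jacobson:BasicAlgebraI}), and here every such $H$ splits as $H=S\rtimes\group{SO}(3)$ with $S=H\cap\reals^4$ an $\group{SO}(3)$-invariant subgroup of the translations. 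Your unique solution $S=e_0^\perp$ corresponds to $H\cong\group{E}(3)$, precisely the intermediate subgroup named in the paper's closing remark ($\group{SO}(3)\subsetneq\group{E}(3)\subsetneq\mathrm{Stab}_X(\ILor_{\uparrow +})$), which by itself only explains \emph{existence}, not uniqueness. The genuinely decisive ingredient you supply is the chord lemma: since invariance yields only a subgroup of $(V,+)$ and not a linear subspace, wild (dense, Hamel-basis type) subgroups must be excluded without any continuity hypothesis, and your observation that the $\group{SO}(3)$-orbit of a nonzero spatial vector is a full sphere whose chords fill the ball of twice the radius, so that the subgroup generated is all of $\reals^3$, does exactly this. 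Your case analysis is complete---$S\cap\reals^3\neq\{0\}$ forces $S=e_0^\perp$ via the no-timelike condition, $S\cap\reals^3=\{0\}$ collapses $S$ to $\{0\}$ and is then killed by the ``at least once'' half of admissibility---and you also verify the converse, that $S=e_0^\perp$ is invariant and admissible (cf.\ Corollary\,\ref{corol:EinsteinSim}), so both existence and uniqueness are established.
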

The proof is given in~\cite{Giulini:2002a}. Note again the 
connection to quoted group-theoretic result: The stabiliser 
subgroup of a point in $\mathrm{Stab}_X(\ILor_{\uparrow +})$ 
is $\group{SO(3)}$, which is clearly not maximal in 
$\mathrm{Stab}_X(\ILor_{\uparrow +})$ since it is a proper 
subgroup of $\group{E}(3)$ which, in turn, is a proper subgroup 
of $\mathrm{Stab}_X(\ILor_{\uparrow +})$.

\subsection{The lattices of causally and chronologically complete sets}
\label{sec:CausCompSets}
Here we wish to briefly discuss another important structure associated 
with causality relations in Minkowski space, which plays a fundamental 
r\^ole in modern Quantum Field Theory (see e.g. \cite{Haag:LocQuantPhys}). 
Let $S_1$ and $S_2$ be subsets of $\mathbb{M}^n$. We say that $S_1$ and 
$S_2$ are \emph{causally disjoint} or \emph{spacelike separated} iff $p_1-p_2$ 
is spacelike, i.e. $(p_1-p_2)^2<0$, for any $p_1\in S_1$ and $p_2\in S_2$. 
Note that because a point is not spacelike separated from itself, 
causally disjoint sets are necessarily disjoint in the 
ordinary set-theoretic sense---the converse being of course not true. 

For any subset $S\subseteq\mathbb{M}^n$ we denote by $S'$ the largest 
subset of $\mathbb{M}^n$ which is causally disjoint to $S$. The set 
$S'$ is called the \emph{causal complement} of $S$. The procedure of 
taking the causal complement can be iterated and we set $S'':=(S')'$
etc. 
$S''$ is called the \emph{causal completion} of $S$. It also follows 
straight from the definition that $S_1\subseteq S_2$ implies 
$S'_1\supseteq S'_2$ and 
also $S''\supseteq S$. If $S''=S$ we call $S$ \emph{causally complete}. 
We note that the causal complement $S'$ of any given $S$ is 
automatically causally complete. Indeed, from $S''\supseteq S$ we 
obtain  $(S')''\subseteq S'$, but the first inclusion applied to 
$S'$ instead of $S$ leads to $(S')''\supseteq S'$, showing $(S')''=S'$. 
Note also that for any subset $S$ its causal completion, $S''$, is the 
smallest causally complete subset containing $S$, for if 
$S\subseteq K\subseteq S''$ with $K''=K$, we derive from the first 
inclusion by taking ${}''$ that $S''\subseteq K$, so that the second 
inclusion yields $K=S''$. Trivial examples of causally complete 
subsets of $\mathbb{M}^n$ are the empty set, single points, and the 
total set $\mathbb{M}^n$. Others are the open diamond-shaped 
regions (\ref{eq:OpenDiamonds}) as well as their closed counterparts:      
\begin{equation}
\label{eq:ClosedDiamonds}
\bar{U}(p,q):=(\mathcal{\bar C}^+_p\cap\mathcal{\bar C}^-_q)\cup
        (\mathcal{\bar C}^+_q\cap\mathcal{\bar C}^-_p)\,.
\end{equation}

We now focus attention to the set $\mathrm{Caus}(\mathbb{M}^n)$ 
of causally complete subsets of $\mathbb{M}^n$, including the 
empty set, $\emptyset$, and the total set, $\mathbb{M}^n$, which are 
mutually causally complementary. It is partially ordered by ordinary  
set-theoretic inclusion $(\subseteq)$ (cf. Sect.\,\ref{sec:GroupActions})
and carries the `dashing operation' $(')$ of taking the causal complement.
Moreover, on $\mathrm{Caus}(\mathbb{M}^n)$ we can define 
the operations of `meet' and `join', denoted by $\wedge$ and $\vee$ 
respectively, as follows: Let $S_i\in\mathrm{Caus}(\mathbb{M}^n)$ where 
$i=1,2$, then $S_1\wedge S_2$ is the largest causally complete subset in the 
intersection $S_1\cap S_2$ and $S_1\vee S_2$ is the smallest 
causally complete set containing the union $S_1\cup S_2$.
 
The operations of $\wedge$ and $\vee$ can be characterised in terms 
of the ordinary set-theoretic intersection $\cap$ together with the 
dashing-operation. To see this, 
consider two causally complete sets, $S_i$ where $i=1,2$, and note 
that the set of points that are spacelike separated from $S_1$ and 
$S_2$ are obviously given by $S'_1\cap S'_2$, but also by 
$(S_1\cup S_2)'$, 
so that 
\begin{subequations}
\label{eg:CapCup}
\begin{alignat}{2}
\label{eq:CapCup1}
& S'_1\cap S'_2&&\,=\,(S_1\cup S_2)'\,,\\
\label{eq:CapCup2}
& S_1\cap S_2&&\,=\,(S'_1\cup S'_2)'\,.
\end{alignat}
\end{subequations}
Here (\ref{eq:CapCup1}) and (\ref{eq:CapCup2}) are equivalent 
since any $S_i\in\mathrm{Caus}(\mathbb{M}^n)$ can be written as 
$S_i=P'_i$, namely $P_i=S'_i$. If $S_i$ runs through all 
sets in $\mathrm{Caus}(\mathbb{M}^n)$ so does $P_i$. Hence any 
equation that holds generally for all $S_i\in\mathrm{Caus}(\mathbb{M}^n)$
remains valid if the $S_i$ are replaced by $S'_i$. 

Equation (\ref{eq:CapCup2}) immediately shows that $S_1\cap S_2$
is causally complete (since it is the $'$ of something). Taking the 
causal complement of (\ref{eq:CapCup1}) we obtain the desired relation 
for $S_1\vee S_2:=(S_1\cup S_2)''$. Together we have  
\begin{subequations}
\label{eg:MeetJoin}
\begin{alignat}{2}
\label{eg:MeetJoin1}
& S_1\wedge S_2&&\,=\,S_1\cap S_2\,,\\
\label{eg:MeetJoin2}
& S_1\vee S_2&&\,=\,(S'_1\cap S'_2)'\,.
\end{alignat}
\end{subequations}
From these we immediately derive  
\begin{subequations}
\label{eg:Negating}
\begin{alignat}{2}
\label{eg:Negating1}
& (S_1\wedge S_2)'&&\,=\,S'_1\vee S'_2\,,\\
\label{eg:Negating2}
& (S_1\vee S_2)'&&\,=\,S'_1\wedge S'_2\,.
\end{alignat}
\end{subequations}

All what we have said so far for the set $\mathrm{Caus}(\mathbb{M}^n)$
could be repeated verbatim for the set $\mathrm{Chron}(\mathbb{M}^n)$
of \emph{chronologically complete} subsets. We say that $S_1$ and 
$S_2$ are \emph{chronologically disjoint} or \emph{non-timelike separated}, 
iff $S_1\cap S_2=\emptyset$ and $(p_1-p_2)^2\leq 0$ for any $p_1\in S_1$ 
and $p_2\in S_2$. $S'$, the \emph{chronological complement} of $S$, is 
now the largest subset of $\mathbb{M}^n$ which is chronologically 
disjoint to $S$. The only difference between the causal and the 
chronological complement of $S$ is that the latter now contains 
lightlike separated points outside $S$. 
A set $S$ is \emph{chronologically complete} iff $S=S''$, where the 
dashing now denotes the operation of taking the chronological complement. 
Again, for any set $S$ the set $S'$ is automatically chronologically 
complete and $S''$ is the smallest  chronologically complete subset 
containing $S$. Single points are chronologically complete subsets.
All the formal properties regarding ${}'$, $\wedge$, and $\vee$ stated 
hitherto for $\mathrm{Caus}(\mathbb{M}^n)$ are the same for 
$\mathrm{Chron}(\mathbb{M}^n)$. 

One major difference between $\mathrm{Caus}(\mathbb{M}^n)$ and 
$\mathrm{Chron}(\mathbb{M}^n)$ is that the types of diamond-shaped 
sets they contain are different. For example, the closed ones, 
(\ref{eq:ClosedDiamonds}), are members of both. The open ones, 
(\ref{eq:OpenDiamonds}), are contained in $\mathrm{Caus}(\mathbb{M}^n)$ 
but \emph{not} in $\mathrm{Chron}(\mathbb{M}^n)$. Instead, 
$\mathrm{Chron}(\mathbb{M}^n)$ contains the closed diamonds whose 
`equator'\footnote{By `equator' we mean the $(n-2)$--sphere in which the 
forward and backward light-cones in (\ref{eq:ClosedDiamonds}) intersect. 
In the two-dimensional drawings the `equator' is represented by just 
two points marking the right and left corners of the diamond-shaped set.}
have been removed. An essential structural difference between 
$\mathrm{Caus}(\mathbb{M}^n)$ and $\mathrm{Chron}(\mathbb{M}^n)$
will be stated below, after we have introduced the notion of a 
lattice to which we now turn. 

To put all these formal properties into the right frame we recall 
the definition of a lattice. Let $(L,\leq)$ be a partially ordered set 
and $a,b$ any two elements in $L$. Synonymously with $a\leq b$ we 
also write $b\geq a$ and say that $a$ is smaller than $b$, $b$ is 
bigger than $a$, or $b$ majorises $a$. We also write $a<b$ if 
$a\leq b$ and $a\ne b$. If, with respect to $\leq$, their greatest 
lower and least upper bound exist, they are denoted by 
$a\wedge b$---called the `meet of $a$ and $b$'---and $a\vee b$---called 
the `join of $a$ and $b$'---respectively. A partially ordered set for 
which the greatest lower and least upper bound exist for any pair 
$a,b$ of elements from $L$ is called a \emph{lattice}.

We now list some of the most relevant additional structural elements 
lattices can have: A lattice is called \emph{complete} if greatest 
lower and least upper bound exist for any subset $K\subseteq L$. 
If $K=L$ they are called $0$ (the smallest element in the lattice)
and $1$ (the biggest element in the lattice) respectively. An 
\emph{atom} in a lattice is an element $a$ which majorises only $0$, 
i.e. $0\leq a$ and if $0\leq b\leq a$ then $b=0$ or $b=a$. The lattice 
is called \emph{atomic} if each of its elements different from $0$ 
majorises an atom. An atomic lattice is called \emph{atomistic} if 
every element is the join of the atoms it majorises. An element 
$c$ is said to \emph{cover} $a$ if $a<c$ and if $a\leq b\leq c$ 
either $a=b$ or $b=c$. An atomic lattice is said to have the 
\emph{covering property} if, for every element $b$ and every atom $a$
for which $a\wedge b=0$, the join $a\vee b$ covers $b$.  

The subset $\{a,b,c\}\subseteq L$ is called a 
\emph{distributive triple} if 
\begin{subequations}
\label{eq:DistTriple}
\begin{alignat}{3}
\label{eq:DistTriple1}
& a\wedge (b\vee c) &&\,=\, (a\wedge b)\vee(a\wedge c)
\quad&&\text{and $(a,b,c)$ cyclically permuted}\,,\\
\label{eq:DistTriple2}
& a\vee (b\wedge c) &&\,=\, (a\vee b)\wedge(a\vee c)
\quad&&\text{and $(a,b,c)$ cyclically permuted}\,.   
\end{alignat}
\end{subequations}
\begin{definition}
A lattice is called \emph{distributive} or \emph{Boolean} if every triple 
$\{a,b,c\}$ is distributive. It is called \emph{modular} if every triple
$\{a,b,c\}$ with $a\leq b$ is distributive.
\end{definition} 
It is straightforward to check from (\ref{eq:DistTriple})
that modularity is equivalent to the following single condition: 
\begin{equation}
\label{eq:Modularity}
\text{modularity}\Leftrightarrow 
a\vee (b\wedge c)=b\wedge(a\vee c)\quad
\text{for all $a,b,c\in L$ s.t. $a\leq b$.}
\end{equation}

If in a lattice with smallest element $0$ and greatest element $1$ 
a map $L\rightarrow L$, $a\mapsto a'$, exist such that 
\begin{subequations}
\label{eq:Orthocomplement}
\begin{alignat}{1}
\label{eq:Orthocomplement1}
& a'':=(a')'=a\,,\\
\label{eq:Orthocomplement2}
& a\leq b\Rightarrow b'\leq a'\,,\\
\label{eq:Orthocomplement3}
& a\wedge a'=0\,,\quad a\vee a'=1\,,
\end{alignat}
\end{subequations}
the lattice is called \emph{orthocomplemented}. It follows that whenever 
the meet and join of a subset $\{a_i\mid i\in I\}$ ($I$ is some index set) 
exist one has De\,Morgan's laws\footnote{From these laws it also appears that 
the definition (\ref{eq:Orthocomplement3}) is redundant, as each of its 
two statements follows from the other, due to $0'=1$.}:   
\begin{subequations}
\label{eq:GenDeMorgan}
\begin{alignat}{2}
\label{eq:GenDeMorgan1}
&\bigl(\textstyle{\bigwedge_{i\in I}}\ a_i\bigr)'&&\,=\,
\textstyle{\bigvee_{i\in I}}\ a'_i\,,\\
\label{eq:GenDeMorgan2}
&\bigl(\textstyle{\bigvee_{i\in I}}\ a_i\bigr)'&&\,=\,
\textstyle{\bigwedge_{i\in I}}\ a'_i\,.
\end{alignat}
\end{subequations}

For orthocomplemented lattices there is a still weaker version 
of distributivity than modularity, which turns out to be physically 
relevant in various contexts: 
\begin{definition}
\label{def:Orthomodular}
An orthocomplemented lattice is called \emph{orthomodular} if every 
triple $\{a,b,c\}$ with $a\leq b$ and $c\leq b'$ is distributive. 
\end{definition}
From (\ref{eq:Modularity}) and using that $b\wedge c=0$ for $b\leq c'$ 
one sees that this is equivalent to the single condition (renaming 
$c$ to $c'$): 
\begin{subequations}
\label{eq:Orthomodularity}
\begin{alignat}{3}
\label{eq:Orthomodularity1}
\text{orthomod.}\
&\Leftrightarrow \quad
&&a\,=\,b\wedge(a\vee c')
\quad&&\text{for all $a,b,c\in L$ s.t. $a\leq b\leq c$\,,}\\
\label{eq:Orthomodularity2}
&\Leftrightarrow 
&&a\,=\,b\vee(a\wedge c')
\quad&&\text{for all $a,b,c\in L$ s.t. $a\geq b\geq c$\,,}
\end{alignat}
\end{subequations}
where the second line follows from the first by taking its 
orthocomplement and renaming $a',b',c$ to $a,b,c'$. It turns out 
that these conditions can still be simplified by making them 
independent of $c$. In fact, (\ref{eq:Orthomodularity}) are 
equivalent to 
\begin{subequations}
\label{eq:AltOrthomodularity}
\begin{alignat}{3}
\label{eq:AltOrthomodularity1}
\text{orthomod.}\
&\Leftrightarrow \quad
&&a=b\wedge(a\vee b')
\quad&&\text{for all $a,b\in L$ s.t. $a\leq b$\,,}\\
\label{eq:AltOrthomodularity2}
&\Leftrightarrow 
&&a=b\vee(a\wedge b')
\quad&&\text{for all $a,b\in L$ s.t. $a\geq b$\,.}
\end{alignat}
\end{subequations}
It is obvious that (\ref{eq:Orthomodularity}) implies 
(\ref{eq:AltOrthomodularity}) (set $c=b$). But the converse is 
also true. To see this, take e.g. (\ref{eq:AltOrthomodularity2})
and choose any $c\leq b$. Then $c'\geq b'$, $a\geq b$ (by hypothesis), 
and $a\geq a\wedge c'$ (trivially), so that $a\geq b\vee(a\wedge c')$. 
Hence $a\geq b\vee(a\wedge c')\geq b\vee(a\wedge b')=a$, which proves 
(\ref{eq:Orthomodularity2}).

Complete orthomodular atomic lattices are automatically atomistic. 
Indeed, let $b$ be the join of all atoms majorised by $a\ne 0$. Assume
$a\ne b$ so that necessarily $b<a$, then (\ref{eq:AltOrthomodularity2}) 
implies $a\wedge b'\ne 0$. Then there exists an atom $c$ majorised by 
$a\wedge b'$. This implies $c\leq a$ and $c\leq b'$, hence also 
$c\not\leq b$. But this is a contradiction, since $b$ is by definition 
the join of all atoms majorised by $a$.    
 
Finally we mention the notion of \emph{compatibility} or \emph{commutativity}, 
which is a symmetric, reflexive, but generally not transitive relation 
$R$ on an orthomodular lattice (cf. Sec.\,\ref{sec:GroupActions}). 
We write $a\natural b$ for $(a,b)\in R$ and define: 
\begin{subequations}
\label{eq:DefCompatibility}
\begin{alignat}{3}
\label{eq:DefCompatibility1}
a\natural b\quad
&\Leftrightarrow\quad
&&a&&\,=\,(a\wedge b)\vee (a\wedge b')\,,\\
\label{eq:DefCompatibility2}
&\Leftrightarrow\quad
&&b&&\,=\,(b\wedge a)\vee (b\wedge a')\,.
\end{alignat}
\end{subequations}
The equivalence of these two lines, which shows that the relation of 
being compatible is indeed symmetric, can be demonstrated using 
orthomodularity as follows: Suppose (\ref{eq:DefCompatibility1}) holds; 
then $b\wedge a'=b\wedge(b'\vee a')\wedge(b\vee a')=b\wedge(b'\vee a')$,
where we used  the orthocomplement of (\ref{eq:DefCompatibility1}) to 
replace $a'$ in the first expression and the trivial identity 
$b\wedge(b\vee a')=b$ in the second step. Now, applying 
(\ref{eq:AltOrthomodularity2}) to $b\geq a\wedge b$ we get 
$b=(b\wedge a)\vee[b\wedge(b'\vee a')]=(b\wedge a)\vee (b\wedge a')$, 
i.e. (\ref{eq:DefCompatibility2}). The converse,  
$(\ref{eq:DefCompatibility2})\Rightarrow(\ref{eq:DefCompatibility1})$, 
is of course entirely analogous. 
 
From (\ref{eq:DefCompatibility}) a few things are immediate: 
$a\natural b$ is equivalent to $a\natural b'$, $a\natural b$ is 
implied by $a\leq b$ or $a\leq b'$, and the elements $0$ and $1$ 
are compatible with all elements in the lattice. The \emph{centre} 
of a lattice is the set of elements which are compatible with all 
elements in the lattice. In fact, the centre is a Boolean sublattice. 
If the centre contains no other elements than $0$ and $1$ the 
lattice is said to be \emph{irreducible}. The other extreme is a 
Boolean lattice, which is identical to its own centre. Indeed, 
if $(a,b,b')$ is a distributive triple, one has 
$a=a\wedge 1=a\wedge (b\vee b')=(a\wedge b)\vee (a\wedge b')
\Rightarrow(\ref{eq:DefCompatibility1})$.

After these digression into elementary notions of lattice theory we 
come back to our examples of the sets $\mathrm{Caus}(\mathbb{M}^n)$
$\mathrm{Chron}(\mathbb{M}^n)$. Our statements above amount 
to saying that they are complete, atomic, and orthocomplemented 
lattices. The partial order relation $\leq$ is given by $\subseteq$ 
and the extreme elements $0$ and $1$ correspond to the empty set 
$\emptyset$ and the total set $\mathbb{M}^n$, the points of which 
are the atoms. Neither the covering property nor modularity 
is shared by any of the two lattices, as can be checked by way of 
elementary counterexamples.\footnote{An immediate counterexample for 
the covering property is this: Take two timelike separated points 
(i.e. atoms) $p$ and $q$. Then $\{p\}\wedge\{q\}=\emptyset$ whereas 
$\{p\}\vee\{q\}$ is given by the closed diamond (\ref{eq:ClosedDiamonds}). 
Note that this is true in $\mathrm{Caus}(\mathbb{M}^n)$ \emph{and} 
$\mathrm{Chron}(\mathbb{M}^n)$. But, clearly, $\{p\}\vee\{q\}$ does 
not cover either $\{p\}$ or $\{q\}$.} In particular, neither of them 
is Boolean. However, in \cite{Cegla.Jadczyk:1977} it was shown that 
$\mathrm{Chron}(\mathbb{M}^n)$ is orthomodular; see also \cite{Casini:2002} 
which deals with more general spacetimes. Note that by the argument 
given above this implies that $\mathrm{Chron}(\mathbb{M}^n)$ is 
atomistic. In contrast, $\mathrm{Caus}(\mathbb{M}^n)$ is definitely 
\emph{not} orthomodular, as is e.g. seen by the counterexample given 
in Fig.\,\ref{fig:CountEx}.\footnote{\label{foot:Haag}
Regarding this point, there are some conflicting statements in the 
literature. The first edition of \cite{Haag:LocQuantPhys} states 
orthomodularity of $\mathrm{Chron}(\mathbb{M}^n)$ in 
Proposition\,4.1.3, which is removed in the second edition without 
further comment. The proof offered in the first edition uses 
(\ref{eq:AltOrthomodularity1}) as definition of orthomodularity, 
writing $K_1$ for $a$ and $K_2$ for b. The crucial step is the claim 
that any spacetime event in the set $K_2\wedge(K_1\vee K_2')$ lies in 
$K_2$ and that any causal line through 
it must intersect either $K_1$ or $K_2'$.  The last statement is, 
however, not correct since the join of two sets (here $K_1$ and $K_2'$) 
is generally larger than the domain of dependence of their ordinary 
set-theoretic union; compare Fig.\,\ref{fig:CountEx}. :
(Generally, the domain of dependence of a subset $S$ of spacetime $M$ 
is the largest subset $D(S)\subseteq M$ such that any inextensible 
causal curve that intersects $D(S)$ also intersects $S$.)} 
It is also not difficult to prove that $\mathrm{Chron}(\mathbb{M}^n)$ 
is irreducible.\footnote{In general spacetimes $M$, the failure of 
irreducibility of $\mathrm{Chron}(M)$ is directly related to the 
existence of closed timelike curves; see~\cite{Casini:2002}.}
\noindent
\begin{figure}[htb]
\begin{minipage}[b]{0.38\linewidth}
\centering\epsfig{figure=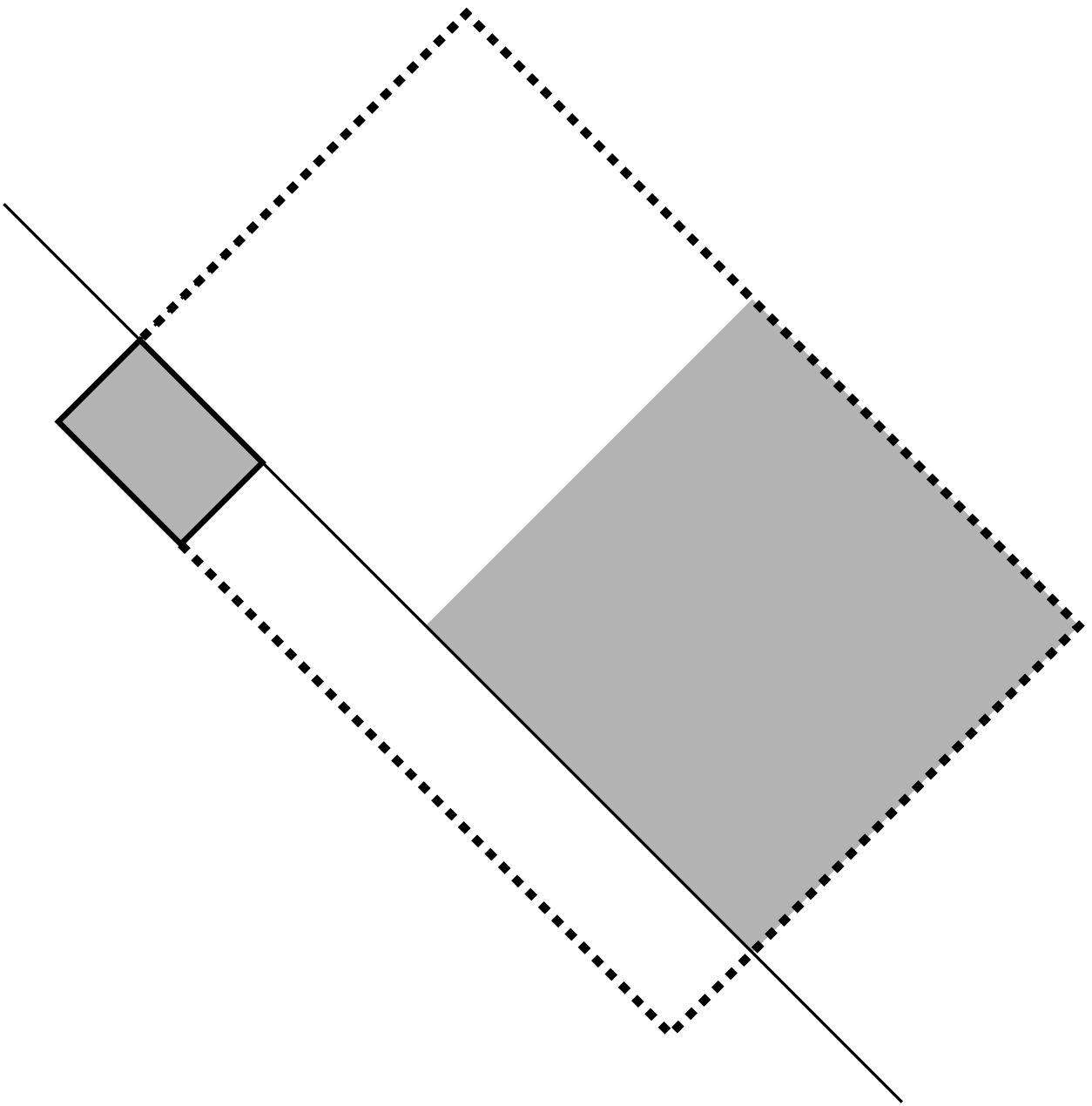,width=0.9\linewidth} 
\end{minipage}  
\hspace{0.8cm}
\begin{minipage}[b]{0.56\linewidth}
\centering\epsfig{figure=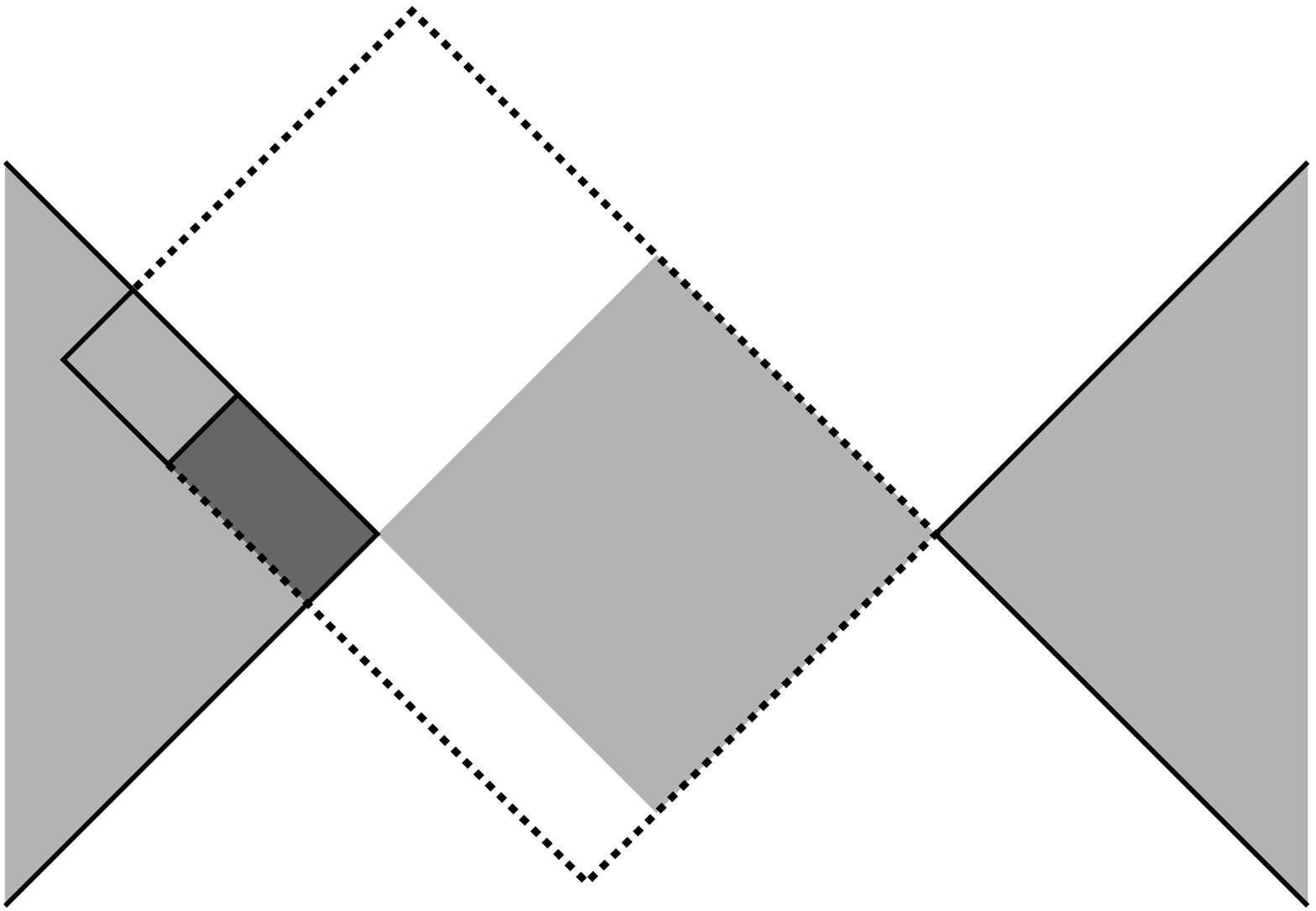,width=0.9\linewidth}
\end{minipage}
%Left Fig.
\put(-365,104){\large $\ell$}
\put(-344.5,72){\large $a$}
\put(-280,50){\large $b'$}
\put(-322,85){\large $a\vee b'$}
%Right Fig.
\put(-152,85){\large $a\vee b'$}
\put(-180,50){\large $b$}
\put(-30,50){\large $b$}
\put(-105,50){\large $b'$}
\put(-158,54){\large $a$}
\caption{\label{fig:CountEx}%
The two figures show that $\text{Caus}(\mathbb{M}^n)$ is not orthomodular. 
The first thing to note is that $\text{Caus}(\mathbb{M}^n)$ contains 
open (\ref{eq:OpenDiamonds}) as well as closed (\ref{eq:ClosedDiamonds})
diamond sets. In the left picture we consider the join of a small closed 
diamond $a$ with a large open diamond $b'$. (Closed sets are indicated by 
a solid boundary line.) 
Their edges are aligned along the lightlike line $\ell$. Even though 
these regions are causally disjoint, their causal completion is much 
larger than their union and given by the open (for $n>2$) enveloping 
diamond $a\vee b'$ framed by the dashed line.  (This also shows that 
the join of two regions can be larger than the domain of dependence 
of their union; compare footnote\,\ref{foot:Haag}.) . Next we consider 
the situation depicted on the right side. The closed double-wedge region 
$b$ contains the small closed diamond $a$. The causal 
complement $b'$ of $b$ is the open diamond in the middle. $a\vee b'$ is, 
according to the first picture, given by the large open diamond enclosed 
by the dashed line. The intersection of $a\vee b'$ with $b$ is strictly 
larger than $a$, the difference being the dark-shaded region in the left 
wedge of $b$ below $a$. Hence $a\ne b\wedge(a\vee b')$, in contradiction 
to (\ref{eq:AltOrthomodularity1}).}
\end{figure}

It is well known that the lattices of propositions for classical systems 
are Boolean, whereas those for quantum systems are merely orthomodular.
In classical physics the elements of the lattice are measurable subsets 
of phase space, with $\leq$ being ordinary set-theoretic inclusion 
$\subseteq$, and $\wedge$ and $\vee$ being ordinary set-theoretic 
intersection $\cap$ and union $\cup$ respectively. The orthocomplement 
is the ordinary set-theoretic complement.  In Quantum Mechanics 
the elements of the lattice are the closed subspaces of Hilbert space, 
with $\leq$ being again ordinary inclusion, $\wedge$ ordinary 
intersection, and $\vee$ is given by 
$a\vee b:=\overline{\Span\{a,b\}}$. The orthocomplement of a 
closed subset is the orthogonal complement in Hilbert space. For 
comprehensive discussions see \cite{Jauch:FoundQM} and 
\cite{Beltrametti:LogicQM}.

One of the main questions in the foundations of Quantum Mechanics is 
whether one could understand (derive) the usage of Hilbert spaces
and complex numbers from somehow more fundamental principles. Even 
though it is not a priori clear what ones measure of fundamentality 
should be at this point, an interesting line of attack consists in 
deriving the mentioned structures from the properties of the lattice 
of propositions (Quantum Logic). It can be shown that a lattice that 
is complete, atomic, irreducible, orthomodular, and that satisfies the 
covering property is isomorphic to the lattice of closed subspaces 
of a linear space with Hermitean inner product. The complex numbers 
are selected if additional technical assumptions are added. 
For the precise statements of these reconstruction theorems see
\cite{Beltrametti:LogicQM}.

It is now interesting to note that, on a formal level, there is a similar 
transition in going from Galilei invariant to Lorentz invariant causality 
relations. In fact, in Galilean spacetime one can also define a chronological 
complement: Two points are chronologically related if they are connected 
by a worldline of finite speed and, accordingly, two subsets in spacetime 
are chronologically disjoint if no point in one set is chronologically 
related to a point of the other. For example, the chronological 
complement of a point $p$ are all points simultaneous to, but different 
from, $p$. More general, it is not hard to see that the chronologically 
complete sets are just the subsets of some $t=\text{const.}$ hypersurface. 
The lattice of chronologically complete sets is then the continuous 
disjoint union of sublattices, each of which is isomorphic to the 
Boolean lattice of subsets in $\reals^3$. For details see 
\cite{Cegla.Jadczyk:1976}.    

As we have seen above, $\text{Chron}(\mathbb{M}^n)$ is complete, 
atomic, irreducible, and orthomodular (hence atomistic). The main 
difference to the lattice of propositions in Quantum Mechanics, as 
regards the formal aspects discussed here, is that 
$\text{Chron}(\mathbb{M}^n)$ does \emph{not} satisfy the covering 
property. Otherwise the formal similarities are intriguing and it 
is tempting to ask whether there is a deeper meaning to this. 
In this respect it would be interesting to know whether one could 
give a lattice-theoretic characterisation for $\text{Chron}(M)$ 
($M$ some fixed spacetime), comparable to the 
characterisation of the lattices of closed subspaces in Hilbert space 
alluded to above. Even for $M=\mathbb{M}^n$ such a characterisation 
seems, as far as I am aware, not to be known.

\subsection{Rigid motion}
\label{sec:RigidMotion}
As is well known, the notion of a rigid body, which proves so useful 
in Newtonian mechanics, is incompatible with the existence of a 
universal finite upper bound for all signal velocities~\cite{Laue:1911a}. 
As a result, the notion of a perfectly rigid body does not exist within 
the framework of SR. However, the notion of a \emph{rigid motion} does 
exist. Intuitively speaking, a body moves rigidly if, locally, the 
relative spatial distances of its material constituents are unchanging. 

The motion of an extended body is described by a normalised timelike 
vector field $u:\Omega\rightarrow\reals^n$, where $\Omega$ is an
open subset of Minkowski space, consisting of the events where the 
material body in question `exists'. We write $g(u,u)=u\cdot u=u^2$ 
for the Minkowskian scalar product. Being normalised now means that
$u^2=c^2$ (we do \emph{not} choose units such that $c=1$). The Lie derivative 
with respect to $u$ is denoted by $L_u$. 

For each material part of the body in motion its local rest space 
at the event $p\in\Omega$ can be identified with the hyperplane 
through $p$ orthogonal to $u_p$:
\begin{equation}
\label{eq:LocalRestSpace}
H_p:=p+u_p^\perp\,.
\end{equation}
$u_p^\perp$ carries a Euclidean inner product, $h_p$, given by the 
restriction of $-g$ to $u_p^\perp$. Generally we can write 
\begin{equation}
\label{eq:HorMetric}
h=c^{-2}\,u^\flat\otimes u^\flat-g\,,
\end{equation}
where $u^\flat=g^\downarrow(u):=g(u,\cdot)$ is the one-form associated 
to $u$. Following~\cite{Born:1909} the precise definition of `rigid  motion' 
can now be given as follows:  
\begin{definition}[Born 1909]
\label {def:RigidMotion}
Let $u$ be a normalised timelike vector field $u$. The motion described 
by its flow  is \emph{rigid} if
\begin{equation}
\label{eq:def:RigMo}
L_u h=0\,.
\end{equation}
\end{definition}
\noindent
Note that, in contrast to the Killing equations $L_ug=0$,
these equations are non linear due to the dependence of $h$ 
upon $u$.   

We write $\Proj_h:=\text{id}-c^{-2}\,u\otimes u^\flat\in\End(\reals^n)$ 
for the tensor field over spacetime that pointwise projects vectors 
perpendicular to $u$. It acts on one forms 
$\alpha$ via $\Proj_h(\alpha):=\alpha\circ\Proj_h$ and accordingly on all 
tensors. The so extended projection map will still be denoted by 
$\Proj_h$. Then we e.g. have 
\begin{equation}
\label{eq:ProjMetric}
h=-\Proj_h g:=-g(\Proj_h\cdot,\Proj_h\cdot)\,. 
\end{equation}

It is not difficult to derive the following two 
equations:\footnote{\label{foot:LieVel} Equation (\ref{eq:RigMo2})
simply follows from $L_u\Proj_h=-c^{-2}u\otimes L_u u^\flat$, so that 
$g((L_u\Proj_h)X,\Proj_h Y)=0$ for all $X,Y$. In fact, $L_uu^\flat=a^\flat$,
where $a:=\nabla_uu$ is the spacetime-acceleration. This follows from 
$L_uu^\flat(X)=L_u(g(u,X))-g(u,L_uX)=g(\nabla_uu,X)+g(u,\nabla_uX-[u,X])=
g(a,X)-g(u,\nabla_Xu)=g(a,X)$, where $g(u,u)=\text{const.}$ was used 
in the last step.}  
\begin{alignat}{2}
\label{eq:RigMo1}
& L_{fu}h&&\,=\,fL_uh\,,\\
\label{eq:RigMo2}
& L_uh&&\,=\,-L_u(\Proj_h g)=-\Proj_h(L_ug)\,,
\end{alignat}
where $f$ is any differentiable real-valued function on $\Omega$.

Equation (\ref{eq:RigMo1}) shows that the normalised vector field 
$u$ satisfies (\ref{eq:def:RigMo}) iff any rescaling $fu$ with a 
nowhere vanishing function $f$ does. Hence the normalization 
condition for $u$ in (\ref{eq:def:RigMo}) is really irrelevant.  
It is the geometry in spacetime of the flow lines and not their 
parameterisation which decide on whether motions (all, i.e. for any 
parameterisation, or none) along them are rigid. This has be the 
case because, generally speaking, there is no distinguished family 
of sections (hypersurfaces) across the bundle of flow lines that 
would represent `the body in space', i.e. mutually simultaneous 
locations of the body's points. Distinguished cases are those 
exceptional ones in which $u$ is hypersurface orthogonal. 
Then the intersection of $u$'s flow lines with the orthogonal 
hypersurfaces consist of mutually \emph{Einstein synchronous} 
locations of the points of the body. An example is discussed below. 

Equation (\ref{eq:RigMo2}) shows that the rigidity condition is 
equivalent to the `spatially' projected Killing equation. 
We call the flow of the timelike normalised vector field $u$ a 
\emph{Killing motion} (i.e. a spacetime isometry) if there is a 
Killing field $K$ such that $u=cK/\sqrt{K^2}$. Equation 
(\ref{eq:RigMo2}) immediately implies that Killing motions are 
rigid. What about the converse? Are there rigid motions that are 
not Killing? This turns out to be a difficult question. Its answer 
in Minkowski space is: `yes, many, but not as many as na\"{\i}vely 
expected.' 

Before we explain this, let us give an illustrative example for a 
Killing motion, namely that generated by the boost Killing-field
in Minkowski space. We suppress all but one spatial directions and 
consider boosts in $x$ direction in two-dimensional Minkowski space 
(coordinates $ct$ and $x$; metric $ds^2=c^2dt^2-dx^2$). The Killing 
field is\footnote{Here we adopt the standard notation from differential 
geometry, where $\partial_{\mu}:=\partial/\partial x^\mu$ denote the 
vector fields naturally defined by the coordinates 
$\{x^\mu\}_{\mu=0\cdots n-1}$. Pointwise the dual basis to 
$\{\partial_\mu\}_{\mu=0\cdots n-1}$ is $\{dx^\mu\}_{\mu=0\cdots n-1}$.}
\begin{equation}
\label{eq:BoostKill1}
K=x\,\partial_{ct}+ct\,\partial_x\,,
\end{equation}       
which is timelike in the region $\vert x\vert >\vert ct\vert$. 
We focus on the `right wedge' $x >\vert ct\vert$, which is 
now our region $\Omega$. Consider a rod of length $\ell$ which at $t=0$ 
is represented by the interval $x\in(r,r+\ell)$, where $r>0$. 
The flow of the normalised field $u=cK/\sqrt{K^2}$ is 
\begin{subequations}
\label{eq:BoostKill2}
\begin{alignat}{2}
\label{eq:BoostKill2a}
& ct(\tau)&&\,=\,x_0\,\sinh\bigl(c\tau/x_0)\,,\\
\label{eq:BoostKill2b}
&  x(\tau)&&\,=\,x_0\,\cosh\bigl(c\tau/x_0)\,,
\end{alignat}
\end{subequations}
where $x_0=x(\tau=0)\in(r,r+\ell)$ labels the elements of the 
rod at $\tau=0$. We have $x^2-c^2t^2=x_0^2$, showing that the 
individual elements of the rod move on hyperbolae (`hyperbolic 
motion'). $\tau$ is the proper time along each orbit, 
normalised so that the rod lies on the $x$ axis at $\tau=0$.

The combination 
\begin{equation}
\label{eq:KillingTime}
\lambda:=c\tau/x_0
\end{equation}
is just the flow parameter for $K$ (\ref{eq:BoostKill1}), sometimes 
referred to as `Killing time' (though it is dimensionless).  
From (\ref{eq:BoostKill2}) we can solve for $\lambda$ and $\tau$ 
as functions of $ct$ and $x$:
\begin{subequations}
\label{eq:LambdaTau}
\begin{alignat}{3}
\label{eq:LambdaTau1}
& \lambda&&\,=\,
f(ct,x)&&\,:=\,\tanh^{-1}\bigl(ct/x\bigr)\,,\\
\label{eq:LambdaTau2}
& \tau&&\,=\,
\hat f(ct,x)&&\,:=\,\underbrace{\sqrt{(x/c)^2-t^2}}_{x_0/c}\,\tanh^{-1}\bigl(ct/x\bigr)\,,
\end{alignat}
\end{subequations}
from which we infer that the hypersurfaces of constant $\lambda$ 
are hyperplanes which all intersect at the origin. Moreover, 
we also have $df=K^\flat/K^2$ ($d$ is just the ordinary exterior 
differential) so that the hyperplanes of constant 
$\lambda$ intersect all orbits of $u$ (and $K$) orthogonally.
Hence the hyperplanes of constant $\lambda$ qualify as the 
equivalence classes of mutually Einstein-simultaneous events in 
the region $x>\vert ct\vert$ for a family of observers moving along 
the Killing orbits. This does not hold for the hypersurfaces of 
constant $\tau$, which are curved.  

The modulus of the spacetime-acceleration (which is the same as the 
modulus of the spatial acceleration measured in the local rest frame) 
of the material part of the rod labelled by $x_0$ is 
\begin{equation}
\label{eq:AccRod}
\Vert a\Vert_g=c^2/x_0\,.
\end{equation} 
As an aside we generally infer from this that, given a timelike 
curve of local acceleration (modulus) $\alpha$, infinitesimally 
nearby orthogonal hyperplanes intersect at a spatial 
distance $c^2/\alpha$. This remark will become relevant in the 
discussion of part\,2 of the Noether-Herglotz theorem given below.

In order to accelerate the rod to the uniform velocity $v$ without
deforming it, its material point labelled by $x_0$ has to accelerate 
for the eigentime (this follows from (\ref{eq:BoostKill2}))
\begin{equation}
\label{eq:EigTimeRod}
\tau=\frac{x_0}{c}\tanh^{-1}(v/c)\,,
\end{equation} 
which depends on $x_0$. In contrast, the Killing time is the same 
for all material points and just given by the final rapidity. 
In particular, judged from the local observers moving with the rod, 
a rigid acceleration requires accelerating the rod's trailing end 
harder but shorter than pulling its leading end. 

In terms of the coordinates $(\lambda,x_0)$, which are co-moving with 
the flow of $K$, and $(\tau,x_0)$, which are co-moving with the flow 
of $u$, we just have $K=\partial/\partial\lambda$ and 
$u=\partial/\partial\tau$ respectively. The spacetime metric $g$ 
and the projected metric $h$ in terms of these coordinates are:
\begin{subequations}
\label{eq:MetricsComovCoord}
\begin{alignat}{2}
\label{eq:MetricsComovCoord1}
& h&&\,=\,dx_0^2\,,\\
\label{eq:MetricsComovCoord2}
& g&&\,=\,x_0^2\,d\lambda^2-dx_0^2
=c^2\bigl(d\tau-(\tau/x_0)\,dx_0\bigr)^2-dx_0^2\,.
\end{alignat}
\end{subequations}
Note the simple form $g$ takes in terms of $x_0$ and $\lambda$,
which are also called the `Rindler coordinates' for the region 
$\vert x\vert>\vert ct\vert$ of Minkowski space. They are the 
analogs in Lorentzian geometry to polar coordinates (radius $x_0$, 
angle $\lambda$) in Euclidean geometry.  

Let us now return to the general case. We decompose the derivative 
of the velocity one-form $u^\flat:=g^\downarrow(u)$ as follows:
\begin{equation}
\label{eq:DecVelOneform}
\nabla u^\flat=\theta+\omega+ c^{-2}\,u^\flat\otimes a^\flat\,,
\end{equation}
where $\theta$ and $\omega$ are the projected symmetrised and
antisymmetrised derivatives respectively\footnote{We denote the 
symmetrised and antisymmetrised tensor-product (not including 
the factor $1/n!$) by $\vee$ and $\wedge$ respectively and the 
symmetrised and antisymmetrised (covariant-) derivative by 
$\nabla\vee$ and $\nabla\wedge$. For example, 
$(u^\flat\wedge v^\flat)_{ab}=u_av_b-u_bv_a$ and 
$(\nabla\vee u^\flat)_{ab}=\nabla_a u_b+\nabla_bu_a$.
Note that $(\nabla\wedge u^\flat)$ is the same as the ordinary 
exterior differential $du^\flat$. Everything we say in the sequel 
applies to curved spacetimes if $\nabla$ is read as covariant 
derivative with respect to the Levi-Civita connection.}  
\begin{subequations}
\label{eq:ThetaOmega}
\begin{alignat}{3}
\label{eq:ThetaOmega1}
& 2\theta 
&&\,=\, \Proj_h(\nabla\vee u^\flat)
&&\,=\,\nabla\vee u^\flat-c^{-2}\,u^\flat\vee a^\flat\,,\\
\label{eq:ThetaOmega2}
& 2\omega 
&&\,=\,\Proj_h(\nabla\wedge u^\flat)
&&\,=\,\nabla\wedge u^\flat-c^{-2}\, u^\flat\wedge a^\flat\,.
\end{alignat}
\end{subequations}
The symmetric part, $\theta$, is usually further decomposed into 
its traceless and pure trace part, called the \emph{shear} and 
\emph{expansion} of $u$ respectively. The antisymmetric part 
$\omega$ is called the \emph{vorticity} of $u$. 

Now recall that the Lie derivative of $g$ is just twice the 
symmetrised derivative, which in our notation reads: 
\begin{equation}
\label{eq:LieDerNabla}
L_ug=\nabla\vee u^\flat\,.
\end{equation}
This implies in view of (\ref{eq:def:RigMo}), (\ref{eq:RigMo2}), and 
(\ref{eq:ThetaOmega1}) 
\begin{proposition}
Let $u$ be a normalised timelike vector field $u$. The motion described 
by its flow  is rigid iff $u$ is of vanishing shear and expansion, i.e. 
iff $\theta=0$. 
\end{proposition}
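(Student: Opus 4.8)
The plan is to assemble the three relations established immediately before the statement into a single short computation. First I would start from Born's definition of rigidity, $L_u h = 0$, and use equation (\ref{eq:RigMo2}), namely $L_u h = -\Proj_h(L_u g)$, to re-express the rigidity condition as the vanishing of the projected Lie derivative of the spacetime metric $g$. This already trades the nonlinear-looking condition on $h$ for a condition on the more familiar object $L_u g$.

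Second, I would rewrite $L_u g$ via (\ref{eq:LieDerNabla}) as the symmetrised covariant derivative $\nabla\vee u^\flat$, so that $L_u h = -\Proj_h(\nabla\vee u^\flat)$. Comparing the right-hand side with the first equality in (\ref{eq:ThetaOmega1}), which states precisely that $\Proj_h(\nabla\vee u^\flat) = 2\theta$, I obtain the identity $L_u h = -2\theta$. Hence $L_u h = 0$ holds if and only if $\theta = 0$, which is the asserted equivalence.

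Third, to match the wording ``vanishing shear and expansion'', I would recall that the symmetric projected tensor $\theta$ decomposes into its trace-free part (the shear) and its pure-trace part (the expansion). Consequently $\theta = 0$ is equivalent to the simultaneous vanishing of both, which closes the argument.

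The main obstacle here is essentially nil: all the analytic work---the passage from $L_u h$ to $-\Proj_h(L_u g)$ in (\ref{eq:RigMo2}), and the explicit computation of $\theta$ in (\ref{eq:ThetaOmega1})---has already been carried out in the text, so the proof collapses to a one-line substitution. The only point requiring care is bookkeeping: ensuring the projector $\Proj_h$ is applied consistently on both tensor slots, and that the normalisation $\Proj_h(\nabla\vee u^\flat)=2\theta$, with its factor of $2$, is respected, so that the final identity $L_u h = -2\theta$ emerges with the correct coefficient and the equivalence is genuinely an ``iff''.
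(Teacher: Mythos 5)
Your proposal is correct and follows essentially the same route as the paper, which likewise derives the equivalence directly from (\ref{eq:def:RigMo}), (\ref{eq:RigMo2}), (\ref{eq:LieDerNabla}), and (\ref{eq:ThetaOmega1}), i.e. from the identity $L_uh=-\Proj_h(L_ug)=-\Proj_h(\nabla\vee u^\flat)=-2\theta$. The paper in fact states the proposition as an immediate consequence of these three equations without writing out a separate proof, so your one-line substitution, including the factor-of-$2$ bookkeeping and the shear/expansion decomposition of $\theta$, is exactly the intended argument.
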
  

Vector fields generating rigid motions are now classified according to 
whether or not they have a vanishing vorticity $\omega$: if $\omega=0$ 
the flow is called \emph{irrotational}, otherwise \emph{rotational}. 
The following theorem is due to Herglotz \cite{Herglotz:1910} 
and Noether~\cite{Noether:1910}:
\begin{theorem}[Noether \& Herglotz, part\,1]
\label{thm:NoetherHerglotz1}
A rotational rigid motion in Minkowski space must be a Killing motion.
\end{theorem}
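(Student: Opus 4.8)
The plan is to reduce the assertion to a single differential condition on the acceleration and then to show that rotation forces that condition to hold. Recall that a Killing motion means $u=cK/\sqrt{K^2}$ for some Killing field $K$; setting $\phi:=\sqrt{K^2}/c>0$ this amounts to producing a positive function $\phi$ such that $\phi\,u^\flat$ is a Killing one-form, i.e.\ $\nabla\vee(\phi u^\flat)=0$. Using the rigidity relation $\nabla\vee u^\flat=c^{-2}\,u^\flat\vee a^\flat$ (equation (\ref{eq:ThetaOmega1}) with $\theta=0$, where $a:=\nabla_u u$), a short computation gives $\nabla\vee(\phi u^\flat)=u^\flat\vee\bigl(d\phi+\phi c^{-2}a^\flat\bigr)$. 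Since $u^\flat\vee\alpha=0$ forces $\alpha=0$ for any one-form $\alpha$ (contract with $u$ and use $u^2=c^2\neq0$), this vanishes iff $d\log\phi=-c^{-2}a^\flat$. Such a $\phi$ exists locally iff the right-hand side is closed, i.e.\ iff $da^\flat=0$. Hence Theorem\,\ref{thm:NoetherHerglotz1} reduces to the single claim: a rigid flow with $\omega\neq0$ has closed acceleration one-form, $da^\flat=0$.

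The engine for proving $da^\flat=0$ is the flatness of Minkowski space: since the Riemann tensor vanishes, $\nabla_c\nabla_d u_b$ is symmetric in $c,d$. Into this I would substitute the rigid decomposition $\nabla_d u_b=\omega_{db}+c^{-2}u_d a_b$ (equation (\ref{eq:DecVelOneform}) with $\theta=0$, recalling $\omega_{db}u^d=0$) and antisymmetrise in $c,d$. This yields a master identity coupling $\nabla_{[c}\omega_{d]b}$, the term $c^{-2}\omega_{cd}a_b$, the pieces $c^{-2}\bigl(u_d\nabla_c a_b-u_c\nabla_d a_b\bigr)$, and a quadratic $c^{-4}(u_c a_d-u_d a_c)a_b$. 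The antisymmetric combination $\nabla_{[c}a_{d]}=\tfrac12(da^\flat)_{cd}$ is thereby tied to the behaviour of the vorticity.

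I would then dissect this identity by contracting with $u^c$ and projecting the free indices with $\Proj_h$. The fully spatial projection produces a relation between the spatial part of $da^\flat$ and $\Proj_h(L_u\omega)$, the transport of the vorticity along the flow; contracting one free index with $u$ instead yields the mixed part of $da^\flat$ in terms of $\nabla_u a$ and $\omega\cdot a$. Separately, applying $L_u$ to the rigidity equation (equivalently, a second use of the flatness identity) gives the propagation equation for $\omega$ itself. Together, these must be shown to force every component of $da^\flat$ to vanish.

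The hard part will be the concluding algebraic step, where the hypothesis $\omega\neq0$ is indispensable. One must show that a nonvanishing spatial curl of the acceleration is incompatible with a nonvanishing vorticity. In four dimensions I would dualise $\omega$ to an axial vector $\vec\omega$ in the three-dimensional rest space; the rigidity-plus-flatness constraints then read as an over-determined linear system for $\nabla_{[a}a_{b]}$ whose coefficient structure is governed by $\vec\omega$ and whose only solution for $\vec\omega\neq0$ is the trivial one. This over-determination is exactly what disappears in the irrotational case $\omega=0$---which is why the theorem is restricted to rotational motions and why genuinely non-Killing irrotational rigid motions exist. I expect the difficulty to lie not in any single conceptual leap but in the careful index bookkeeping needed to disentangle the spatial and mixed parts and to bring the final system into a form where the rank of $\omega$ can be read off.
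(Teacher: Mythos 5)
Your opening reduction is correct and is precisely the paper's Lemma\,\ref{lemma:NoetherHerglotz1-2}: rigidity together with closedness (on simply connected $\Omega$, exactness) of $a^\flat$ is equivalent to the Killing property, so everything hinges on proving $da^\flat=0$. Your master identity from $\nabla_{[c}\nabla_{d]}u_b=0$ is also the right engine, and the paper (following Pirani--Williams and Trautman) uses the same flatness input. The gap lies in what you plan to extract from it. The $u^c$-contraction of your identity, with the free indices projected horizontally, does not constrain anything: its antisymmetric part reproduces exactly the identity $2L_u\omega=\Proj_h\,da^\flat$, which is the paper's equation~(\ref{eq:LieOmega1}) and holds for \emph{every} rigid flow. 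So the `spatial curl of the acceleration' is identically the Lie transport of the vorticity; whether it vanishes is a dynamical question along the flow, not a pointwise-algebraic one. Consequently your proposed concluding step---dualising $\omega$ to $\vec\omega$ and reading off the rank of a pointwise over-determined linear system for $\nabla_{[a}a_{b]}$---cannot work: no such pointwise system exists, and the hypothesis $\omega\ne0$ does not enter through any rank count.

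What your sketch never supplies is the transport law $L_u\omega=0$, equation~(\ref{eq:OmegaLieZero}), which is the heart of the proof. In the paper it comes from genuinely second-order geometric input: the Gauss-type relation~(\ref{eq:ProjectedCurvatureRelation}) between the (vanishing) spacetime curvature and the curvature $\hat R$ of the quotient rest-space metric $h$, which for $n\leq 4$ gives $\hat R=-3\,\omega\otimes\omega$; since rigidity makes $h$, and hence $\hat R$, projectable, one has $L_u\hat R=0$ and therefore $L_u\omega=0$. Note this holds for \emph{all} rigid motions in flat space, rotational or not. Only after that does $\omega\ne0$ enter, and far more simply than you anticipate: projectability of $\hat\nabla\omega$ plus one more use of flatness yields $L_u(\omega\otimes a^\flat)=\omega\otimes L_ua^\flat=0$ (Lemma\,\ref{lemma:NoetherHerglotz1-3}), so $\omega\ne0$ cancels tensorially to give $L_ua^\flat=0$; inserting $L_u\omega=0$ and $L_ua^\flat=0$ into $2L_u\omega=da^\flat-c^{-2}u^\flat\wedge L_ua^\flat$ forces $da^\flat=0$, and your own reduction finishes. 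What fails in the irrotational case is thus not an `over-determination' that disappears, but merely the cancellation step, since $\omega\otimes L_ua^\flat=0$ is vacuous for $\omega=0$---and indeed~(\ref{eq:proof:NoeterHerglotz2i}) exhibits irrotational rigid motions with $L_ua^\flat\ne0$. Without identifying and proving $L_u\omega=0$, your plan stalls exactly at its self-declared `hard part'.
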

An example of such a rotational motion is given by the Killing 
field\footnote{\label{foot:MetricCyl} We now use standard cylindrical 
coordinates $(z,\rho,\varphi)$, in terms of which  
$ds^2=c^2dt^2-dz^2-d\rho^2-\rho^2\,d\varphi^2$.}  
\begin{equation}
\label{eq:RotKill1}
K=\partial_t+\kappa\,\partial_{\varphi}
\end{equation}
inside the region 
\begin{equation}
\label{eq:RegionOmega}
\Omega=\{(t,z,\rho,\varphi)\mid\kappa\rho<c\}\,,
\end{equation}
where $K$ is timelike. This motion corresponds to a rigid rotation 
with constant angular velocity $\kappa$ which, without loss of generality, 
we take to be positive. Using the co-moving angular coordinate 
$\psi:=\varphi-\kappa t$, the split (\ref{eq:HorMetric}) is now 
furnished by 
\begin{subequations}
\label{eq:RotSplitMetric}
\begin{alignat}{2}
\label{eq:RotSplitMetric1}
& u^\flat&&\,=\,
c\,\sqrt{1-(\kappa\rho/c)^2}
\left\{
c\,dt-\frac{\kappa\rho/c}{1-(\kappa\rho/c)^2}\ \rho\,d\psi
\right\}\,,\\
\label{eq:RotSplitMetric2}
& h&&\,=\,
dz^2+d\rho^2+\frac{\rho^2\,d\psi^2}{1-(\kappa\rho/c)^2}\,.
\end{alignat}
\end{subequations}
The metric $h$ is curved (cf. Lemma\,\ref{lemma:NoetherHerglotz1-1}). 
But the rigidity condition (\ref{eq:def:RigMo}) means that $h$, and 
hence its curvature, cannot change along the motion. Therefore, even 
though we can keep a body in uniform rigid rotational motion, we cannot 
put it into this state from rest by purely rigid motions, since this would 
imply a transition from a flat to a curved geometry of the body. 
This was first pointed out by Ehrenfest~\cite{Ehrenfest:1909}. 
Below we will give a concise analytical expression of this fact 
(cf. equation~(\ref{eq:OmegaLieZero})). All this is in contrast to the 
translational motion, as we will also see below.

The proof of Theorem\,\ref{thm:NoetherHerglotz1} relies on 
arguments from differential geometry proper and is somewhat 
tricky. Here we present the essential steps, basically following 
\cite{PiraniWilliams:1962} and \cite{Trautman:1965} in a 
slightly modernised notation. Some straightforward calculational 
details will be skipped. The argument itself is best broken 
down into several lemmas. 

At the heart of the proof lies the following general 
construction: Let $M$ be the spacetime manifold 
with metric $g$ and $\Omega\subset M$ the open region in which the 
normalised vector field $u$ is defined. We take $\Omega$ to be simply 
connected. The orbits of $u$ foliate $\Omega$ and hence define an 
equivalence relation on $\Omega$ given by $p\sim q$ iff $p$ and $q$ lie 
on the same orbit. The quotient space $\hat\Omega:=\Omega/\!\!\sim$ is 
itself a manifold. Tensor fields on $\hat\Omega$ can be represented by 
(i.e. are in bijective correspondence to) tensor fields $T$ on 
$\Omega$ which obey the two conditions: 
\begin{subequations}
\label{eq:Projectibility}
\begin{alignat}{2}
\label{eq:Projectibility1}
&\Proj_h\, T&&\,=\,T\,,\\
\label{eq:Projectibility2}     
& L_uT&&\,=\,0\,.
\end{alignat}
\end{subequations}
Tensor fields satisfying (\ref{eq:Projectibility1}) are called 
\emph{horizontal}, those satisfying both conditions 
(\ref{eq:Projectibility}) are called \emph{projectable}. 
The $(n-1)$-dimensional metric tensor $h$, defined in 
(\ref{eq:HorMetric}), is an example of a projectable tensor if 
$u$ generates a rigid motion, as assumed here. It turns $(\hat\Omega,h)$ 
into a $(n-1)$-dimensional Riemannian manifold. The covariant derivative 
$\hat\nabla$ with respect to the Levi-Civita connection of $h$ is 
given by the following operation on projectable tensor fields: 
\begin{equation}
\label{eq:ProjCovDer}
\hat\nabla:=\Proj_h\circ\nabla
\end{equation}
i.e. by first taking the covariant derivative $\nabla$ (Levi-Civita
connection in $(M,g)$) in spacetime and then projecting the result 
horizontally. This results again in a projectable tensor, as a 
straightforward calculation shows. 

The horizontal projection of the spacetime curvature tensor can 
now be related to the curvature tensor of $\hat\Omega$ (which is a 
projectable tensor field). Without proof we state 
\begin{lemma}
\label{lemma:NoetherHerglotz1-1}
Let $u$ generate a rigid motion in spacetime. Then the horizontal 
projection of the totally covariant (i.e. all indices down) curvature 
tensor $R$ of $(\Omega,g)$ is related to the totally covariant 
curvature tensor $\hat R$ of $(\hat\Omega,h)$ by the following 
equation\footnote{$\hat R$ 
appears with a minus sign on the right hand side of 
(\ref{eq:ProjectedCurvatureRelation}) because the first index on 
the hatted curvature tensor is lowered with $h$ rather than $g$. 
This induces a minus sign due to (\ref{eq:HorMetric}), i.e. as a 
result of our `mostly-minus'-convention for the signature of the 
spacetime metric.}:
\begin{equation}
\label{eq:ProjectedCurvatureRelation} 
\Proj_h R=-\hat R-3\,(\text{\rm id}-\Proj_\wedge)\omega\otimes\omega\,,
\end{equation}
where $\Proj_\wedge$ is the total antisymmetriser, which here 
projects tensors of rank four onto their totally antisymmetric 
part. 
\end{lemma}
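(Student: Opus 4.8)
The plan is to recognise (\ref{eq:ProjectedCurvatureRelation}) as the Lorentzian analogue of O'Neill's curvature formula for a Riemannian submersion, with the orbits of $u$ playing the role of the fibres of $\Omega\to\hat\Omega$ and the vorticity $\omega$ playing the role of O'Neill's integrability tensor. Throughout I would work with horizontal projectable vector fields $X,Y,Z$ (those satisfying (\ref{eq:Projectibility})), which are in bijection with vector fields on $\hat\Omega$, and use that on such fields $\hat\nabla=\Proj_h\circ\nabla$ is the horizontal lift of the Levi-Civita connection of $(\hat\Omega,h)$.

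First I would establish the basic normal-component identity. For horizontal $Y,Z$ one has $u^\flat(\nabla_YZ)=-g(\nabla_Yu,Z)=-(\nabla u^\flat)(Y,Z)$; inserting the decomposition (\ref{eq:DecVelOneform}) together with the rigidity hypothesis $\theta=0$ (from the preceding Proposition) leaves only the antisymmetric part, so that
\[
\hat\nabla_YZ=\Proj_h\nabla_YZ=\nabla_YZ+c^{-2}\,\omega(Y,Z)\,u.
\]
The same computation applied to the torsion-free bracket shows that the vertical part of $[X,Y]$ for two horizontal lifts is $-2c^{-2}\omega(X,Y)\,u$, so that $\omega$ measures exactly the non-integrability of the horizontal distribution, while the horizontal part of $[X,Y]$ is the lift of the bracket on $\hat\Omega$.

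Next I would substitute these identities into
\[
\hat R(X,Y)Z=\hat\nabla_X\hat\nabla_YZ-\hat\nabla_Y\hat\nabla_XZ-\hat\nabla_{[X,Y]_{\mathrm{hor}}}Z,
\]
keeping in mind that the last term carries only the horizontal part of the bracket, whereas the spacetime identity for $R(X,Y)Z$ carries the full bracket; the difference produces a term proportional to $\omega(X,Y)\,\nabla_uZ$. The crucial simplification is that whenever a derivative $\nabla_X\bigl(\omega(Y,Z)u\bigr)$ is taken, its piece $X\!\bigl(\omega(Y,Z)\bigr)\,u$ is proportional to $u$ and hence annihilated by the outer $\Proj_h$; thus \emph{no} derivatives of $\omega$ survive, and every surviving correction is algebraic, built from $\Proj_h\nabla_Xu=\omega(X,\cdot)^\sharp$ (again using $\theta=0$). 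Collecting terms one recovers $\Proj_hR(X,Y)Z$ together with purely quadratic vorticity corrections, and lowering the remaining free index with $h=-g\big\vert_{u^\perp}$ rather than with $g$ (cf.\ (\ref{eq:HorMetric})) accounts for the overall minus sign in front of $\hat R$.

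The final, and most delicate, step is purely algebraic: to verify that the quadratic remainder equals $3\,(\mathrm{id}-\Proj_\wedge)\,\omega\otimes\omega$. One checks that the totally covariant remainder has components proportional to $2\,\omega_{ab}\omega_{cd}+\omega_{ac}\omega_{bd}-\omega_{ad}\omega_{bc}$, the three terms arising from the three inequivalent pairings of the four arguments; this is precisely $3\,(\mathrm{id}-\Proj_\wedge)\,\omega\otimes\omega$, since for a $2$-form one computes $\Proj_\wedge(\omega\otimes\omega)_{abcd}=\tfrac13\bigl(\omega_{ab}\omega_{cd}-\omega_{ac}\omega_{bd}+\omega_{ad}\omega_{bc}\bigr)$. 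The appearance of the antisymmetriser is forced: $\omega\otimes\omega$ already shares the pair symmetries of a curvature tensor, so the only part that must be removed to make the right-hand side a genuine algebraic curvature tensor—equivalently, to enforce the first Bianchi identity $\Proj_\wedge R=0$, obeyed by both $\Proj_hR$ and $\hat R$—is its totally antisymmetric component. I expect the main obstacle to lie exactly here: matching the coefficients and index placements of the several quadratic-$\omega$ contributions so that they assemble with coefficient $3$ and the correct antisymmetrisation, while tracking the signature factor relating $h$ and $g$ consistently throughout.
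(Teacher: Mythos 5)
The paper itself offers no proof to compare against: Lemma\,\ref{lemma:NoetherHerglotz1-1} is explicitly stated ``without proof'', with the surrounding argument only citing the sources it follows. Judged on its own merits, your submersion-style argument is correct and is in fact the standard route to this formula: it is the Lorentzian, timelike-fibre analogue of O'Neill's curvature relation, with $\omega$ playing the role of the integrability tensor. Your structural identities check out: for horizontal projectable $Y,Z$, rigidity ($\theta=0$) and horizontality kill all but the antisymmetric part of $\nabla u^\flat$, giving $u^\flat(\nabla_YZ)=-\omega(Y,Z)$, hence $\hat\nabla_YZ=\nabla_YZ+c^{-2}\omega(Y,Z)\,u$ and vertical part $-2c^{-2}\omega(X,Y)\,u$ for $[X,Y]$; note that rigidity is also what makes $h$ projectable, so that $(\hat\Omega,h)$ and (\ref{eq:ProjCovDer}) are available in the first place. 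Carrying out the assembly you sketch, the derivative terms $X\bigl(\omega(Y,Z)\bigr)u$ are indeed annihilated by the outer $\Proj_h$, and---using projectability once more in the form $\nabla_uZ=\nabla_Zu$ (from $[u,Z]=0$), a step you should state explicitly since it is where $L_uZ=0$ enters---one obtains exactly three algebraic corrections with coefficients $1,1,2$:
\begin{equation*}
\hat R(X,Y)Z=\Proj_h\bigl(R(X,Y)Z\bigr)+\omega(Y,Z)\,\nabla_Xu-\omega(X,Z)\,\nabla_Yu-2\,\omega(X,Y)\,\nabla_Zu\,,
\end{equation*}
with $g(\nabla_Xu,W)=\omega(X,W)$ on horizontal vectors. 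Lowering the free slot with $h=-g\vert_{u^\perp}$ and rewriting in the paper's component convention ($R_{abcd}=g_{a\mu}R^\mu_{\phantom{\mu}bcd}$ with the curvature of (\ref{eq:ProjMap3}), and $\hat R_{abcd}=h_{a\mu}\hat R^\mu_{\phantom{\mu}bcd}$, as the footnote to the lemma indicates) yields $\Proj_hR_{abcd}=-\hat R_{abcd}-\bigl(2\omega_{ab}\omega_{cd}+\omega_{ac}\omega_{bd}-\omega_{ad}\omega_{bc}\bigr)$, and your algebraic identity $3(\mathrm{id}-\Proj_\wedge)(\omega\otimes\omega)_{abcd}=2\omega_{ab}\omega_{cd}+\omega_{ac}\omega_{bd}-\omega_{ad}\omega_{bc}$ is correct, so the lemma follows with the stated sign. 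The one genuine loose end is the one you flag yourself: the overall sign of the quadratic remainder is convention-sensitive---in the slot ordering $(\omega\otimes\omega)(X,Y,Z,W)=\omega(X,Y)\omega(Z,W)$ the relation reads $\Proj_hR=-\hat R+3(\mathrm{id}-\Proj_\wedge)\,\omega\otimes\omega$, and only after passing to the paired component convention above does the coefficient become $-3$---so a complete write-up must fix the index conventions before the final comparison, but no new idea is missing.
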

\noindent
Formula (\ref{eq:ProjectedCurvatureRelation}) is true in any spacetime 
dimension $n$. Note that the projector $(\text{id}-\Proj_\wedge)$
guarantees consistency with the first Bianchi identities for $R$ and 
$\hat R$, which state that the total antisymmetrisation in their last 
three slots vanish identically. This is consistent with 
(\ref{eq:ProjectedCurvatureRelation}) since for tensors of rank 
four with the symmetries of $\omega\otimes\omega$ the total 
antisymmetrisation on tree slots is identical to $\Proj_\wedge$,
the symmetrisation on all four slots. The claim now simply follows from 
$\Proj_\wedge\circ(\text{id}-\Proj_\wedge)=\Proj_\wedge-\Proj_\wedge=0$.

We now restrict to spacetime dimensions of four or less, i.e. $n\leq 4$.
In this case $\Proj_\wedge\circ\Proj_h=0$ since $\Proj_h$ makes the 
tensor effectively live over $n-1$ dimensions, and any totally 
antisymmetric four-tensor in three or less dimensions must vanish. 
Applied to (\ref{eq:ProjectedCurvatureRelation}) this 
means that $\Proj_\wedge(\omega\otimes\omega)=0$, for horizontality 
of $\omega$ implies $\omega\otimes\omega=\Proj_h(\omega\otimes\omega)$.
Hence the right hand side of  (\ref{eq:ProjectedCurvatureRelation}) 
just contains the pure tensor product $-3\,\omega\otimes\omega$. 

Now, in our case $R=0$ since $(M,g)$ is flat Minkowski space. 
This has two interesting consequences: First, $(\hat\Omega,h)$
is curved iff the motion is rotational, as exemplified above. 
Second, since $\hat R$ is projectable, its Lie derivative with 
respect to $u$ vanishes. Hence (\ref{eq:ProjectedCurvatureRelation}) 
implies $L_u\omega\otimes\omega+\omega\otimes L_u\omega=0$, which is 
equivalent to\footnote{In more than four spacetime dimensions one only 
gets $(\text{id}-\Proj_\wedge)(L_u\omega\otimes\omega+\omega\otimes L_u\omega)=0$.}
\begin{equation}
\label{eq:OmegaLieZero}  
L_u\omega=0\,.
\end{equation}
This says that the vorticity cannot change along a rigid motion 
in flat space. It is the precise expression for the remark above 
that you cannot rigidly \emph{set} a disk into rotation. Note 
that it also provides the justification for the global 
classification of rigid motions into rotational and irrotational 
ones. 

A sharp and useful criterion for whether a rigid motion is Killing
or not is given by the following 
\begin{lemma}
\label{lemma:NoetherHerglotz1-2}
Let $u$ be a normalised timelike vector field on a region 
$\Omega\subseteq M$. The motion generated by $u$ is Killing iff 
it is rigid and $a^\flat$ is exact on $\Omega$.  
\end{lemma}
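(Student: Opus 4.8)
The plan is to reduce the Killing property to the existence of a suitable reparametrisation of $u$. Since any Killing field $K$ proportional to $u$ must have the form $K=fu$ with $f=\sqrt{K^2}/c>0$, and conversely $u=cK/\sqrt{K^2}$ whenever $K=fu$ with $f>0$, the statement ``the motion is Killing'' is equivalent to ``there exists a positive function $f$ on $\Omega$ with $fu$ a Killing field'', i.e. $L_{fu}g=0$. Thus the whole lemma turns into the question of when the rescaling equation for $f$ can be solved.

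First I would record how $Lg$ behaves under rescaling. A direct computation from the definition of the Lie derivative gives $L_{fu}g=f\,L_ug+df\vee u^\flat$. Combining this with (\ref{eq:LieDerNabla}) and (\ref{eq:ThetaOmega1}), which together yield $L_ug=\nabla\vee u^\flat=2\theta+c^{-2}\,u^\flat\vee a^\flat$, produces the master identity
\[
L_{fu}g=2f\theta+u^\flat\vee\bigl(f c^{-2}a^\flat+df\bigr)\,.
\]
The two summands sit in complementary pieces: $\theta$ is horizontal by (\ref{eq:ThetaOmega1}), whereas $u^\flat\vee\gamma$ is annihilated by $\Proj_h$ for every one-form $\gamma$ (since $u^\flat(\Proj_h\,\cdot)=g(u,\Proj_h\,\cdot)=0$). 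Hence applying $\Proj_h$ to the identity isolates $2f\theta$, while the rest is controlled by the one-form $fc^{-2}a^\flat+df$. I would also record the elementary fact that for timelike $u$ one has $u^\flat\vee\gamma=0\Rightarrow\gamma=0$ (evaluate on $(u,u)$ and on $(u,Y)$ with $Y\perp u$).

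With the master identity in hand both directions fall out at once. Since $f\ne0$, the equation $L_{fu}g=0$ is equivalent to the two \emph{independent} conditions $\theta=0$ and $fc^{-2}a^\flat+df=0$; the first is precisely rigidity (vanishing shear and expansion), the second reads $d\ln f=-c^{-2}a^\flat$. For the \emph{only if} direction I assume such an $f$ exists: then $\theta=0$, so the motion is rigid, and $a^\flat=-c^2\,d\ln f=d(-c^2\ln f)$ is exact. For the \emph{if} direction I assume rigidity together with $a^\flat=d\phi$ exact: then $\theta=0$ already holds, and $d\ln f=-c^{-2}d\phi$ is solved by the globally defined positive function $f:=\exp(-\phi/c^2)$, so that $K:=fu$ is a Killing field with $u=cK/\sqrt{K^2}$.

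The computations will all be short, so there is no deep analytic obstacle; the one place needing care is establishing the rescaling formula and the clean split of the master identity into its horizontal part and its $u^\flat$-containing part, so that $L_{fu}g=0$ genuinely decouples into the two stated conditions. The conceptual heart of the matter---and the reason \emph{exactness} is the correct hypothesis---is that $d\ln f=-c^{-2}a^\flat$ admits a single-valued positive solution $f$ on $\Omega$ exactly when $a^\flat$ is exact; closedness alone would only produce a local or multivalued $f$, whereas exactness is the integrability condition that promotes a local Killing reparametrisation to a genuine one on all of $\Omega$.
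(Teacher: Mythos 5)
Your proof is correct and takes essentially the same route as the paper: both reduce the Killing property to the solvability of $L_{fu}g=0$ for a positive function $f$, expand this via (\ref{eq:LieDerNabla}) and (\ref{eq:ThetaOmega1}) into $2f\theta+u^\flat\vee\bigl(df+fc^{-2}a^\flat\bigr)=0$ (the paper's equation (\ref{eq:Matteo1}) after dividing by $f$), decouple it by horizontal projection into $\theta=0$ and $d\ln f=-c^{-2}a^\flat$, and then read the equivalence both ways. Your explicit verification that $u^\flat\vee\gamma=0$ forces $\gamma=0$ for timelike $u$ merely spells out a step the paper leaves implicit.
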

\begin{proof}
That the motion generated by $u$ be Killing is equivalent to the 
existence of a positive function $f:\Omega\rightarrow\reals$ such 
that $L_{fu}g=0$, i.e. $\nabla\vee(fu^\flat)=0$. 
In view of (\ref{eq:ThetaOmega1}) this is equivalent to
\begin{equation}
\label{eq:Matteo1}
2\theta+(d\ln f+c^{-2} a^\flat)\vee u^\flat=0\,,
\end{equation}
which, in turn, is equivalent to $\theta=0$ and $a^\flat=-c^2\,d\ln f$. 
This is true since $\theta$ is horizontal, $\Proj_h\theta=\theta$, whereas 
the first term in (\ref{eq:Matteo1}) vanishes upon applying $\Proj_h$.
The result now follows from reading this equivalence both ways: 
1)~The Killing condition for $K:=fu$ implies rigidity for $u$
and exactness of $a^\flat$. 2)~Rigidity of $u$ and $a^\flat=-d\Phi$ 
imply that $K:=fu$ is Killing, where $f:=\exp(\Phi/c^2)$.  
\end{proof}

We now return to the condition (\ref{eq:OmegaLieZero})
and express $L_u\omega$ in terms of $du^\flat$. For this 
we recall that $L_uu^\flat=a^\flat$ 
(cf. footnote\,\ref{foot:LieVel}) and that Lie derivatives 
on forms commute with exterior derivatives\footnote{This is most 
easily seen by recalling that on forms the Lie derivative can be 
written as $L_u=d\circ i_u+i_u\circ d$, where $i_u$ is the map of 
inserting $u$ in the first slot.}. Hence we have 
\begin{equation}
\label{eq:LieOmega1}
2\,L_u\omega
=L_u(\Proj_h du^\flat)
=\Proj_h da^\flat
=da^\flat-c^{-2}u^\flat\wedge L_ua^\flat\,.
\end{equation}
Here we used the fact that the additional terms that result 
from the Lie derivative of the projection tensor $\Proj_h$ vanish, 
as a short calculation shows, and also that on forms the 
projection tensor $\Proj_h$ can be written as 
$\Proj_h=\text{id}-c^{-2}u^\flat\wedge i_u$, where $i_u$ denotes 
the map of insertion of $u$ in the first slot. 

Now we prove 
\begin{lemma}
\label{lemma:NoetherHerglotz1-3} 
Let $u$ generate a rigid motion in flat space such that 
$\omega\ne 0$, then  
\begin{equation}
\label{eq:LieAccZero}
L_ua^\flat=0\,.
\end{equation}
\end{lemma}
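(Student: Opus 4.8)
The plan is to convert the vanishing of $L_u\omega$ into a pointwise algebraic constraint on $L_u a^\flat$ and then try to exhaust it using the hypothesis $\omega\neq 0$. First I would insert (\ref{eq:OmegaLieZero}) into (\ref{eq:LieOmega1}): with $L_u\omega=0$ the latter collapses to
\[ da^\flat=c^{-2}\,u^\flat\wedge L_u a^\flat . \]
Moreover $L_u a^\flat$ is horizontal, since $a^\flat(u)=u\cdot a=0$ (because $u^2=c^2$ is constant) and $L_u u=[u,u]=0$ give $(L_u a^\flat)(u)=L_u\bigl(a^\flat(u)\bigr)-a^\flat(L_u u)=0$, i.e. $\Proj_h L_u a^\flat=L_u a^\flat$. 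This horizontal one–form is the object I want to annihilate, and projecting the displayed identity shows in passing that $\Proj_h\,da^\flat=0$.

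Next I would produce a transport equation for the vorticity. Writing $du^\flat=2\omega+c^{-2}u^\flat\wedge a^\flat$ from (\ref{eq:ThetaOmega2}) (recall $\nabla\wedge u^\flat=du^\flat$) and applying $d$, the identity $d\circ d=0$ together with $a^\flat\wedge a^\flat=0$, $u^\flat\wedge u^\flat=0$ and the displayed form of $da^\flat$ leaves only
\[ d\omega=-c^{-2}\,\omega\wedge a^\flat . \]
I would then hit this with $L_u$. Since $L_u$ commutes with $d$, the left side is $L_u\,d\omega=d\,L_u\omega=0$, while on the right $L_u\omega=0$ kills one term and leaves the other, so that
\[ \omega\wedge L_u a^\flat=0 . \]
This is precisely where flatness enters, through $L_u\omega=0$ (Lemma\,\ref{lemma:NoetherHerglotz1-1}); everything up to here is formal manipulation of exterior and Lie derivatives.

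The last step—passing from $\omega\wedge L_u a^\flat=0$ and $\omega\neq0$ to $L_u a^\flat=0$—is where the hypothesis $\omega\neq0$ must do all the work, and I expect it to be the genuine obstacle. The difficulty is that in the $(n-1)\le 3$–dimensional horizontal space a nonzero $\omega$ is decomposable of rank two, so $\omega\wedge L_u a^\flat=0$ only forces the horizontal one–form $L_u a^\flat$ to lie in the two–plane spanned by $\omega$ (equivalently, to be orthogonal to the vorticity axis); it does not by itself eliminate the two in–plane components. So this algebraic relation, although true, is not logically sufficient on its own.

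To close the gap I would bring in the symmetric half of the rigidity data not yet used. Propagating the vanishing–shear–and–expansion condition $\theta=0$ along $u$ via the Ricci identity in flat space ($R=0$) should pin the horizontal symmetrized gradient $\Proj_h(\nabla\vee a^\flat)$ to a quadratic expression in $\omega$; being algebraic in $\omega$ and in the $L_u$–invariant metric $h$, that expression is itself Lie–transported because $L_u\omega=0$. Combining the resulting control on $L_u$ of the full horizontal gradient of $a^\flat$ with $\Proj_h\,da^\flat=0$ and with the antisymmetric relation $\omega\wedge L_u a^\flat=0$ should then force the remaining in–plane components of $L_u a^\flat$ to vanish, yielding $L_u a^\flat=0$. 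I expect this reconciliation of the antisymmetric constraint with the symmetric shear constraint—the elimination of the in–plane part—to be the delicate core of the proof, the earlier steps being essentially bookkeeping with $d$, $i_u$ and $L_u$.
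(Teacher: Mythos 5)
Your computations up to $\omega\wedge L_ua^\flat=0$ are all correct: the collapse of (\ref{eq:LieOmega1}) to $da^\flat=c^{-2}\,u^\flat\wedge L_ua^\flat$, the horizontality of $L_ua^\flat$, and the transport equation $d\omega=-c^{-2}\,\omega\wedge a^\flat$ all check out. But, as you yourself diagnose, the wedge relation is strictly weaker than the claim: in the at most three-dimensional horizontal space $\omega$ is decomposable, so $\omega\wedge\alpha=0$ only confines $\alpha=L_ua^\flat$ to the two-plane of $\omega$. Your proposed repair---propagating $\theta=0$ via the Ricci identity to control $\Proj_h(\nabla\vee a^\flat)$ and then ``forcing'' the in-plane components to vanish---is left entirely at the level of ``should'' and ``expect''; no equation is derived, and it is not evident that the symmetric constraint couples to the in-plane part of $L_ua^\flat$ in the way you hope. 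So the proof is genuinely incomplete at exactly the point you flag as the core.

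The paper closes this gap by never passing to exterior derivatives, and thereby never discarding the in-plane information. Since $\omega$ is horizontal and $L_u\omega=0$ by (\ref{eq:OmegaLieZero}), $\omega$ is projectable, hence so is $\hat\nabla\omega=\Proj_h\nabla\omega$, which gives $L_u\hat\nabla\omega=0$. Writing $\nabla u^\flat=\omega+c^{-2}\,u^\flat\otimes a^\flat$ (using $\theta=0$) one has $\hat\nabla\omega=\Proj_h\nabla\nabla u^\flat-c^{-2}\,\Proj_h(\nabla u^\flat\otimes a^\flat)$; antisymmetrising only the first two tensor slots kills the first term by flatness ($R=0$) and leaves $-c^{-2}\,\omega\otimes a^\flat$. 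Applying $L_u$ and using $L_u\hat\nabla\omega=0$ together with $L_u\omega=0$ yields the \emph{tensor-product} relation $\omega\otimes L_ua^\flat=0$, which---unlike your wedge relation---forces $L_ua^\flat=0$ pointwise wherever $\omega\neq0$, since a tensor product vanishes iff one of its factors does. In effect, the full covariant derivative $\hat\nabla\omega$ retains precisely the components that $d\omega$ throws away; your instinct that more of the gradient data is needed was right, but the needed datum is the unantisymmetrised $\hat\nabla\omega$ and its projectability, not the shear propagation equation.
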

\begin{proof}
Equation (\ref{eq:OmegaLieZero}) says that $\omega$ is 
projectable (it is horizontal by definition). Hence 
$\hat\nabla\omega$ is projectable, which implies 
\begin{equation}
\label{eq:HatNablaOmega1}
L_u\hat\nabla\omega=0\,.
\end{equation} 
Using (\ref{eq:DecVelOneform}) with $\theta=0$ one has
\begin{equation}
\label{eq:HatNablaOmega2}
\hat\nabla\omega
=\Proj_h\nabla\omega
=\Proj_h\nabla\nabla u^\flat-c^{-2}\Proj_h(\nabla u^\flat\otimes a^\flat)\,.
\end{equation}
Antisymmetrisation in the first two tensor slots makes the first 
term on the right vanish due to the flatness on $\nabla$. The 
antisymmetrised right hand side is hence equal to 
$-c^{-2}\omega\otimes a^\flat$. Taking the Lie derivative of 
both sides makes the left hand side vanish due to 
(\ref{eq:HatNablaOmega1}), so that 
\begin{equation}
\label{eq:HatNablaOmega3}
L_u(\omega\otimes a^\flat)=\omega\otimes L_u a^\flat=0
\end{equation}
where we also used (\ref{eq:OmegaLieZero}). 
So we see that $L_ua^\flat=0$ if $\omega\ne 0$.\footnote{We 
will see below that (\ref{eq:LieAccZero}) is generally 
not true if $\omega=0$; see equation 
(\ref{eq:proof:NoeterHerglotz2i}).}
\end{proof}

The last three lemmas now constitute a proof for 
Theorem\,\ref{thm:NoetherHerglotz1}. Indeed, using 
(\ref{eq:LieAccZero}) in (\ref{eq:LieOmega1}) together with 
(\ref{eq:OmegaLieZero}) shows $da^\flat=0$, which,
according to Lemma\,\ref{lemma:NoetherHerglotz1-2}, implies 
that the motion is Killing. 

Next we turn to the second part of the theorem of Noether 
and Herglotz, which reads as follows:   
\begin{theorem}[Noether \& Herglotz, part\,2]
\label{thm:NoetherHerglotzP2}
All irrotational rigid motions in Minkowski space are given by 
the following construction: take a twice continuously differentiable 
curve $\tau\mapsto z(\tau)$ in Minkowski space, where w.l.o.g 
$\tau$ is the eigentime, so that ${\dot z}^2=c^2$. Let 
$H_\tau:=z(\tau)+({\dot z(\tau)})^\perp$ be the hyperplane through 
$z(\tau)$ intersecting the curve $z$  perpendicularly. Let $\Omega$ 
be a the tubular neighbourhood of $z$ in which no two hyperplanes 
$H_\tau,H_{\tau'}$ intersect for any pair $z(\tau),z(\tau')$ of 
points on the curve. In $\Omega$ define $u$ as the unique (once 
differentiable) normalised timelike vector field perpendicular to 
all $H_\tau\cap\Omega$. 
The flow of $u$ is the sought-for rigid motion.    
\end{theorem}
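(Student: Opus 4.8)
The plan is to prove the two halves of this characterisation separately: first that the construction always produces an irrotational rigid motion (\emph{soundness}), and second that every irrotational rigid motion arises in this way (\emph{completeness}). The conceptual pivot for both halves is the equivalence, visible in (\ref{eq:ThetaOmega2}), between irrotationality ($\omega=0$) and hypersurface-orthogonality of $u$ via Frobenius, together with the observation that the orthogonal hypersurfaces of an irrotational \emph{rigid} flow must be totally geodesic, hence flat affine hyperplanes.

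For soundness I would introduce coordinates adapted to the curve. Choosing a smooth orthonormal frame $\{e_i(\tau)\}_{i=1}^{n-1}$ of $\dot z(\tau)^\perp$ that is \emph{Fermi--Walker transported} along $z$ (such a frame exists since $z$ is $C^2$ and the normal bundle is trivial), I parametrise $\Omega$ by $p(\tau,\xi)=z(\tau)+\sum_i\xi^i e_i(\tau)$. Differentiating the relations $e_i\cdot e_j=-\delta_{ij}$ and $e_i\cdot\dot z=0$, and using that Fermi--Walker transport kills the horizontal part of $\dot e_i$, shows that $\dot e_i$ is purely longitudinal, i.e. proportional to $\dot z$ with coefficient linear in the acceleration $a=\ddot z$. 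Hence $\partial_\tau p=(1+\lambda)\dot z$ with $\lambda$ linear in $\xi$, so $\partial_\tau p$ is everywhere parallel to $u=\dot z$. Computing $g$ in these coordinates then gives vanishing $d\tau\,d\xi^i$ cross terms and the explicit form $g=(1+\lambda)^2c^2\,d\tau^2-\sum_i(d\xi^i)^2$; inserting $u^\flat=(1+\lambda)c^2\,d\tau$ into (\ref{eq:HorMetric}) yields $h=\sum_i(d\xi^i)^2$. Since the components of $h$ are manifestly independent of $\tau$ and $\partial_\tau$ commutes with the $\partial_{\xi^i}$, one has $L_{\partial_\tau}h=0$; writing $\partial_\tau=(1+\lambda)u$ and using the rescaling identity (\ref{eq:RigMo1}) together with $1+\lambda\ne0$ gives $L_uh=0$, i.e. rigidity by the Proposition characterising rigidity as $\theta=0$. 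Irrotationality is immediate because $u$ is normal to the level sets $\tau=\text{const}$. The condition $1+\lambda\ne0$ is exactly the tubular-neighbourhood hypothesis: $1+\lambda=0$ is where neighbouring normal hyperplanes meet, at spatial distance $c^2/\Vert a\Vert_g$, matching the remark made after (\ref{eq:AccRod}).

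For completeness I would run this backwards. Given an irrotational rigid $u$, the condition $\omega=0$ makes $u^\flat$ hypersurface-orthogonal, so $\Omega$ is foliated by spacelike hypersurfaces $\Sigma$ orthogonal to $u$. For horizontal $X,Y$ the second fundamental form of $\Sigma$ with unit normal $u/c$ is $c^{-1}(\nabla u^\flat)(X,Y)=c^{-1}\theta(X,Y)$ by (\ref{eq:DecVelOneform}), since $u^\flat(X)=0$ and $\omega=0$; rigidity forces $\theta=0$, so every leaf is totally geodesic and therefore a piece of an affine hyperplane. Now pick any integral curve of $u$, parametrised by eigentime so that it is a $C^2$ curve $z(\tau)$ with $\dot z=u$ and $\dot z^2=c^2$. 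Each leaf meets $z$ in a single point $z(\tau)$ and has normal $\dot z(\tau)$ there, so that leaf is exactly $z(\tau)+\dot z(\tau)^\perp=H_\tau$. Thus the foliation coincides with the family of normal hyperplanes of $z$ and $u$ is its unit normal field, which is precisely the stated construction, with $\Omega$ the tubular region on which this foliation is regular.

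The step I expect to be the main obstacle is the geometric heart of completeness: upgrading the infinitesimal condition $\theta=0$ to the global statement that the orthogonal leaves are \emph{affine} hyperplanes. This rests on identifying $\theta$ with the extrinsic curvature of the leaves and then invoking that a spacelike hypersurface with vanishing second fundamental form in flat Minkowski space is totally geodesic, hence affine. On the soundness side the analogous delicate point is the use of Fermi--Walker transport to keep $\partial_\tau$ parallel to $u$; without it the metric acquires $d\tau\,d\xi^i$ cross terms and the $\tau$-independence of $h$ is obscured, and controlling the region $1+\lambda>0$ is what ties the construction to the hypothesis that the normal hyperplanes do not intersect.
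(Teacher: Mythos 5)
Your proof is correct, and while its second half coincides with the paper's, the first half takes a genuinely different route. For completeness (every irrotational rigid motion arises from the construction) you argue exactly as the paper does: $\omega=0$ gives hypersurface orthogonality via Frobenius, and $\theta=0$ forces the leaves to be totally geodesic, hence affine hyperplanes; you phrase the latter step by identifying $\theta$ with the second fundamental form through (\ref{eq:DecVelOneform}), where the paper instead computes $\Proj_h\ddot z=\ddot z$ for curves inside a leaf (\ref{eq:proof:NoeterHerglotz2f})---the same idea in two dialects, and your explicit final step of picking an integral curve and identifying the leaves with the $H_\tau$ usefully spells out what the paper leaves implicit. For soundness, however, the paper works coordinate-free: it introduces the implicit function $\sigma:\Omega\rightarrow\reals$ with $f(\sigma(x),x)\equiv 0$, differentiates to get $d\sigma$ as in (\ref{eq:proof:NoeterHerglotz2b}), sets $u=\dot z\circ\sigma$, and reads off from $\nabla u^\flat=d\sigma\otimes(\ddot z^\flat\circ\sigma)$ in (\ref{eq:proof:NoeterHerglotz2d}) that $\Proj_h\nabla u^\flat=0$, killing $\theta$ and $\omega$ in one stroke; this formula moreover feeds directly into the paper's subsequent expressions for $a^\flat$, $da^\flat$ and $L_ua^\flat$ used to discuss constant acceleration and the failure of (\ref{eq:LieAccZero}). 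Your alternative---Fermi--Walker adapted coordinates $p(\tau,\xi)=z(\tau)+\xi^ie_i(\tau)$, the computation $\dot e_i=c^{-2}(a\cdot e_i)\dot z$ giving $\partial_\tau p=(1+\lambda)\dot z$ with $\lambda=c^{-2}\xi^i(a\cdot e_i)$, the explicit metric $g=(1+\lambda)^2c^2\,d\tau^2-\sum_i(d\xi^i)^2$ and $h=\sum_i(d\xi^i)^2$ via (\ref{eq:HorMetric}), followed by the rescaling identity (\ref{eq:RigMo1})---is also sound, and what it buys is transparency: the flatness and $\tau$-independence of $h$ are manifest, the breakdown locus $1+\lambda=0$ visibly reproduces the distance $c^2/\Vert a\Vert_g$ of (\ref{eq:AccRod}) at which neighbouring hyperplanes meet (so the tubular-neighbourhood hypothesis appears as exactly the domain of validity of your chart), and the whole computation generalises the Rindler form (\ref{eq:MetricsComovCoord}) from the boost example to an arbitrary $C^2$ worldline. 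What it costs is the auxiliary construction of the transported frame and the (correct but necessary) check that the chart is a diffeomorphism on $\Omega$, both of which the paper's implicit-function argument sidesteps.
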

\begin{proof}
We first show that the flow so defined is indeed rigid, even though 
this is more or less obvious from its very definition, since we just 
defined it by `rigidly' moving a hyperplane through spacetime. 
In any case, analytically we have, 
\begin{equation}
\label{eq:proof:NoeterHerglotz2a}
H_\tau=\{x\in\mathbb{M}^n\mid f(\tau,x):=
{\dot z}(\tau)\cdot\bigl(x-z(\tau)\bigr)=0\}\,.
\end{equation}
In $\Omega$ any $x$ lies on exactly one such hyperplane, $H_\tau$,
which means that there is a function $\sigma:\Omega\rightarrow\reals$ 
so that $\tau=\sigma(x)$ and hence $F(x):=f(\sigma(x),x)\equiv 0$. 
This implies $dF=0$. Using the expression for $f$ from 
(\ref{eq:proof:NoeterHerglotz2a}) this is equivalent to 
\begin{equation}
\label{eq:proof:NoeterHerglotz2b}
d\sigma=\dot z^\flat\circ \sigma/[c^2-({\ddot z}\circ\sigma)
\cdot(\text{id}-z\circ\sigma)]\,,
\end{equation}
where `$\text{id}$' denotes the `identity vector-field', 
$x\mapsto x^\mu\partial_\mu$, in Minkowski space. Note that in 
$\Omega$ we certainly have $\partial_\tau f(\tau,x)\ne 0$ and hence 
$\ddot z\cdot(x-z)\ne c^2$. In $\Omega$ we now define the normalised 
timelike vector field\footnote{Note that, by definition of 
$\sigma$, $(\dot z\circ\sigma)\cdot(\text{id}-z\circ\sigma)\equiv 0$.} 
\begin{equation}
\label{eq:proof:NoeterHerglotz2c}
u:=\dot z\circ\sigma\,.
\end{equation}
Using (\ref{eq:proof:NoeterHerglotz2b}), its derivative is given by  
\begin{equation}
\label{eq:proof:NoeterHerglotz2d}
\nabla u^\flat=
d\sigma\otimes({\ddot z}^\flat\circ\sigma)=
\bigl[({\dot z}^\flat\circ\sigma)\otimes({\ddot z}^\flat\circ\sigma)
\bigr]/(N^2c^2)\,,
\end{equation}
where
\begin{equation}
\label{eq:proof:NoeterHerglotz2e}
N:=1-(\ddot z\circ\sigma)\cdot(\text{id}-z\circ\sigma)/c^2\,.
\end{equation}
This immediately shows that $\Proj_h\nabla u^\flat=0$ (since 
$\Proj_h{\dot z}^\flat=0$) and therefore that $\theta=\omega=0$. 
Hence $u$, as defined in (\ref{eq:proof:NoeterHerglotz2c}), 
generates an irrotational rigid motion.

For the converse we need to prove that any irrotational rigid 
motion is obtained by such a construction. So suppose $u$ is a
normalised timelike vector field such that $\theta=\omega=0$. 
Vanishing $\omega$ means $\Proj_h(\nabla\wedge u^\flat)=\Proj_h(d u^\flat)=0$. 
This is equivalent to $u^\flat\wedge du^\flat=0$, which according 
to the Frobenius theorem in differential geometry is equivalent 
to the integrability of the distribution\footnote{`Distribution'
is here used in the differential-geometric sense, where for a 
manifold $M$ it denotes an assignment of a linear subspace 
$V_p$ in the tangent space $T_pM$ to each point $p$ of $M$. 
The distribution $u^\flat=0$ is defined by  
$V_p=\{v\in T_pM\mid u^\flat_p(v)=u_p\cdot v=0\}$. A distribution 
is called (locally) integrable if (in the neighbourhood of each point)
there is a submanifold $M'$ of $M$ whose tangent space at any 
$p\in M'$ is just $V_p$.} 
$u^\flat=0$, i.e. the hypersurface orthogonality of $u$. We wish 
to show that the hypersurfaces orthogonal to $u$ are hyperplanes. 
To this end consider a spacelike curve $z(s)$, where $s$ is the 
proper length, running within one hypersurface perpendicular to $u$. 
The component of its second $s$-derivative parallel to the 
hypersurface is given by (to save notation we now simply write $u$ and 
$u^\flat$ instead of $u\circ z$ and $u^\flat\circ z$)  
\begin{equation}
\label{eq:proof:NoeterHerglotz2f}
\Proj_h\ddot z
=\ddot z-c^{-2}u\, u^\flat(\ddot z)  
=\ddot z+c^{-2}u\,\theta(\dot z,\dot z) =\ddot z\,,
\end{equation}
where we made a partial differentiation in the second step
and then used $\theta=0$. Geodesics in the hypersurface are 
curves whose second derivative with respect to proper length 
have vanishing components parallel to the hypersurface. Now, 
(\ref{eq:proof:NoeterHerglotz2f}) implies that geodesics in 
the hypersurface are geodesics in Minkowski space (the 
hypersurface is `totally geodesic'), i.e. given by straight 
lines. Hence the hypersurfaces are hyperplanes.
\end{proof}

Theorem\,\ref{thm:NoetherHerglotzP2} precisely corresponds to 
the Newtonian counterpart: The irrotational motion of a rigid 
body is determined by the worldline of any of its points, and 
any timelike worldline determines such a motion. We can rigidly 
put an extended body into any state of translational  motion, as 
long as the size of the body is limited by $c^2/\alpha$, where 
$\alpha$ is the modulus of its acceleration. This also shows 
that (\ref{eq:LieAccZero}) is generally not valid for 
irrotational rigid motions. In fact, the acceleration one-form
field for (\ref{eq:proof:NoeterHerglotz2c}) is 
\begin{equation}
\label{eq:proof:NoeterHerglotz2g}
a^\flat=(\ddot z^\flat\circ\sigma)/N
\end{equation}
from which one easily computes
\begin{equation}
\label{eq:proof:NoeterHerglotz2h}
da^\flat=(\dot z^\flat\circ\sigma)\wedge\left\{
(\Proj_h\dddot z^\flat\circ\sigma)+(\ddot z^\flat\circ\sigma)\frac{(\Proj_h\dddot z\circ\sigma)\cdot(\text{id}-z\circ\sigma)}{Nc^2}\right\}N^{-2}c^{-2}\,.
\end{equation}
From this one sees, for example, that for 
\emph{constant acceleration}, defined by $\Proj_h\dddot z=0$ (constant 
acceleration in time as measured in the instantaneous rest frame), 
we have $da^\flat=0$ and hence a Killing motion. Clearly, this is just 
the motion (\ref{eq:BoostKill2}) for the boost Killing field 
(\ref{eq:BoostKill1}). The Lie derivative of $a^\flat$ is now 
easily obtained: 
\begin{equation}
\label{eq:proof:NoeterHerglotz2i}
L_ua^\flat=i_uda^\flat
=(\Proj_h\dddot z^\flat\circ\sigma)N^{-2}\,,
\end{equation}
showing explicitly that it is not zero except for motions
of constant acceleration, which were just seen to be Killing 
motions. 

In contrast to the irrotational case just discussed, we have seen 
that we cannot put a body rigidly into rotational motion. 
In the old days this was sometimes expressed by saying that the rigid 
body in SR has only three instead of six degrees of freedom. 
This was clearly thought to be paradoxical as long as one assumed 
that the notion of a perfectly rigid body should also make sense in 
the framework of SR. However, this hope was soon realized to be 
physically untenable~\cite{Laue:1911a}.

\appendix
\section{Appendices}
In this appendix we spell out in detail some of the 
mathematical notions that were used in the main text.

\subsection{Sets and group actions}
\label{sec:GroupActions}
Given a set $S$, recall that an \emph{equivalence relation} is a 
subset $R\subset S\times S$ such that for all $p,q,r\in S$ the 
following conditions hold: 
1)~$(p,p)\in R$ (called `reflexivity'), 
2)~if $(p,q)\in R$ then $(q,p)\in R$ (called `symmetry'), and 
3)~if $(p,q)\in R$ and  $(q,r)\in R$ then $(p,r)\in R$ 
(called `transitivity'). Once $R$ is given, one often conveniently 
writes $p\sim q$ instead of $(p,q)\in R$. Given $p\in S$, its 
\emph{equivalence class}, $[p]\subseteq S$, is given by all 
points $R$-related to $p$, i.e. $[p]:=\{q\in S\mid (p,q)\in R\}$. 
One easily shows that equivalence classes are either identical or 
disjoint. Hence they form a \emph{partition} of $S$, that is, 
a covering by mutually disjoint subsets. Conversely, given a partition 
of a set $S$, it defines an equivalence relation by declaring two 
points as related iff they are members of the same cover set.  
Hence there is a bijective correspondence  between partitions of 
and equivalence relations on a set $S$. The set of equivalence 
classes is denoted by $S/R$ or $S/\!\!\sim$. There is a natural surjection 
$S\rightarrow S/R$, $p\mapsto [p]$.   

If in the definition of equivalence relation we exchange symmetry
for antisymmetry, i.e. $(p,q)\in R$ and $(q,p)\in R$ implies $p=q$, 
the relation is called a \emph{partial order}, usually written as 
$p\geq q$ for $(p,q)\in R$. If, instead, reflexivity is dropped 
and symmetry is replaced by asymmetry, i.e.  $(p,q)\in R$ 
implies $(q,p)\not\in R$, one obtains a relation called a 
\emph{strict partial order}, usually denoted by $p> q$ for 
$(p,q)\in R$.  
   
An \emph{left action} of a group $G$ on a set $S$ is a map
$\phi:G\times S\rightarrow S$, such that $\phi(e,s)=s$ 
($e=$ group identity) and $\phi(gh,s)=\phi(g,\phi(h,s))$. 
If instead of the latter equation we have $\phi(gh,s)=\phi(h,\phi(g,s))$
one speaks of a \emph{right action}. For left actions one sometimes 
conveniently writes $\phi(g,s)=:g\cdot s$, for right actions 
$\phi(g,s)=:s\cdot g$. An action is called \emph{transitive} if 
for every pair $(s,s')\in S\times S$ there is a $g\in G$ such that 
$\phi(g,s)=s'$, and \emph{simply transitive} if, in addition,  
$(s,s')$ determine $g$ uniquely, that is, $\phi(g,s)=\phi(g',s)$
for some $s$ implies $g=g'$. The action is called \emph{effective} 
if $\phi(g,s)=s$ for all $s$ implies $g=e$ (`every $g\ne e$ moves 
something') and \emph{free} if $\phi(g,s)=s$ for some $s$ implies 
$g=e$ (`no $g\ne e$ has a fixed point'). It is obvious that simple 
transitivity implies freeness and that, conversely, freeness and 
transitivity implies simple transitivity. Moreover, for Abelian groups, 
effectivity and transitivity suffice to imply simple transitivity. 
Indeed, suppose $g\cdot s=g'\cdot s$ holds for some $s\in S$, then 
we also have  $k\cdot(g\cdot s)=k\cdot(g'\cdot s)$ for all $k\in G$
and hence $g\cdot(k\cdot s)=g'\cdot(k\cdot s)$ by commutativity.
This implies that $g\cdot s=g'\cdot s$ holds, in fact, for all $s$. 

For any $s\in S$ we can consider the \emph{stabilizer subgroup}
\begin{equation}
\label{eq:def:Stab}
\Stab(s):=\{g\in G\mid\phi(g,s)=s\}\subseteq G\,.
\end{equation} 
If $\phi$ is transitive, any two stabilizer subgroups are conjugate: 
$\Stab(g\cdot s)=g\Stab(s)g^{-1}$. By definition, if $\phi$ is free 
all stabilizer subgroups are trivial (consist of the identity element 
only). In general, the intersection 
$G':=\bigcap_{s\in S}\Stab(s)\subseteq G$ is the normal subgroup of 
elements acting trivially on $S$. If $\phi$ is an action of $G$ on 
$S$, then there is an effective action $\hat\phi$ of $\hat G:= G/G'$ 
on $S$, defined by $\hat\phi([g],s):=\phi(g,s)$, where $[g]$ denotes 
the $G'$-coset of $G'$ in $G$.  

The \emph{orbit} of $s$ in $S$ under the action $\phi$ of $G$ 
is the subset
\begin{equation}
\label{eq:def:Orb}
\Orb(s):=\{\phi(g,s)\mid g\in G\}\subseteq S\,.
\end{equation} 
It is easy to see that group orbits are either disjoint or 
identical. Hence they define a partition of $S$, that is, 
an equivalence relation.   

A relation $R$ on $S$ is said to be invariant under the self map 
$f:S\rightarrow S$ if $(p,q)\in R\Leftrightarrow (f(p),f(q))\in R$.
It is said to be invariant under the action $\phi$ of $G$ on $S$ if 
$(p,q)\in R\Leftrightarrow (\phi(g,p),\phi(g,q))\in R$ for all 
$g\in G$. If $R$ is such a $G$-invariant equivalence relation, there is 
an action $\phi'$ of $G$ on the set $S/R$ of equivalence classes, 
defined by $\phi'(g,[p]):=[\phi(g,p)]$. A general theorem states that 
invariant equivalence relations exist for transitive group actions, 
iff the stabilizer subgroups (which in the transitive case are all 
conjugate) are maximal (e.g. Theorem\,1.12 in \cite{Jacobson:BasicAlgebraI}).

\subsection{Affine spaces}
\label{sec:AffineSpaces}
\begin{definition}
\label{def:AffineSpace}
An $n$-dimensional \textbf{affine space} over the field $\field$ 
(usually $\reals$ or $\complex$) is a triple $(S,V,\Phi)$,
where $S$ is a non-empty set, $V$ an $n$-dimensional vector
space over $\field$, and $\Phi$ an effective and transitive action 
$\Phi:V\times S\rightarrow S$ of $V$ (considered as Abelian group 
with respect to addition of vectors) on $S$. 
%A generic class of examples is given by $S=V$ (equality of sets) 
% and $\Phi(v,p)=v+p$, where `$+$' refers to addition in $V$. 
% In such cases we simply write $\Aff(V)$ instead of $(S,V,\Phi)$.   
\end{definition}
We remark that an effective and transitive action of an Abelian 
group is necessarily simply transitive. Hence, without loss of 
generality, we could have required a simply transitive action in 
Definition\,\ref{def:AffineSpace} straightaway. We also note 
that even though the action $\Phi$ only refers to the Abelian group 
structure of $V$, it is nevertheless important for the definition 
of an affine space that $V$ is, in fact, a vector space (see below).
Any ordered pair of points $(p,q)\in S\times S$ uniquely defines 
a vector $v$, namely that for which $p=q+v$. It can be thought 
of as the difference vector pointing from $q$ to $p$. We write 
$v=\Delta(q,p)$, where $\Delta:S\times S\rightarrow V$ is a map 
which satisfies the conditions 
\begin{subequations}
\label{eq:AffDiffMap}
\begin{alignat}{2}
\label{eq:AffDiffMap1}
& \Delta(p,q)+\Delta(q,r)=\Delta(p,r)\qquad
&&\text{for all}\quad p,q,r\in S\,,\\
\label{eq:AffDiffMap2}
& \Delta_q:p\ni S\mapsto\Delta(p,q)\in V\quad
\text{is a bijection}\qquad
&&\text{for all}\quad p\in S\,.
\end{alignat}
\end{subequations}
Conversely, these conditions suffice to characterise an affine 
space, as stated in the following proposition, the proof of which 
is left to the reader: 
\begin{proposition}
\label{thm:AltDefAffSpace}
Let $S$ be a non-empty set, $V$ an $n$-dimensional vector space 
over $\field$ and $\Delta:S\times S\rightarrow V$ a map satisfying 
conditions~(\ref{eq:AffDiffMap}). Then $S$ is an $n$-dimensional 
affine space over $\field$ with action $\Phi(v,p):=\Delta^{-1}_p(v)$.
\end{proposition}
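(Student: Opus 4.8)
The plan is to verify directly that $\Phi(v,p):=\Delta_p^{-1}(v)$ is an effective and transitive action of $V$ (viewed as an additive abelian group) on $S$; by the remark immediately preceding the proposition such an action is automatically simply transitive, so $(S,V,\Phi)$ is then an affine space in the sense of Definition~\ref{def:AffineSpace}, of dimension $n=\dim V$. First I would extract two elementary consequences of the additivity condition~(\ref{eq:AffDiffMap1}). Putting $p=q=r$ gives $\Delta(p,p)=0$, and then putting $r=p$ gives the antisymmetry $\Delta(q,p)=-\Delta(p,q)$. These are the only auxiliary facts needed.

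Next I would record the working reformulation of $\Phi$. Condition~(\ref{eq:AffDiffMap2}) says each $\Delta_q\colon p\mapsto\Delta(p,q)$ is a bijection $S\to V$ (the quantifier there is over the second slot $q$), so $\Phi(v,p):=\Delta_p^{-1}(v)$ is well defined, and the characterisation I would use throughout is
\begin{equation}
q=\Phi(v,p)\quad\Longleftrightarrow\quad\Delta(q,p)=v\,.
\end{equation}
Intuitively $\Phi(v,p)$ is the point $p+v$, and $\Delta$ is the associated difference map.

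With this in hand the axioms follow quickly. For the identity axiom, $\Phi(0,p)$ is the unique $s$ with $\Delta(s,p)=0$; since $\Delta(p,p)=0$ and $\Delta_p$ is injective, $s=p$, so $\Phi(0,p)=p$. For the composition (left-action) law, I would set $q:=\Phi(w,p)$ and $r:=\Phi(v,q)$, so that $\Delta(q,p)=w$ and $\Delta(r,q)=v$; then~(\ref{eq:AffDiffMap1}) yields $\Delta(r,p)=\Delta(r,q)+\Delta(q,p)=v+w$, whence $r=\Phi(v+w,p)$, i.e. $\Phi(v,\Phi(w,p))=\Phi(v+w,p)$. Transitivity is immediate: given $p,q$, the vector $v:=\Delta(q,p)$ satisfies $\Phi(v,p)=q$. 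Effectiveness follows from the observation that $\Phi(v,p)=p$ forces $\Delta(p,p)=v$, hence $v=0$.

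There is no substantial obstacle here; the argument is pure bookkeeping once the two consequences of~(\ref{eq:AffDiffMap1}) are noted. The only points that require care are matching the left-action convention $\Phi(v+w,p)=\Phi(v,\Phi(w,p))$ with the order of summation in~(\ref{eq:AffDiffMap1}) (harmless in any case since $V$ is abelian), and reading the bijectivity in~(\ref{eq:AffDiffMap2}) correctly as ranging over the base point in the second argument, so that the inverses $\Delta_p^{-1}$ used to define $\Phi$ all exist.
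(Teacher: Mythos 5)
Your proof is correct, and since the paper explicitly leaves this proposition's proof to the reader, your direct verification is exactly the routine argument intended: well-definedness of $\Phi$ via the bijectivity in (\ref{eq:AffDiffMap2}) (correctly read with the quantifier over the base point, despite the typo there), the identity and composition axioms via (\ref{eq:AffDiffMap1}), transitivity by taking $v:=\Delta(q,p)$, and effectiveness (in fact freeness, which is stronger) from $\Delta(p,p)=0$, after which the remark preceding the proposition upgrades this to simple transitivity. The only cosmetic point is that the antisymmetry $\Delta(q,p)=-\Delta(p,q)$, which you list among the auxiliary facts needed, is never actually used in your argument---$\Delta(p,p)=0$ alone suffices.
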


One usually writes $\Phi(v,p)=:p+v$, which defines what is meant 
by `$+$' between an element of an affine space and an element of $V$. 
Note that addition of two points in affine space is not defined. 
The property of being an action now states $p+0=p$ and $(p+v)+w=p+(v+w)$, 
so that in the latter case we may just write $p+v+w$. Similarly we 
write $\Delta(p,q)=:q-p$, defining what is meant by `$-$' between 
two elements of affine space. The minus sign also makes sense between 
an element of affine space and an element of vector space if one 
defines $p+(-v)=:p-v$. We may now write equations like 
\begin{equation}
\label{eq:AffDistributivity}
p+(q-r)=q+(p-r)\,,
\end{equation}
the formal proof of which is again left to the reader. 
It implies that 

Considered as Abelian group, any linear subspace $W\subset V$ 
defines a subgroup. The orbit of that subgroup in $S$ 
through $p\in S$ is an affine subspace, denoted by $W_p$, 
i.e.  
\begin{equation}
\label{eq:def:AffPlane}
W_p=p+W:=\{p+w\mid w\in W\}\,,
\end{equation}
which is an affine space over $W$ in its own right of dimension 
$\mathrm{dim}(W)$. One-dimensional affine subspaces are called 
\emph{(straight) lines}, two-dimensional ones \emph{planes}, and 
those of co-dimension one are called \emph{hyperplanes}.

\subsection{Affine maps}
\label{sec:AffineMaps}
Affine morphisms, or simply affine maps, are structure preserving maps 
between affine spaces. To define them in view of 
Definition\,\ref{def:AffineSpace} we recall once more the significance
of $V$ being a vector space and not just an Abelian group. This enters 
the following definition in an essential way, since there are 
considerably more automorphisms of $V$ as Abelian group, i.e. maps 
$f:V\rightarrow V$ that satisfy $f(v+w)=f(v)+f(w)$ for all $v,w\in V$,
than automorphisms of $V$ as linear space which, in addition, need to 
satisfy $f(av)=af(v)$ for all $v\in V$ and all 
$a\in\field$). In fact, the difference is precisely that the latter 
are all continuous automorphisms of $V$ (considered as topological 
Abelian group), whereas there are plenty (uncountably many) 
discontinuous ones, see \cite{Hamel:1905}.\footnote{
Let $\field=\reals$, then it is easy to see that $f(v+w)=f(v)+f(w)$ 
for all $v,w\in V$ implies $f(av)=af(v)$ for all $v\in V$ and all 
$a\in\rationals$ (rational numbers). For continuous $f$ this implies 
the same for all $a\in\reals$. All discontinuous $f$ are obtained 
as follows: let $\{e_\lambda\}_{\lambda\in I}$ be a (necessarily 
uncountable) basis of $\reals$ as vector space over $\rationals$ 
(`Hamel basis'), prescribe any values $f(e_\lambda)$, and extend 
$f$ linearly to all of $\reals$. Any value-prescription for which 
$I\ni\lambda\mapsto f(e_\lambda)/e_\lambda\in\reals$ is not 
constant gives rise to a non $\reals$-linear and discontinuous $f$. 
Such $f$ are `wildly' discontinuous in the following sense: for 
\emph{any} interval $U\subset\reals$, $f(U)\subset\reals$ is 
dense~\cite{Hamel:1905}.}

\begin{definition}
\label{def:AffineMap}
Let $(S,V,\Phi)$ and $(S',V',\Phi')$ be two affine spaces. 
An \textbf{affine morphism} or \textbf{affine map} is a pair 
of maps $F:S\rightarrow S'$ and $f:V\rightarrow V'$, where $f$ 
is linear, such that 
\begin{equation}
\label{eq:AffineMap1}
F\circ\Phi=\Phi'\circ f\times F\,.
\end{equation}
\end{definition}
In the convenient way of writing introduced above, this is 
equivalent to 
\begin{equation}
\label{eq:AffineMap2}
F(q+v)=F(q)+f(v)\,,
\end{equation}
for all $q\in S$ and all $v\in V$. (Note that the $+$ sign 
on the left refers to the action $\Phi$ of $V$ on $S$, 
whereas that on the right refers to the action $\Phi'$ of 
$V'$ on $S'$.) This shows that an affine map $F$ is determined 
once the linear map $f$ between the underlying 
vector spaces is given and the image $q'$ of an arbitrary point 
$q$ is specified. Equation (\ref{eq:AffineMap2}) can be rephrased 
as follows:  
\begin{corollary}
Let $(S,V,\Phi)$ and $(S',V',\Phi')$ be two affine spaces. 
A map $F:S\rightarrow S'$ is affine iff each of its 
restrictions to lines in $S$ is affine. 
\end{corollary}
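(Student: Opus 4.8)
The plan is to prove the two implications separately, the forward one being essentially immediate and the reverse one carrying all the content. For the forward direction, suppose $F$ is affine with associated linear map $f:V\to V'$ as in Definition~\ref{def:AffineMap}. A line in $S$ has the form $\ell=p+\Span\{v\}$ with $v\ne 0$, and it is itself an affine space over the one-dimensional subspace $\Span\{v\}\subset V$. I would simply exhibit the pair $(F|_\ell,\,f|_{\Span\{v\}})$ as witnessing affineness: since $F(p+tv)=F(p)+f(tv)=F(p)+f|_{\Span\{v\}}(tv)$ and the restriction of a linear map to a subspace is again linear, the condition (\ref{eq:AffineMap2}) holds for $F|_\ell$. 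Nothing more is needed here.

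The substance is the converse. I would first record what the hypothesis actually says: because a linear map out of the one-dimensional space $\Span\{v\}$ is determined by its value on $v$, the restriction of $F$ to every line being affine is equivalent to the assertion that for each $p\in S$ and each $v\in V$ there is a vector $w_{p,v}\in V'$ with
\[
F(p+tv)=F(p)+t\,w_{p,v}\qquad\text{for all }t\in\field.
\]
Then I would fix once and for all a base point $o\in S$ and define $f:V\to V'$ by $f(v):=F(o+v)-F(o)$, exactly as in the discussion following Theorem~\ref{thm:NoWildIsometries}. Applying the displayed relation along the line through $o$ in direction $v$ and setting $t=1$ identifies $w_{o,v}=f(v)$, whence $F(o+tv)=F(o)+t\,f(v)$; reading this off at general $t$ yields homogeneity, $f(tv)=t\,f(v)$. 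It then remains to establish additivity and to upgrade the identity from $o$ to an arbitrary point $q$.

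Additivity is where the real work lies, and my approach would be the parallelogram (diagonal-bisection) argument. For linearly independent $v,w$ the four points $o,\ o+v,\ o+w,\ o+v+w$ span a genuine parallelogram whose two diagonals meet in the common midpoint $m=o+\tfrac12(v+w)$. Restricting $F$ to the diagonal through $o$ and $o+v+w$, affineness on that line forces $F(m)=\tfrac12 F(o)+\tfrac12 F(o+v+w)$, while restricting instead to the diagonal through $o+v$ and $o+w$ gives $F(m)=\tfrac12 F(o+v)+\tfrac12 F(o+w)$. Equating the two expressions for $F(m)$ and inserting the homogeneity already proved yields $f(v+w)=f(v)+f(w)$; the linearly dependent case follows at once from homogeneity. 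With $f$ now linear, I would finish by writing an arbitrary $q$ as $q=o+u$ and computing $F(q+v)=F(o)+f(u+v)=F(o)+f(u)+f(v)=F(q)+f(v)$, which is precisely (\ref{eq:AffineMap2}). The one genuine obstacle to watch is the midpoint step, which presupposes $\tfrac12\in\field$: this is automatic in the cases $\field=\reals,\complex$ of interest here (and whenever $\mathrm{char}\,\field\ne 2$), and is the single point where the additive structure of the field, rather than the bare incidence geometry of lines, must be invoked; a field of characteristic two would require a separate argument, since there the diagonals of the parallelogram fail to meet.
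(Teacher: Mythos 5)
Your proof is correct, but there is nothing in the paper to compare it against step by step: the corollary is stated there without any proof, introduced merely as a way in which equation (\ref{eq:AffineMap2}) ``can be rephrased''. The forward direction is indeed the trivial rephrasing (your restriction $\bigl(F|_\ell,\,f|_{\Span\{v\}}\bigr)$ is exactly right), but the converse has genuine content that the paper leaves entirely implicit, and your parallelogram argument supplies it in the standard way. All the steps check out: the reformulation of the hypothesis as $F(p+tv)=F(p)+t\,w_{p,v}$ is faithful to Definition\,\ref{def:AffineMap}, since a linear map on a one-dimensional space is determined by one value; homogeneity of $f$ follows as you say from the pencil of lines through the fixed base point $o$; and in the additivity step your use of the second diagonal is legitimate because affineness of $F$ restricted to the line through $o+v$ and $o+w$ pins $F(m)$ to the midpoint combination $\tfrac12 F(o+v)+\tfrac12 F(o+w)$ with no prior knowledge of how $f$ behaves in the direction $w-v$ (one identifies $w_{o+v,\,w-v}=F(o+w)-F(o+v)$ by evaluating at $t=1$). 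Your closing caveat is substantive rather than pedantic: over $\field_2$ every line has exactly two points, so \emph{every} map is affine on each line and the corollary is outright false, while for fields of characteristic two with more than two elements a replacement argument does exist---compare $f(v+sw)$ computed along the line through $o+v$ with direction $w$ against the same quantity computed, after rescaling by homogeneity, along the line through $o+w$ with direction $v$, for two distinct nonzero values of $s$. Since Definition\,\ref{def:AffineSpace} restricts $\field$ to $\reals$ or $\complex$, your midpoint argument covers the paper's setting in full.
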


Setting $p:=q+v$ equation (\ref{eq:AffineMap2}) is 
equivalent to 
\begin{equation}
\label{eq:AffineMap3}
F(p)-F(q)=f(p-q)
\end{equation}
for all $p,q\in S$. In view of the alternative definition of 
affine spaces suggested by Proposition\,\ref{thm:AltDefAffSpace}, 
this shows that we could have defined affine maps alternatively 
to (\ref{eq:AffineMap1}) by ($\Delta':S'\times S'\rightarrow V'$ 
is the difference map in $S'$)
\begin{equation}
\label{eq:AffineMap4}
\Delta'\circ F\times F=f\circ\Delta\,.
\end{equation}

Affine bijections of an affine space $(S,V,\Phi)$ onto itself 
form a group, the \emph{affine group}, denoted by 
$\group{GA}(S,V,\Phi)$. Group multiplication 
is just given by composition of maps, that is 
$(F_1,f_1)(F_2,f_2):=(F_1\circ F_2\,,\,f_1\circ f_2)$. 
It is immediate that the composed maps again 
satisfy~(\ref{eq:AffineMap1}). 

For any $v\in V$, the map $F=\Phi_v:p\mapsto p+v$ is an affine 
bijection for which $f=\Identity_V$. Note that in this case 
(\ref{eq:AffineMap1}) simply turns into the requirement  
$\Phi_v\circ\Phi_w=\Phi_w\circ\Phi_v$ for all $w\in V$, which 
is clearly satisfied due to $V$ being a commutative group. 
Hence there is a natural embedding $T:V\rightarrow\group{GL}(S,V,\Phi)$, 
the image $T(V)$of which is called the subgroup of 
\emph{translations}. The map $F\mapsto F_*:=f$ defines a 
group homomorphism $\group{GA}(S,V,\Phi)\rightarrow\group{GL}(V)$, since 
$(F_1\circ F_2)_*=f_1\circ f_2$. We have just seen that 
the translations are in the kernel of this map. In fact, 
the kernel is equal to the subgroup $T(V)$ of translations, 
as one easily infers from (\ref{eq:AffineMap3}) with 
$f=\Identity_V$, which is equivalent to $F(p)-p=F(q)-q$ 
for all $p,q\in S$. Hence there exists a $v\in V$ such 
that for all $p\in S$ we have $F(p)=p+v$. 

The quotient group $\group{GA}(S,V,\Phi)/T(V)$ is then clearly 
isomorphic to $\group{GL}(V)$. There are also embeddings 
$\group{GL}(V)\rightarrow\group{GA}(S,V,\Phi)$, but no canonical one:
each one depends on the choice of a reference point $o\in S$,
and is given by $\group{GL}(V)\ni f\mapsto F\in\group{GA}(S,V,\Phi)$, 
where $F(p):=o+f(p-o)$ for all $p\in S$. This shows that  
$\group{GA}(S,V,\Phi)$ is isomorphic to the semi-direct product 
$V\rtimes\group{GL}(V)$, though the isomorphism depends on the 
choice of $o\in S$. The action of $(a,A)\in V\rtimes\group{GL}(V)$ 
on $p\in S$ is then defined by 
\begin{equation}
\label{eq:AffineGroupAction}  
\bigl((a,A)\,,\,p\bigr)\mapsto o+a+A(p-o)\,,
\end{equation}    
which is easily checked to define indeed an 
($o$ dependent) action of $V\rtimes\group{GL}(V)$ on $S$.

\subsection{Affine frames, active and passive transformations}
\label{sec:AffineFrames}
Before giving the definition of an affine frame, we recall that of 
a linear frame: 
\begin{definition}
\label{def:LinearFrame}
A \textbf{linear frame} of the $n$-dimensional vector space $V$ over 
$\field$ is a basis $f=\{e_a\}_{a=1\cdots n}$ of $V$,
regarded as a linear isomorphism $f:\field^n\rightarrow V$,
given by $f(v^1,\cdots,v^n):= v^ae_a$. The set of linear frames 
of $V$ is denoted by $\Frames{V}$. 
\end{definition}
Since $\field$ and hence $\field^n$ carries a natural topology,
there is also a natural topology of $V$, namely that which makes 
each frame-map $f:\field^n\rightarrow V$ a homeomorphism. 

There is a natural right action of $\group{GL}(\field^n)$ on $\Frames{V}$, 
given by $(A,f)\rightarrow f\circ A$. It is immediate that this 
action is simply transitive. It is sometimes called the 
\emph{passive interpretation} of the transformation group 
$\group{GL}(\field^n)$, presumably because it moves the frames---associated 
to the observer---and not the points of $V$. 

On the other hand, any frame $f$ induces an isomorphism of 
algebras $\End(\field^n)\rightarrow\End(V)$, given by 
$A\mapsto A^f:=f\circ A\circ f^{-1}$. If $A=\{A^b_a\}$, then 
$A^f(e_a)=A^b_ae_b$, where $f=\{e_a\}_{a=1\cdots n}$. 
Restricted to $\group{GL}(\field^n)\subset\End(\field^n)$, this 
induces a group isomorphism $\group{GL}(\field^n)\rightarrow\group{GL}(V)$
and hence an $f$-dependent action of $\group{GL}(\field^n)$ on $V$
by linear transformations, defined by $(A,v)\mapsto A^fv=f(Ax)$, 
where $f(x)=v$. This is sometimes called the 
\emph{active interpretation} of the transformation group 
$\group{GL}(\field^n)$, presumably because it really moves the 
points of $V$. 

We now turn to affine spaces: 
\begin{definition}
\label{def:AffineFrame}
An \textbf{affine frame} of the $n$-dimensional affine space 
$(S,V,\Phi)$ over $\field$ is a tuple $F:=(o,f)$, where $o$ 
is a base point in $S$ and $f:\field^n\rightarrow V$ is a 
linear frame of $V$. $F$ is regarded as a map 
$\field^n\rightarrow S$, given by $F(x):=o+f(x)$. 
We denote the set of affine frames by $\Frames{(S,V,\Phi)}$.  
\end{definition}
Now there is a natural topology of $S$, namely that which makes 
each frame-map $F:\field^n\rightarrow S$ a homeomorphism. 

If we regard $\field^n$ as an affine space $\Aff(\field)$, 
it comes with a distinguished base point $o$, the zero 
vector. The group $\group{GA}\bigl(\Aff(\field^n)\bigr)$ is therefore 
naturally isomorphic to $\field^n\rtimes\group{GL}(\field^n)$. 
The latter naturally acts on $\field^n$ in the standard way,
$\Phi:((a,A),x)\mapsto \Phi((a,A),x):=A(x)+a$, where group 
multiplication is given by  
\begin{equation}
\label{eq:GeneralAffineGroup}
(a_1,A_1)(a_2,A_2)=(a_1+A_1a_2\,,\,A_1A_2)\,.
\end{equation}
The group $\field^n\rtimes\group{GL}(\field^n)$ has a natural right 
action on $\Frames{(S,V,\Phi)}$, where 
$(g,F)\mapsto F\cdot g:=F\circ g$. Explicitly, for $g=(a,A)$ 
and $F=(o,f)$, this action reads:  
\begin{equation}
\label{eq:AffActionPassive}
F\cdot g=(o,f)\cdot (a,A)=(o+f(a),f\circ A)\,.
\end{equation}
It is easy to verify directly that this is an action which, 
moreover, is again simply transitive. It is referred to as 
the \emph{passive interpretation} of the affine group 
$\field^n\rtimes\group{GL}(\field^n)$. 

Conversely, depending on the choice of an affine frame 
$F\in\Frames{(S,V,\Phi)}$, there is a group isomorphism 
$\field^n\rtimes\group{GL}(\field^n)\rightarrow\group{GA}(S,V,\Phi)$,
given by $(a,A)\mapsto F\circ (a,A)\circ F^{-1}$, and 
hence an $F$ dependent action of $\field^n\rtimes\group{GL}(\field^n)$
by affine maps on $(S,V,\Phi)$. If $F=(o,f)$ and $F(x)=p$, 
the action reads 
\begin{equation}
\label{eq:AffActionActive}
\bigl((a,A),p\bigr)\mapsto F(Ax+a)=A^f(p-o)+o+f(a)\,.
\end{equation}
This is called the \emph{active interpretation} of the 
affine group $\field^n\rtimes\group{GL}(\field^n)$.      

An affine frame $(o,f)$ with $f=\{e_a\}_{a=1\cdots n}$ defines 
$n+1$ points $\{p_0,p_1,\cdots p_n\}$, where $p_0:=o$
and $p_a:=o+e_a$ for $1\leq a\leq n$. Conversely, any 
$n+1$ points $\{p_0,p_1,\cdots p_n\}$ in affine space, for 
which $e_i:=p_i-p_0$ are linearly independent, define an affine 
frame. Note that this linear independence does not depend on 
the choice of $p_0$ as our base point, as one easily sees from 
the identity
\begin{equation}
\label{eq:AffFrameBasePointInd}
\sum_{a=1}^m v^a(p_a-p_0)=\sum_{k\ne a=0}^m v^a(p_a-p_k)\,,
\quad\mathrm{where}\quad
v^0:=-\sum_{a=1}^m v^a\,,
\end{equation}
which holds for any set $\{p_0,p_1,\cdots,p_m\}$ of $m+1$
points in affine space. To prove it one just needs
(\ref{eq:AffDistributivity}). Hence we say that these points are 
\emph{affinely independent} iff, e.g., the set of $m$ vectors 
$\{e_a:=p_a-p_0\mid 1\leq a\leq m\}$ is linearly independent. 
Therefore, an affine frame of $n$-dimensional affine space is 
equivalent to $n+1$ affinely independent points. Such a set of
points is also called an \emph{affine basis}.

Given an affine basis $\{p_0,p_1,\cdots,p_n\}\subset S$ 
and a point $q\in S$, there is a unique $n$-tuple 
$(v_1,\cdots ,v_n)\in\field^n$ such that 
\begin{subequations}
\label{eq:AffBasisExp}
\begin{equation}
\label{eq:AffBasisExp1}
q=p_0+\sum_{a=1}^n v^a(p_a-p_0)\,.
\end{equation}
Writing $v^k(p_k-p_0)=(p_k-p_0)+(1-v^k)(p_0-p_k)$ for some chosen 
$k\in\{1,\cdots,n\}$ and $v^a(p_a-p_0)=v^a(p_a-p_k)-v^a(p_0-p_k)$ 
for all $a\ne k$, this can be rewritten, using 
(\ref{eq:AffDistributivity}), as 
\begin{equation}
\label{eq:AffBasisExp2}
q=p_k+\sum_{k\ne a=0}^n v^a(p_a-p_k)\,,\quad
\mathrm{where}\quad v^0:=1-\sum_{a=1}^nv^a\,.
\end{equation}
\end{subequations}
This motivates writing the sums on the right hand sides 
of (\ref{eq:AffBasisExp}) in a perfectly symmetric way 
without preference of any point $p_k$:
\begin{equation}
\label{eq:AffBasisConvex Sum}
q=\sum_{a=0}^n v^ap_a,,\quad
\mathrm{where}\quad \sum_{a=0}^nv^a=1\,,
\end{equation}
where the right hand side is defined by any of the 
expressions (\ref{eq:AffBasisExp}). This defines certain 
\emph{linear combinations} of affine points, namely those whose 
coefficients add up to one. Accordingly, the \emph{affine span} 
of points $\{p_1,\cdots,p_m\}$ in affine space is defined by 
\begin{equation}
\label{eq:DefAffSpan}
\Span\{p_1,\cdots,p_n\}:=\left\{\sum_{a=1}^mv^ap_a \mid
v^a\in\field\,,\sum_{a=1}^m v^a=1\right\}\,.
\end{equation}

%\bibliographystyle{plain}
%\bibliography{RELATIVITY,HIST-PHIL-SCI,MATH,QM}

\end{document}